\documentclass[noinfoline]{imsart}\setcounter{tocdepth}{2}

\RequirePackage[OT1]{fontenc}
\RequirePackage{amsthm,amsmath}
\RequirePackage[numbers]{natbib}
\RequirePackage[colorlinks,citecolor=blue,urlcolor=blue]{hyperref}

\usepackage{verbatim}

\allowdisplaybreaks

\usepackage{fullpage}

\newcommand {\R}{\mathbb{R}}

\newcommand{\E}{\mathbb{E}}
\newcommand {\covR}{\mathscr{R}}
\newcommand {\Lo}{\mathscr{L}}
\newcommand {\B}{\mathscr{B}}

\usepackage{subfigure}
\usepackage{layout}

\usepackage{dsfont}
\usepackage[mathscr]{eucal}
\usepackage[toc,page]{appendix}
\usepackage{mathrsfs}
\usepackage{color}
\usepackage{pifont}
\usepackage{bm}
\usepackage{latexsym}
\usepackage{amsfonts}
\usepackage{amssymb}
\usepackage{epsfig}
\usepackage{graphicx}
\usepackage{multirow}

\newtheorem{theorem}{Theorem}
\newtheorem{proposition}{Proposition}
\newtheorem{corollary}{Corollary}
\newtheorem{lemma}{Lemma}
\newtheorem{definition}{Definition}
\newtheorem{remark}{Remark}

\newcommand{\transpose}{^{\top}}
\startlocaldefs
\numberwithin{equation}{section}
\theoremstyle{plain}
\endlocaldefs

\begin{document}

\begin{frontmatter}

\title{{\large Functional Data Analysis by Matrix Completion}$^*$}

\runtitle{Functional Data Analysis by Matrix Completion}

\begin{aug}
\author{\fnms{Marie-H\'el\`ene} \snm{Descary}\ead[label=e1]{marie-helene.descary@epfl.ch}} \and
\author{\fnms{Victor M.} \snm{Panaretos}\ead[label=e2]{victor.panaretos@epfl.ch}}

\thankstext{t1}{Research supported by a Swiss National Science Foundation grant.}

\runauthor{M.-H. Descary \& V.M. Panaretos}

\affiliation{Ecole Polytechnique F\'ed\'erale de Lausanne}

\address{Institut de Math\'ematiques\\
Ecole Polytechnique F\'ed\'erale de Lausanne\\
\printead{e1}, \printead*{e2}}

\end{aug}

\begin{abstract} Functional data analyses typically proceed by smoothing, followed by functional PCA. This paradigm implicitly assumes that rough variation is due to nuisance noise. Nevertheless, relevant functional features such as time-localised or short scale fluctuations may indeed be rough relative to the global scale, but still smooth at shorter scales. These may be confounded with the global smooth components of variation by the smoothing and PCA, potentially distorting the parsimony and interpretability of the analysis. The goal of this paper is to investigate how both smooth and rough variations can be recovered on the basis of discretely observed functional data. Assuming that a functional datum arises as the sum of two uncorrelated components, one smooth and one rough, we develop identifiability conditions for the recovery of the two corresponding covariance operators. The key insight is that they should possess complementary forms of parsimony: one smooth and finite rank (large scale), and the other banded and potentially infinite rank (small scale). Our conditions elucidate the precise interplay between rank, bandwidth, and grid resolution. Under these conditions, we show that the recovery problem is equivalent to rank-constrained matrix completion, and exploit this to construct estimators of the two covariances, without assuming knowledge of the true bandwidth or rank; we study their asymptotic behaviour, and then use them to recover the smooth and rough components of each functional datum by best linear prediction. As a result, we effectively produce separate functional PCAs for smooth and rough variation. 
\end{abstract}

\begin{keyword}[class=AMS]
\kwd[Primary ]{62M, 15A99}
\kwd[; secondary ]{62M15, 60G17}
\end{keyword}

\begin{keyword}
\kwd{Analyticity}
\kwd{banding}
\kwd{covariance operator}
\kwd{functional PCA}
\kwd{low rank}
\kwd{resolution}
\kwd{scale}
\kwd{smoothing}
\end{keyword}

\end{frontmatter}

\tableofcontents

\newpage
\section{Introduction} \label{intro}

Functional principal component analysis, the empirical version of the celebrated Karhunen--Lo\`eve expansion, is arguably the workhorse of Functional Data Analysis (Bosq \cite{bosq_book}, Ramsay and Silverman \cite{ramsay-silver}, Horvath and Kokoszka \cite{horvath-book}, Hsing and Eubank \cite{hsing-book}, Wang et al. \cite{wang-review}). It aims to construct a parsimonious yet accurate finite dimensional representation of $n$ observable i.i.d. replicates $\{X_1,\ldots,X_n\}$ of a real-valued random function $\{X(t):t\in[0,1]\}$ under study. The sought representation is in terms of a Fourier series built using the eigenfunctions $\{\varphi_k\}$ of  the integral operator $\mathscr{R}$ with kernel $\mathrm{Cov}(X(t),X(s))$. Such a finite-dimensional representation is key in functional data analysis: not only does it serve as a basis for motivating methodology by analogy to multivariate statistics, but it constitutes the canonical means of regularization in regression, testing, and prediction, which are all ill-posed inverse problems when dealing with functional data; see Panaretos and Tavakoli \cite{panaretos-tavakoli-cramer} for an account of the genesis and evolution of functional PCA and Wang et al. \cite{wang-review} for an overview of its manifold applications in functional data analysis. 

Since the covariance operator $\mathscr{R}$ is unknown in practice, functional PCA must be based on its empirical counterpart (Dauxois et al. \cite{dauxois}; Bosq \cite{bosq_book}), 
$$\hat{\mathscr{R}}_n=\sum_{i=1}^{n}(X_i-\overline{X})\otimes (X_i-\overline{X}),\qquad \mbox{where }\,\overline{X}=\frac{1}{n}\sum_{i=1}^{n}X_i.$$
Even this, however, is seldom accessible: one cannot perfectly observe the complete sample paths of $\{X_1,\dots,X_n\}$. Instead, one has to make do with discrete measurements
\begin{equation}\label{eq:measurement}
X_{ij}=X_i(t_j)+\varepsilon_{ij},\qquad i=1,\ldots,n,\, j=1,\ldots,K,
\end{equation} 
where the points $t_j$ can be random or deterministic and the array $\varepsilon_{ij}$ is assumed to be comprised of centred i.i.d. perturbations, independent of the $X_i$ (see, e.g. Ramsay and Silverman \cite{ramsay-silver}, Hall et al. \cite{hall2006properties}, Li and Hsing \cite{li-hsing}). Roughly speaking, there are two major approaches to deal with discrete measurements: to smooth the discretely observed curves and then obtain the covariance operator and spectrum of the smooth curves; and the converse, that is, to first obtain a smoothed estimate of the covariance operator and to use this to estimate the unobservable curves and their spectrum.

The first general approach was popularised by Ramsay and Silverman \cite{ramsay-silver}, by means of smoothing splines, and is widely used, chiefly when the observation grid $\{t_1,\ldots,t_K\}$ is sufficiently dense.  One defines smoothed curves $\widetilde{X}_i$ as
\begin{equation}\label{eq:spline_homoskedastic}
\widetilde{X}_i(t)=\arg\min_{f\in C^2[0,1]}\left\{\sum_{j=1}^{K}\left(f(t_j)-X_{ij}\right)^2+\tau \|\partial^2_tf\|^2_{L^2}\right\},\qquad i=1,\ldots,n,
\end{equation}
for $C^2[0,1]$ the space of twice continuously differentiable functions on $[0,1]$, and $\tau>0$ a regularising constant.   
  The proxy curves $\{\widetilde{X}_i\}$ are used in lieu of the unobservable $\{X_i\}$ in order to construct a ``smooth" empirical covariance operator $\widetilde{\mathscr{R}}$, and the curves $\{\widetilde{X}_i\}$ are finally projected onto the span of the first $r$ eigenfunctions of $\widetilde{\mathscr{R}}$. 
   
A second  general approach, Principal Analysis by Conditional Expectation (PACE), was introduced by Yao et al. \cite{PACE} (see also Yao et al. \cite{PACE2}), motivated by the need to consider situations where the grid is sparse and curves are sampled at varying grid points. In our sampling setup, and assuming the array $\{\varepsilon_{ij}\}$ to be i.i.d. of variance $\sigma^2$, they exploit the fact that the $K\times K$ covariance matrix of the vector $(X_{i1},\dots,X_{iK})\transpose$ equals (up to a factor) $\rho(t_i,t_j)+\sigma^2\mathbf{1}\{i=j\}$. Thus, the effect of the term $\varepsilon$ is restricted to the addition of a $\sigma^2$-ridge to the diagonal. Yao et al. \cite{PACE} then delete the diagonal $i=j$ of the empirical covariance matrix of $\{X_{ij};i=1,\dots,n;\, j=1,\dots,K\}$ and smooth what remains to obtain a smooth estimate $\widetilde{\rho}(s,t)$ of the kernel $\rho(s,t)$. The smoothing assumes (and induces) $C^2$-level behaviour near $t=s$. 
The kernel $\widetilde{\rho}(s,t)$ is then used to construct mean-square optimal predictors $\{\widetilde{X}_1,\ldots,\widetilde{X}_n\}$ of the unobservable sample paths, truncated to belong to the span of the first $r$ eigenfunctions of $\tilde{\rho}(s,t)$.

Proceeding in either of these two ways essentially consigns any variations of smoothness class less than $C^2$ to pure noise, and subsequently smears them by means of smoothing; any further rough variations are expected to be negligible, and due to small fluctuations around eigenfunctions of order at least $r+1$ (thus orthogonal to the smooth variations) and are also discarded post-PCA. 

Mathematically speaking, ``smooth-then-PCA" approaches correspond to an underlying ansatz that $X(t)$ is well approximated by the sum of two uncorrelated components: a ``true signal" $Y(t)$ of (essentially) finite rank $r$ and of smoothness class $C^k$ ($k\geq 2$) and a noise component $W(t)$ whose covariance kernel is a scaled delta function $\sigma^2\delta(s-t)$, corresponding to white noise:
\begin{eqnarray}
&&X_i(t)=Y_i(t)+W_i(t),\quad  i=1,\ldots,n,\\
&&X_{ij}=Y_i(t_j)+W_i(t_j)=Y_i(t_j)+\varepsilon_{ij}, \quad i=1,\ldots,n\,;j=1,\ldots,K.
\end{eqnarray}  
The first equation can formally be understood only in the weak sense as an SDE, and in reality $W$ would have a covariance supported on some band $\{|t-s|<\delta\}$ for some infinitesimally small $\delta>0$. The construction of the rank $r$ version (by PCA) of the smoothed curves $\{\widetilde{X}_i(t)\}$ can thus be seen as an the estimation of the unobservable $\{Y_i(t)\}$. Any residual variation is then indirectly attributed to $W_i$, seen as functional residuals, and subsequently ignored.

It may very well happen, though, that $W$ be rough but still be mean-square continuous, possessing a covariance kernel $b(s,t)=b(s,t)\mathbf{1}\{|t-s|<\delta\}$, for ${b}$ a continuous nonconstant function and $\delta>0$ nonnegligible: ``\emph{the functional variation that we choose to ignore is itself probably smooth at a finer scale of resolution}" (Ramsay and Silverman \cite[Section 3.2.4]{ramsay-silver}). In this case, the rough variations are not due to pure noise, but to actual signal, and contain second-order structure that we may not wish to confound with that of $Y$ or discard. Quite to the contrary, it should be fair game for functional data analysis to aim to deal with variations at smaller scales $\delta$; to quote Ramsay and Silverman \cite[Section 3.2.4]{ramsay-silver} again: ``\emph{this can pay off in terms of better estimation, and this type of structure may be in itself interesting; a thoughtful application of functional data analysis will always be open to these possibilities}". 
To accommodate a nontrivial kernel $b(s,t)$, the smoothing spline approach would need to replace the ``uncorrelated" objective function in equation (\ref{eq:spline_homoskedastic}), with the ``correlated" version
\begin{equation}\label{eq:spline_heteroskedastic}
\widetilde{X}_i(t)=\underset{f\in C^2[0,1]}{\arg\min}\Bigg\{(\mathbf{X}_{i}-\mathbf{f}){B}^{-1}(\mathbf{X}_{i}-\mathbf{f})\transpose +\tau \|\partial^2_tf\|^2_{L^2}\Bigg\},
\end{equation}
for ${B}$ the covariance matrix of $(W_i(t_1),\dots,W_i(t_K))\transpose$, $\mathbf{X}_{i} = (X_{i1},\ldots,X_{iK})\transpose$ and $\mathbf{f} = (f(t_1),\ldots,f(t_K))\transpose$. Unfortunately, ${B}$ is unknown, and worse still, ${B}$ and $X_i(t)$ are not jointly identifiable without further (parametric) restrictions (see Opsomer et al. \cite{opsomer}). Similarly, the PACE approach would need to remove a nontrivial band around the diagonal of the empirical covariance operator prior to smoothing; this would lead to unidentifiability and subsequent inconsistency without further assumptions. It would seem that the two approaches cannot be remedied by means of a simple modification, and a novel approach would be needed.

 The aim of the paper is to put forward such a novel approach and to fill this gap. Without assuming knowledge of the rank $r$ or the scale $\delta$, we set out to:
 \begin{enumerate}
\item Determine nonparametric conditions under which the smooth and rough variation are jointly identifiable on the basis of discrete data, and elucidate how the effective rank $r$ of the smooth component, the scale $\delta$ of the rough component, and the grid resolution $K$ affect identifiability.

\item Construct consistent estimators of the covariance structure of $Y$ and $W$, and of their \emph{separate} functional PCA decompositions (equivalently, separating the component in $X$ attributable to $Y$ from that attributable to $W$) on the basis of $n$ curves sampled discretely at a grid of resolution $K$.
\end{enumerate}
We formulate the problem rigorously in Section \ref{sec:problem}. Though it might seem that a smooth-plus-rough decomposition is neither unique nor identifiable (except under parametric conditions), we demonstrate in Section \ref{uniqueness_ident} that under nonparametric conditions on the covariances of $Y$ and $W$, such a decomposition is indeed unique (Section \ref{sec:infinite_unique}, Theorem \ref{infinite_unique_dec}) and moreover identifiable on the basis of discrete measurements (Section \ref{sec:discerete_ident}, Theorem \ref{discrete_ident}). These elucidate the interplay of rank, scale and grid resolution. Estimators of the covariances of $Y$ and $W$ (without assuming knowledge of the  rank $r$ and scale $\delta$) are then constructed in Section \ref{sec:estimation} by means of band deletion and low rank matrix completion using nonlinear least squares (combining smoothing and dimension reduction into a single step). Their asymptotic behaviour is studied in Section \ref{sec:asymptotics}. These estimates are then used in Section \ref{sec:recovery} to recover the separate functional PCAs of the $Y_i$ and the $W_i$, producing a separation of the two scales of variation.  The finite sample performance of the methodology is investigated by means of a simulation study in Section \ref{sec:simulations}. Section \ref{proofs_section} collects all the proofs of our formal results. Finally, the Appendix (Section \ref{supp}) contains additional discussion, examples, theoretical results, simulations, as well as a data analysis to illustrate the methodology. {Sample R and Matlab Code for the implementation of our methodology can be found at \url{http://smat.epfl.ch/code/FDA_MatrixCompletion.zip}}.

\section{Problem Statement} \label{sec:problem}
Let $X:[0,1]\rightarrow \mathbb{R}$ be a mean-zero mean square continuous random function, viewed as a random element of the space of integrable real functions defined on $[0,1]$, say $L^2([0,1])$, with the usual inner product and induced norm:
$$ \langle f,g\rangle_{L^2}=\int_0^1f(t)g(t)dt\qquad \&\qquad \|f\|^2_{L^2}=\langle f,f\rangle_{L^2}.$$ 
Assume that $X$ can be decomposed as
\begin{equation}\label{model}
X(t)=Y(t)+W(t),\qquad t\in[0,1],
\end{equation}
where $Y$ and $W$ are \emph{uncorrelated} random functions corresponding to a ``smooth" and a ``rough" component, respectively. This implies an additive decomposition of $X$'s covariance operator $\mathscr{R}$, and of its integral kernel $\rho(s,t)=\mathbb{E}[X(s)X(t)]$, as
\begin{eqnarray}
\mathscr{R}\,=\,\mathscr{L}&+&\mathscr{B},\\
\rho(s,t)\,=\,\ell(s,t)&+&b(s,t),\qquad s,t\in[0,1],
\end{eqnarray}
respectively, where the terms on the right are the covariance operators, and kernels, of $Y$ and $W$, respectively:
\begin{eqnarray}
\ell(s,t)&=&\mathbb{E}[Y(s)Y(t)]-\mathbb{E}[Y(s)]\mathbb{E}[Y(t)],\\
b(s,t)&=&\mathbb{E}[W(s)W(t)]-\mathbb{E}[W(s)]\mathbb{E}[W(t)].
\end{eqnarray}
We will understand the smoothness in $Y$ to represent smooth variation of $X$, that is, large scale variation occurring over the entire $[0,1]$. On the other hand, the roughness of $W$ corresponds to variations that occur at \emph{scales distinctly smaller} than the global scale $[0,1]$, but not necessarily the instantaneous time scale that characterizes white noise: variation that is smooth only at \emph{shorter time scales}. 

Heuristically, if $\mathscr{B}$ is to capture variation at short time scales only, say at scales of order $\delta\in(0,1)$, we expect its kernel to vanish outside a band of size $\delta$,
$$b(s,t)=0,\qquad \forall\,|s-t|\ge\delta.$$ 
Of course, it will still admit a Mercer decomposition
$$b(s,t)=\sum_{j=1}^{\infty}\beta_j \psi_j(s)\psi_j(t)=\mathbf{1}\{|t-s|<\delta\}\sum_{j=1}^{\infty}\beta_j \psi_j(s)\psi_j(t),$$
for an orthonormal system of eigenfunctions $\{\psi_j\}$. 
On the other hand, since $\mathscr{L}$ captures global and smooth variation features, it cannot be allowed to have localised eigenfunctions: these should be smooth enough to be \emph{essentially global}. At the same time, they should be finitely many, otherwise they may still succeed in spanning local variations.\footnote{since there exist infinitely smooth orthonormal systems that are complete in $L^2[0,1]$. To be more precise, what one needs is an exponential rate of decay of the eigenvalues $\{\lambda_j\}$, rather than a precisely finite rank, but we will see in Section \ref{uniqueness_ident} that a fast rate of decay alone would not suffice for identifiability to hold.}. We thus postulate that
$$\ell(s,t)=\sum_{j=1}^{r}\lambda_j \eta_j(s)\eta_j(t),$$
for $r<\infty$ and for $\{\eta_j\}_{j=1}^r$ sufficiently smooth orthonormal functions in $L^2[0,1]$. We will refer to the operator $\mathscr{L}$ as the \emph{smooth operator}, and to $\mathscr{B}$ as the \emph{banded operator}

\noindent In summary, our setup is
$$\rho(s,t)=\sum_{j=1}^{r}\lambda_j \eta_j(s)\eta_j(t)+\sum_{j=1}^{\infty}\beta_j \psi_j(s)\psi_j(t),$$
where: (1) $0<\delta<1$; (2) $r<\infty$; (3) the $\{\eta_j\}$ are sufficiently smooth. The statistical problem then is: given $K$ discrete measurements on each of $n$ independent copies of $X$, 
$$X_{ij}=X_i(t_j)=Y_i(t_j)+W_i(t_j),\quad i=1,\ldots,n,$$
obtained by point evaluation at some grid points $\{t_1,\dots,t_K\}$:
\begin{enumerate}
\item estimate the components $\mathscr{L}$ and $\mathscr{B}$, and their spectral decomposition, and

\item construct separate functional PCAs for the smooth and rough components $\{Y_i\}_{i=1}^{n}$ and $\{W_i\}_{i=1}^n$ on the basis of these estimates (effectively separating the two scales of variation and recovering the $Y_i$ and $W_i$).
\end{enumerate}
To do so, we will need to formulate more precise conditions on the smoothness and roughness of the two components, or equivalently the rank and scale of these variations, as it is clear that the problem can otherwise be severely ill-posed (in a sense, the problem can be seen as an infinite-dimensional version of \emph{density estimation with contamination by measurement error of an unknown distribution}, also known as \emph{double-blind deconvolution}). This is done next, in Section \ref{uniqueness_ident}.

\section{Well-Posedness: Uniqueness and Identifiability}\label{uniqueness_ident}

\subsection{Uniqueness of the Decomposition $\mathscr{R}=\mathscr{L}+\mathscr{B}$}\label{sec:infinite_unique}

An obvious challenge with a decomposition of the form $\mathscr{R}=\mathscr{L}+\mathscr{B}$, is that there may be infinitely many distinct pairs $(\mathscr{L},\mathscr{B})$ whose sum yields the same $\mathscr{R}$: we are asking to identify two summands from knowledge of their sum. As it turns out, uniqueness is a matter of scale: assuming that variations of the $W$ process propagate only locally, at most at scale $\delta$, whereas that variations of $Y$ are purely nonlocal. The next theorem makes this statement precise via the notion of \emph{analyticity}.  

\begin{theorem}[Uniqueness]\label{infinite_unique_dec}
Let $\Lo_1,\Lo_2:L^2[0,1]\rightarrow L^2[0,1]$ be trace-class covariance operators of rank $r_1<\infty$ and $r_2<\infty$, respectively. Let $\B_1,\B_2:L^2[0,1]\rightarrow L^2[0,1]$ be banded trace-class covariance operators of bandwidth $\delta_1<1$ and $\delta_2<1$, respectively. If the eigenfunctions of $\Lo_1$ and $\Lo_2$ are real analytic, then we have the equivalence
 $$\Lo_1 + \B_1	 = \Lo_2+ \B_2 \iff \Lo_1 = \Lo_2 \quad \&\quad \B_1 = \B_2.$$
\end{theorem}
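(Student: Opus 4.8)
The plan is to play off the two incompatible structural descriptions that the hypothesis forces on a single kernel. The reverse implication being immediate, I would assume $\Lo_1+\B_1=\Lo_2+\B_2$ and set $\mathscr{D}:=\Lo_1-\Lo_2=\B_2-\B_1$, an integral operator with kernel $d(s,t)$. Read from its ``$\Lo$-side'', $d$ inherits analyticity; read from its ``$\B$-side'', $d$ inherits bandedness; the aim is to show that analyticity together with bandedness force $d\equiv 0$, which is exactly the assertion $\Lo_1=\Lo_2$ and $\B_1=\B_2$.

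First I would record the two structural facts. Since $\Lo_i$ has finite rank $r_i$ with real analytic eigenfunctions $\{\eta^{(i)}_j\}$, its Mercer kernel $\ell_i(s,t)=\sum_{j=1}^{r_i}\lambda^{(i)}_j\eta^{(i)}_j(s)\eta^{(i)}_j(t)$ is a finite linear combination of products of real analytic functions, hence jointly real analytic on $[0,1]^2$; therefore so is $d=\ell_1-\ell_2$, and in particular for each fixed $s$ the section $t\mapsto d(s,t)$ is real analytic on $[0,1]$ (a finite linear combination of the analytic functions $\eta^{(i)}_j$), and symmetrically in $s$. On the other hand, $d=b_2-b_1$ with $b_i$ supported in the band $\{|s-t|<\delta_i\}$; since the $\ell_i$, hence $d$, are continuous, this forces $d(s,t)=0$ for every $(s,t)$ with $|s-t|\ge\delta$, where $\delta:=\max\{\delta_1,\delta_2\}<1$.

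The heart of the argument is then a two-stage analytic continuation. Stage one: fix any $s\in[0,1-\delta)$, which is a nonempty interval since $\delta<1$; the section $t\mapsto d(s,t)$ is real analytic on $[0,1]$ and vanishes on $[s+\delta,1]$, an interval of positive length, so by the identity theorem for real analytic functions on a connected domain it vanishes identically on $[0,1]$. Thus $d(s,t)=0$ for all $s\in[0,1-\delta)$ and all $t\in[0,1]$. Stage two: fix an arbitrary $t\in[0,1]$; now $s\mapsto d(s,t)$ is real analytic on $[0,1]$ and vanishes on $[0,1-\delta)$, again an interval of positive length, so it vanishes identically. Hence $d\equiv 0$ on $[0,1]^2$, and since an integral operator is determined by its kernel, $\mathscr{D}=0$, completing the proof.

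I expect the only genuine subtleties to sit in the last step. One must pin down the meaning of ``real analytic on $[0,1]$'' (e.g.\ admitting an analytic extension to an open interval containing $[0,1]$, so that $[0,1]$ lies inside a connected open set of analyticity) in order to invoke the identity theorem up to the endpoints. More importantly, a single pass does not suffice: when $\delta\ge\tfrac12$ there are values of $s$ for which the forbidden region $\{t\in[0,1]:|t-s|\ge\delta\}$ is empty, so the bandedness carries no information there and must be propagated inward by the continuation in the $s$ variable --- this is precisely why the two-stage structure is needed and why $\delta<1$ (rather than $\delta<\tfrac12$) still suffices. A minor point is the regularity bookkeeping that lets one read ``banded'' pointwise rather than merely almost everywhere, which is free here because the analytic side is continuous; note also that positive-semidefiniteness of the operators is never used, and that finiteness of the rank enters only to guarantee analyticity of $d$ from a finite Mercer sum.
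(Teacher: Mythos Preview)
Your proof is correct and rests on the same mechanism as the paper's: the difference kernel $d=\ell_1-\ell_2$ is simultaneously real analytic and supported in a band of width $\delta<1$, and the identity theorem forces it to vanish. The paper invokes the bivariate identity theorem directly (an analytic function on $[0,1]^2$ whose zero set has positive measure is identically zero), after locating an open rectangle in the off-band region on which $b_1=b_2=0$. You instead slice: first continue $t\mapsto d(s,t)$ to zero for each $s\in[0,1-\delta)$, then continue $s\mapsto d(s,t)$ to zero for every $t$. This two-pass, one-variable route is slightly more elementary---it avoids any appeal to the structure of zero sets of multivariate real analytic functions---and, as you correctly observe, it makes transparent why $\delta<1$ (rather than $\delta<\tfrac12$) is enough: values of $s$ near the middle, for which the off-band set in $t$ may be empty, are handled in the second pass. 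Your remarks on pointwise versus almost-everywhere bandedness and on the role of finite rank are also accurate and match the paper's implicit use.
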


\begin{remark}[Sufficiency vs Necessity] \label{rem_analytic}
The conditions of the theorem can actually be strictly weakened, with the same conclusion: instead of requiring finite ranks and analytic eigenfunctions for $(\Lo_1,\Lo_2)$, it suffices to require the weaker condition that their kernels be analytic on an open set $U\subset [0,1]^2$ that contains the larger of the two bands, $U\supset\{(s,t)\in[0,1]^2: |t-s|\leq \max(\delta_1,\delta_2)\}$. This can be relaxed no further, though: if the kernels of $(\Lo_1,\Lo_2)$ are not analytic on such a $U$, one can construct counterexamples, at least at this level of generality.  For such counterexamples, see the Appendix \ref{counterexample_section}.
 Thus analyticity is necessary, unless further assumptions are imposed on the banded covariances. We choose to put the spotlight on the stronger assumption of the finite rank analytic eigenfunction case, because: (a) this is the one that will be practically relevant in light of the identifiability conditions that will be established in Section \ref{sec:discerete_ident} (Theorem \ref{discrete_ident}), and (b) the set of rank $r$ covariance operators with analytic eigenfunctions is a dense subset of the set of all rank $r$ covariance operators (see Proposition \ref{analytic_approximation} below), giving us a rich set of identifiable models of the form \ref{model}.
\end{remark}

Recall that a function is real analytic on an open interval if and only if its Fourier coefficients decay at a rate that is at least geometric (see Krantz and Parks \cite{Krantz} for a detailed survey of real analytic functions). For instance, if we write $\eta(x)=\sum_{k=1}^{\infty}(\alpha_k \cos (kx)+b_k\sin(kx))$, then $\eta$ is real analytic on $(-\pi,\pi)$ if an only if
$$\underset{k\rightarrow\infty}{\lim\sup}\left(|\alpha_k|+|\beta_k|\right)^{1/k}<1.$$

Examples of analytic functions include polynomials, trigonometric functions, exponential and logarithmic functions, rational functions with no poles, truncated Gaussians and finite location/scale mixtures thereof, to name only a few; such functions have been routinely used as typical examples of low order eigenfunctions capturing smooth variation in functional data analysis. The class of real analytic functions is also closed under finite linear combination, multiplication and division (assuming a nonvanishing denominator), composition, differentiation and integration. Thus, one can generate rich collections of analytic eigenfunctions (and hence analytic covariance operators) by combining analytic functions.  
In fact, the set of rank $r$ covariance operators with analytic eigenfunctions is a dense subset of the set of all rank $r$ covariance operators:

\begin{proposition}\label{analytic_approximation}
Let $Z$ be an $L^2[0,1]$-valued random function with a trace class covariance $\mathscr{G}$ of rank $r<\infty$. Then, for any $\epsilon>0$ there exists a random function $Y$ whose covariance $\mathscr{L}$ has analytic eigenfunctions and rank $q\leq r$, such that
$$\mathbb{E}\|Z-Y\|^2_{L^2}<\epsilon,\quad\&\quad \|\mathscr{G}-\mathscr{L}\|_*<\epsilon,$$
for $\|\cdot\|_*$ the nuclear norm. If additionally $\mathscr{G}$ has $C^1$ eigenfunctions on $[0,1]$, then we have the stronger result that for any $\epsilon>0$, there exists a random function $Y$ whose covariance $\mathscr{L}$ has analytic eigenfunctions and rank $q\leq r$, such that
$$\sup_{t\in[0,1]}\mathbb{E}|Z(t)-{Y}(t)|^2<\epsilon, \quad\&\quad \sup_{s,t\in[0,1]}\left|g(s,t)-\ell(s,t)\right|<\epsilon,$$
where $g$ and $\ell$ are the kernels of $\mathscr{G}$ and $\mathscr{L}$, respectively.
\end{proposition}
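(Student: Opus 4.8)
The idea is to work with the Karhunen–Loève (Mercer) expansion $\mathscr{G}=\sum_{j=1}^{r}\gamma_j\,e_j\otimes e_j$ and approximate each eigenfunction $e_j\in L^2[0,1]$ by an analytic function, while being careful that the approximating functions remain orthonormal so that the perturbed kernel is again a bona fide covariance of rank $\le r$. Concretely, I would first approximate each $e_j$ in $L^2[0,1]$ by a trigonometric polynomial (or a polynomial): for any $\varepsilon_0>0$ pick $f_j$ analytic on $[0,1]$ with $\|e_j-f_j\|_{L^2}<\varepsilon_0$. The span of $\{f_1,\dots,f_r\}$ is then close, in the sense of principal angles, to $\mathrm{span}\{e_1,\dots,e_r\}$; applying Gram–Schmidt to $\{f_j\}$ produces an orthonormal system $\{\tilde e_j\}_{j=1}^{q}$ with $q\le r$, and since Gram–Schmidt is a (locally Lipschitz) rational operation in the $f_j$, each $\tilde e_j$ is still analytic on $[0,1]$ and $\|e_j-\tilde e_j\|_{L^2}\to 0$ as $\varepsilon_0\to 0$ (the Gram matrix of $\{f_j\}$ converges to the identity, so it stays invertible and the inverse is continuous in $\varepsilon_0$). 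Define $Y=\sum_{j=1}^{q}\sqrt{\gamma_j}\,\xi_j\,\tilde e_j$ where $\xi_j$ are the standardized scores of $Z$ (so $\mathscr{L}:=\mathrm{Cov}(Y)=\sum_{j=1}^{q}\gamma_j\,\tilde e_j\otimes\tilde e_j$ has analytic eigenfunctions and rank $q\le r$).

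For the bounds, $\mathbb{E}\|Z-Y\|_{L^2}^2=\sum_j\gamma_j\|e_j-\tilde e_j\|_{L^2}^2$ (using orthogonality of the scores), which is $<\epsilon$ once $\varepsilon_0$ is small, since $\sum_j\gamma_j<\infty$. For the nuclear-norm bound, write $\mathscr{G}-\mathscr{L}=\sum_{j}\gamma_j\bigl(e_j\otimes e_j-\tilde e_j\otimes\tilde e_j\bigr)$ and use $\|e_j\otimes e_j-\tilde e_j\otimes\tilde e_j\|_*\le \|(e_j-\tilde e_j)\otimes e_j\|_*+\|\tilde e_j\otimes(e_j-\tilde e_j)\|_*=\|e_j-\tilde e_j\|_{L^2}(\|e_j\|_{L^2}+\|\tilde e_j\|_{L^2})\le 2\|e_j-\tilde e_j\|_{L^2}$; summing against $\gamma_j$ and using $\sum_j\gamma_j<\infty$ gives $\|\mathscr{G}-\mathscr{L}\|_*<\epsilon$.

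For the stronger $C^1$ statement, the only change is in the mode of approximation: if the $e_j$ are $C^1$ on $[0,1]$, approximate them together with their derivatives — e.g. approximate $e_j'$ uniformly by a polynomial $p_j$ (Weierstrass) and set $f_j(t)=e_j(0)+\int_0^t p_j$, so that $f_j$ is a polynomial with $\|e_j-f_j\|_{C^1[0,1]}<\varepsilon_0$, hence $\sup_t|e_j(t)-f_j(t)|<\varepsilon_0$. Gram–Schmidt again yields analytic $\tilde e_j$, and because $\sup$-norm convergence of the $f_j$ forces $\sup$-norm convergence of the $\tilde e_j$ (the Gram–Schmidt coefficients converge and finite sums/products of uniformly convergent bounded sequences converge uniformly), we get $\sup_t|e_j(t)-\tilde e_j(t)|\to 0$. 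Then $\sup_t\mathbb{E}|Z(t)-Y(t)|^2=\sup_t\sum_j\gamma_j|e_j(t)-\tilde e_j(t)|^2\le \bigl(\sum_j\gamma_j\bigr)\max_j\sup_t|e_j(t)-\tilde e_j(t)|^2<\epsilon$, and similarly $\sup_{s,t}|g(s,t)-\ell(s,t)|\le\sum_j\gamma_j\sup_{s,t}|e_j(s)e_j(t)-\tilde e_j(s)\tilde e_j(t)|$, each summand bounded via the same telescoping trick by $2\,\|e_j\|_\infty\sup_t|e_j(t)-\tilde e_j(t)|$ plus a lower-order term, all $\to 0$.

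The main obstacle — and the step deserving the most care — is maintaining \emph{exact orthonormality} of the approximants while simultaneously controlling them in the \emph{right} norm ($L^2$ in the first part, sup-norm in the second) and preserving analyticity: one cannot simply approximate eigenfunctions independently, since the resulting kernel would no longer be a nonnegative operator of rank $\le r$ with a clean spectral form. The Gram–Schmidt route fixes this, but one must check that (i) the Gram matrix of $\{f_j\}$ is invertible for $\varepsilon_0$ small (it converges to $I_r$, or to a rank-$q$ projection if some $e_j$ became linearly dependent, which is why the rank may drop to $q\le r$), (ii) analyticity survives the rational operations of Gram–Schmidt (it does, being closed under sums, products, and division by a nonvanishing analytic function — here the normalizing factors $\|\cdot\|_{L^2}$ are constants, not functions, so no vanishing issue arises), and (iii) the chosen approximation norm propagates through Gram–Schmidt to the final $\tilde e_j$. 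None of these is deep, but they are the crux; everything else is the routine telescoping estimates above.
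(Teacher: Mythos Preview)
Your proof is correct, but it takes a more laborious route than the paper's. The difference lies precisely in what you flag as ``the main obstacle'': you work hard to preserve exact orthonormality via Gram--Schmidt, whereas the paper sidesteps this entirely. The paper simply approximates each eigenfunction $\phi_n$ by a truncated Fourier series $f_n$ (no orthogonalisation), sets $Y=\mu+\sum_{n=1}^r\xi_n f_n$ with the \emph{original} scores $\xi_n=\langle Z-\mu,\phi_n\rangle$, and defines $\mathscr{L}=\sum_{n=1}^r\theta_n\,f_n\otimes f_n$. This $\mathscr{L}$ is automatically nonnegative and of rank $\le r$ (being a sum of nonnegative rank-one operators with range in the $r$-dimensional span of the $f_n$); its eigenfunctions---whatever they are---lie in $\mathrm{span}\{f_1,\dots,f_r\}$, and any finite linear combination of analytic functions is analytic. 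So the eigenfunctions of $\mathscr{L}$ are analytic without ever being identified explicitly. Your concern that ``the resulting kernel would no longer be a nonnegative operator of rank $\le r$ with a clean spectral form'' is thus misplaced: a clean spectral form is not needed, only the observation that the range is a subspace of analytic functions. Your Gram--Schmidt argument does work (the continuity claims you list are all valid), and it buys you an explicit spectral decomposition of $\mathscr{L}$; but the paper's approach is shorter and avoids the careful bookkeeping of how Gram--Schmidt interacts with the two norms.
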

Note that an immediate conclusion is that, for a given $r$, the accuracy of a rank $r$ analytic approximation of a mean-square continuous process can be made arbitrarily close to the accuracy of the (optimal) rank $r$ Karhunen--Lo\`eve approximation, in the same uniform mean square sense. Thus, if we expect a process to be approximately of low rank $r$ (as in our model of Section \ref{sec:problem}), then this process can be very well approximated by an analytic process of the same low rank $r$. This shows that the condition of analyticity, at least as a model that guarantees uniqueness of decomposition $\mathscr{R}=\mathscr{L}+\mathscr{B}$, is not nearly as restrictive as it may seem at first sight (and in any case, it is sharp given the discussion in Remark \ref{rem_analytic}).

\subsection{Identifiability at Finite Resolution}\label{sec:discerete_ident}
 Theorem \ref{infinite_unique_dec} relies on an analyticity assumption, which is a fundamentally functional assumption, so it is not clear whether the result is useful in practice: is the decomposition identifiable on the basis of finitely many discrete measurements? Remarkably the answer is yes, and crucially depends both on the finite rank and the analyticity assumption.

Suppose we are given $K$ discrete measurements on each of $n$ independent copies of $X$, 
$$X_{ij}=X_i(t_j)=Y_i(t_j)+W_i(t_j),\quad i=1,\ldots,n,$$
obtained by evaluation at points $\{t_j\}_{j=1}^{K}$, where 
$$(t_1,\ldots,t_K)\in\mathcal{T}_K=\left\{ (x_1,\ldots,x_K)\in \mathbb{R}^K: x_1\in I_{1,K},\dots,x_K\in I_{K,K}\right\},$$
 and $\{I_{j,K}\}_{j=1}^{K}$ is the partition of $[0,1]$ into intervals of length $1/K$. With this information, we can of course only hope to be able to uniquely identify the $K$-resolution versions of the operators, $(\mathscr{L},\mathscr{B})$, say $(\mathscr{L}^K,\mathscr{B}^K)$ on the basis of the $K$-resolution version of their sum, say $\mathscr{R}^K=\mathscr{L}^K+\mathscr{B}^K$. These operators are defined to have kernels:
\begin{eqnarray}\label{eq:matrices}
{\rho}^K(x,y)&=&\sum_{i,j=1}^K{\rho}(t_i,t_j)\bm{1}\{(x,y)\in I_{i,K}\times I_{j,K}\},\\ 
{\ell}^K(x,y)&=&\sum_{i,j=1}^K{\ell}(t_i,t_j)\bm{1}\{(x,y)\in I_{i,K}\times I_{j,K}\},\\
{b}^K(x,y)&=&\sum_{i,j=1}^K{b}(t_i,t_j)\bm{1}\{(x,y)\in I_{i,K}\times I_{j,K}\},
\end{eqnarray}
which can be summarised via the following $K\times K$ matrix representations:
$$R^K(i,j)={\rho}(t_i,t_j),\quad L^K(i,j)={\ell}(t_i,t_j),\quad B^K(i,j)={b}(t_i,t_j).$$
Without loss of generality, one can assume that $R^K$ has been re-normalised to be of unit trace norm, whenever convenient. As it turns out, there exists a finite critical resolution $K^*$, with explicit dependence on the rank $r$ and scale $\delta$, beyond which identification is possible, provided that $r<\infty$ and $\delta<1/2$. This encapsulates the interplay between rank, resolution and scale.

\begin{theorem}[Discrete Identifiability]\label{discrete_ident}
Let $\Lo_1$ and $\Lo_2$ be covariance operators of finite ranks $r_1<\infty$ and $r_2<\infty$, respectively, and assume without loss of generality that $r_1\geq r_2$. Let $\B_1$ and $\B_1$ be two banded continuous covariance operators of bandwidth $\delta_1<1/2$ and $\delta_2<1/2$, respectively. Given $(t_1,\ldots,t_K)\in\mathcal{T}_K$, define their $K$-resolution matrix coefficients to be $(L_1^K,B_1^K,L_2^K,B_2^K)\in\mathbb{R}^{K\times K}$,
$$L_m^K(i,j)={\ell}_m(t_i,t_j)\quad \& \quad B_m^K(i,j)={b}_m(t_i,t_j),\qquad i,j\in\{1,\ldots,K\},$$
for $m=1,2$. If the eigenfunctions of $\Lo_1$ and $\Lo_2$ are all real analytic, and
$$K\ge K^*= \max\left( \frac{2r_1+2}{1-2\delta_1},\frac{2r_1+2}{1-2\delta_2}\right),$$
then we have the equivalence
 $$L_1^K+B_2^K = L_2^K+B_2^K \iff L_1^K=L_2^K \, \&\, B_1^K=B_2^K,$$
almost everywhere on $\mathcal{T}_K$ with respect to Lebesgue measure.
\end{theorem}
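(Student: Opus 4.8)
\emph{Proof strategy.} The reverse implication is immediate, so I would concentrate on ``$\Rightarrow$'' (reading the displayed identity with its evident corrections, $L_1^K+B_1^K=L_2^K+B_2^K$ and ``$\B_2$'' in place of the second ``$\B_1$''), in fact aiming at the slightly stronger conclusion $\ell_1\equiv\ell_2$ on $[0,1]^2$, i.e.\ $\Lo_1=\Lo_2$, whence $L_1^K=L_2^K$ and then $B_1^K=B_2^K$. The first step is a reduction to a single kernel: set $\ell:=\ell_1-\ell_2$ and $b:=b_2-b_1$, so that the hypothesis says precisely that the $K\times K$ matrices $[\ell(t_i,t_j)]_{i,j}$ and $[b(t_i,t_j)]_{i,j}$ coincide. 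Since $\ell(s,t)=\sum_{k\le r_1}\lambda^1_k\eta^1_k(s)\eta^1_k(t)-\sum_{k\le r_2}\lambda^2_k\eta^2_k(s)\eta^2_k(t)$ is a finite real-linear combination of products of the real analytic eigenfunctions, it is real analytic on a neighbourhood of $[0,1]^2$; picking a basis $u_1,\dots,u_q$ of $\operatorname{span}\{\eta^1_1,\dots,\eta^1_{r_1},\eta^2_1,\dots,\eta^2_{r_2}\}$ (so $q\le r_1+r_2\le 2r_1$) I would write $\ell(s,t)=u(s)^{\top}C\,u(t)$ with $u=(u_1,\dots,u_q)^{\top}$ a vector of linearly independent real analytic functions and $C$ a symmetric $q\times q$ matrix; set $V:=\operatorname{span}\{u_1,\dots,u_q\}$, so that $\ell(\cdot,t_j)\in V$ for every $j$. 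It then suffices to prove that $[\ell(t_i,t_j)]_{i,j}=0$ for a.e.\ $(t_1,\dots,t_K)\in\mathcal T_K$.

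Two further ingredients are needed. First, the band of $b$ forces many entries of $[\ell(t_i,t_j)]$ to vanish: since $b(s,t)=0$ for $|s-t|\ge\delta:=\max(\delta_1,\delta_2)$ and $t_i\in I_{i,K}$ forces $|t_i-t_j|>(|i-j|-1)/K$, we get $\ell(t_i,t_j)=b(t_i,t_j)=0$ whenever $|i-j|\ge W+1$ with $W:=\lceil\delta K\rceil$; thus each $\ell(\cdot,t_j)\in V$ vanishes at every grid point $t_i$ with $|i-j|\ge W+1$ — at least $K-2W-1$ of them, and more when $j$ lies within $W$ of an endpoint. The hypothesis $\delta<1/2$ is exactly what makes $K^*$ finite, and unwinding $K\ge K^*$ together with $W\le\delta K+1$ and $q\le 2r_1$ provides the numerical budget used below. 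Second, a genericity fact: for linearly independent real analytic $u_1,\dots,u_q$, the map $(s_1,\dots,s_q)\mapsto\det[u_k(s_a)]_{k,a=1}^q$ is real analytic and not identically zero on $[0,1]^q$ (else the $u_k$ would be linearly dependent), hence has Lebesgue-null zero set; taking the finite union over all $q$-subsets of $\{1,\dots,K\}$ and using Fubini, I would discard a null subset of $\mathcal T_K$ and work only with ``admissible'' grids, for which the evaluation $V\to\R^q$, $f\mapsto(f(t_{i_1}),\dots,f(t_{i_q}))$, is a linear isomorphism for every choice of $q$ distinct indices.

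The plan is then to propagate the vanishing column by column. On an admissible grid call column $j$ \emph{resolved} when $\ell(\cdot,t_j)\equiv0$; by the isomorphism above, a column becomes resolved as soon as $q$ of its values $\ell(t_i,t_j)$ are known to vanish, and such forced zeros arise either from off-band pairs ($|i-j|\ge W+1$) or from an already-resolved column $i$ (since then $\ell(t_j,t_i)=0$ and $\ell$ is symmetric). Starting from the extreme columns — which, when $K\ge K^*$, already carry $\ge q$ off-band zeros — one sweeps inward: every newly resolved column feeds fresh zeros to its neighbours, and for $K\ge K^*$ the sweep resolves every column. Hence $\ell(t_i,t_j)=0$ for all $i,j$, i.e.\ $L_1^K=L_2^K$, and then $B_1^K=B_2^K$; since $\ell(t_1,t_2)=0$ for a.e.\ $(t_1,t_2)\in I_{1,K}\times I_{2,K}$, analyticity upgrades this to $\ell\equiv0$ on $[0,1]^2$. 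As only a null set of grids was discarded, the equivalence holds a.e.\ on $\mathcal T_K$.

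I expect the genuinely delicate point to be the inward sweep, i.e.\ checking that the \emph{stated} threshold $K^*=\max\!\bigl(\tfrac{2r_1+2}{1-2\delta_1},\tfrac{2r_1+2}{1-2\delta_2}\bigr)$ — and not merely a slightly larger constant of the same order — already guarantees that every column ultimately collects $q$ forced zeros located at a spanning set of grid points. This is a finite but fussy combinatorial count: one must follow the growth of the resolved set, control the ceiling $W=\lceil\delta K\rceil$ against $\delta K$ tightly, and use $r_1\ge r_2$ so that $q\le r_1+r_2\le2r_1$. The remaining pieces — the analytic reduction, the Vandermonde-type genericity, and the Fubini argument — are routine and I would expect them to go through without incident.
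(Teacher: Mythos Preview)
Your approach is correct but takes a genuinely different route from the paper's. The paper works with $L_1^K$ and $L_2^K$ separately as rank-$r_1$ matrices: it first invokes an auxiliary result (all order-$r_1$ minors of $L_1^K$ are nonzero a.e.\ on $\mathcal T_K$) to show that the top-right $r_1\times r_1$ block---which lies off-band and is therefore common to both---is invertible, forcing $\operatorname{rank}(L_2^K)=r_1$; it then performs an iterative matrix completion, at each step locating an $(r_1{+}1)\times(r_1{+}1)$ submatrix with a single missing entry and solving the linear equation coming from the vanishing of its determinant (the leading coefficient being a nonzero $r_1\times r_1$ minor). Your route instead passes to the difference $\ell=\ell_1-\ell_2$, places each column $\ell(\cdot,t_j)$ in a fixed analytic space $V$ of dimension $q\le r_1+r_2\le 2r_1$, and sweeps inward column by column.

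What each buys: the paper's argument is matrix-theoretic and meshes directly with the low-rank completion viewpoint that drives the estimation theory later on. Yours is more function-theoretic and sidesteps the intermediate step of pinning down $\operatorname{rank}(L_2^K)$. Your worry about the ``fussy combinatorial count'' is in fact unfounded: a clean accounting of the sweep (resolved columns on both sides plus off-band rows) shows that at every stage the unresolved column under consideration carries exactly $K-\lceil\delta K\rceil-1$ forced zeros while the resolved set is still one-sided in its reach, and at least $2m>K-\lceil\delta K\rceil-1$ thereafter; so the sweep succeeds as soon as $K-\lceil\delta K\rceil-1\ge q$, i.e.\ essentially $K\ge(2r_1+2)/(1-\delta)$. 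This is \emph{weaker} than the paper's $K^*=(2r_1+2)/(1-2\delta)$, so under the stated hypothesis there is slack of order $\delta K$ and your argument actually delivers a slightly sharper threshold than the paper records. The Vandermonde-type genericity lemma you need (linearly independent analytic $u_1,\dots,u_q$ have a.e.\ nonsingular evaluation matrices on any product of subintervals) is essentially the content of the paper's auxiliary minor result, proved there via iterated integration against the orthonormal $\eta_j$; your version for a merely linearly independent system follows by the same induction, expanding the determinant along one row.
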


The theorem reveals the interplay between the fundamental parameters of the problem, which is governed by the constraint:

\begin{equation}
r\leq \left(\frac{1}{2}-\delta\right)K-1. 
\end{equation}
This yields the maximal rank that the smooth operator can have, for a given resolution $K$ and scale $\delta$ of the banded operator, if the problem is to be identifiable. Figure \ref{fig:interplay} plots this maximal rank $r$ as a function of $K$ for different values of the parameter $\delta$. We note that things are not particularly restrictive, allowing identifiability for quite large values of the bandwidth $\delta$ and rather modest values of $K$, when the rank $r$ is not exceedingly large, as is nearly always assumed in the practice of FDA. 

\begin{figure}\label{fig:interplay}
\includegraphics[width=0.65\textwidth]{./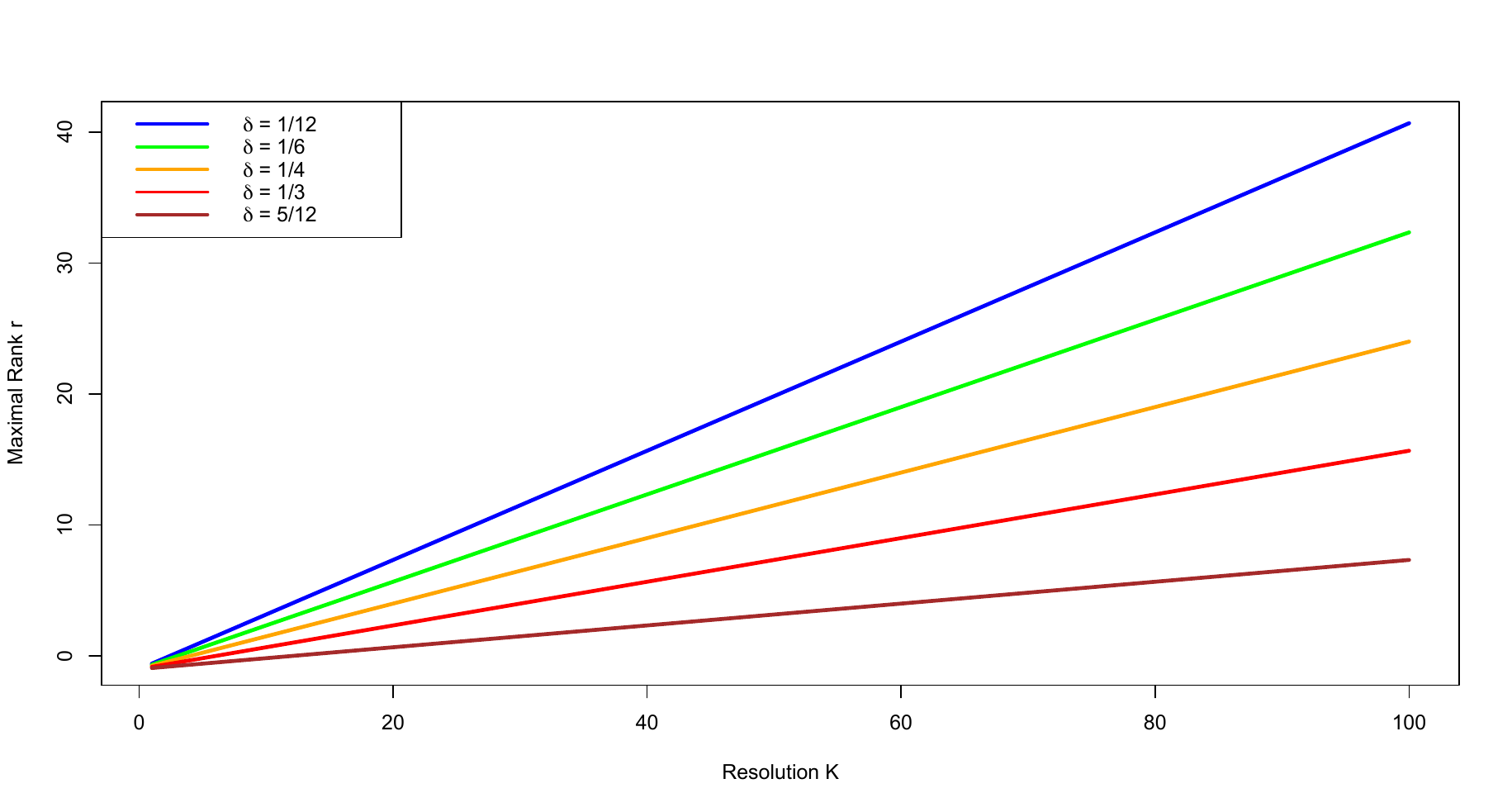}
\caption{Graphic representation of the interplay between rank, scale and resolution. For different values of the scale parameter $\delta$, the maximal identifiable rank $r$ is plotted as a function of the resolution $K$.}
\label{fig:interplay}
\end{figure}

An attractive feature of this result is that the conditions imposed are deterministic and yet not particularly restrictive. This is in contrast with results in recent progress on matrix completion which either have restrictive deterministic conditions, or more relaxed but random conditions. The reason is that we are fortunate to have a deterministic and known structure of the missing set of values to be completed.

The main caveat of passing from the continuum to discrete observation, is that the theorem is valid almost everywhere on $\mathcal{T}_K$, rather than pointwise on $\mathcal{T}_K$. Thus, we know that the identifiability holds for \emph{almost all grids} without being able to conclusively say so for \emph{a specific grid}. In probabilistic terms, if the points $t_j$ are chosen independently at random, each according to an absolutely continuous distribution on the corresponding interval $I_j$, then we know that identifiability holds with probability 1.

\section{Estimation by Matrix Completion}\label{sec:estimation}

Our strategy for estimation will be to define an objective function depending only on $R^K$ whose unique optimum yields the required matrix $L^K$. Then we will define an estimator of $L^K$ on the basis of an empirical version of this objective function. Ideally, the objective function should not depend on the knowledge of the unknown quantities $\delta$ and $r$, otherwise there would be two ``competing" tuning parameters to choose. The following proposition yields such an objective function, in the form of a low rank matrix completion problem.

\begin{proposition}\label{prop:optimisation}
Let $\mathscr{L}:L^2[0,1]\rightarrow L^2[0,1]$ be a rank $r<\infty$ covariance operator with analytic eigenfunctions and kernel $\ell$, and $\mathscr{B}:L^2[0,1]\rightarrow L^2[0,1]$ a trace-class covariance operator with $\delta$-banded kernel $b$. For $(t_1,\ldots,t_K)\in\mathcal{T}_K$, let 
$$L^K=\{\ell(t_i,t_j)\}_{ij},\quad B^K=\{b(t_i,t_j)\}_{ij},$$ 
and $R^K=L^K+B^K$. Assume that
$$\delta<\frac{1}{4}\qquad \& \qquad K\ge 4r+4.$$
 Define the matrix $P^K \in \R^{K\times K}$ by $P^K(i,j) = \mathbf{1}\left\{|i-j|>\left\lceil  K/4 \right\rceil\right\}$. Then, for almost all grids in $\mathcal{T}_K$:
\begin{enumerate}
\item  
The matrix $L^K$ is the unique solution to the optimization problem
\begin{equation} \label{theoretical_min_problem}
\min_{\theta \in \R^{K\times K}}\mathrm{rank}\{\theta\}   \qquad \textrm{subject to} \,\, \left\| P^{K}\circ (R^K-\theta)\right\|^2_F=0. 
\end{equation}

\item Equivalently, in penalised form,
\begin{equation}\label{theoretical_lagrange}
L^K=\underset{\theta\in \mathbb{R}^{K\times K}}{\arg\min}\left\{\left\| P^K\circ ( R^K-\theta)\right\|_F^2 +\tau \,\mathrm{rank}(\theta)\right\},
\end{equation}
for all $\tau>0$ sufficiently small.
\end{enumerate}
Here, $\|\cdot\|_F$ is the Frobenius matrix norm and $``\circ"$ denotes the Hadamard product.

\end{proposition}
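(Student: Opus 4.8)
The plan is to reduce the optimisation problem to an application of the discrete identifiability theorem (Theorem \ref{discrete_ident}) by first noting that the band $P^K$ erases exactly the entries $|i-j|>\lceil K/4\rceil$, which leaves visible the band $|i-j|\le\lceil K/4\rceil$. Since $b$ is $\delta$-banded with $\delta<1/4$, the entries $B^K(i,j)$ with $|i-j|>\lceil K/4\rceil$ vanish for almost all grids (because $t_i\in I_{i,K}$, $t_j\in I_{j,K}$ force $|t_i-t_j|> (|i-j|-1)/K \ge \lceil K/4\rceil/K \ge 1/4>\delta$ when $|i-j|>\lceil K/4\rceil$). Hence on the complement of the band, $P^K\circ R^K = P^K\circ L^K$, so $\theta=L^K$ is feasible for \eqref{theoretical_min_problem} with objective value $\mathrm{rank}(L^K)\le r$. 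The content is that no feasible $\theta$ of strictly smaller rank exists, and that $L^K$ is the \emph{unique} minimiser.

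First I would establish feasibility and the rank bound as above. Next, for uniqueness, suppose $\theta^\star$ is any feasible matrix with $\mathrm{rank}(\theta^\star)\le r$ (a fortiori any minimiser has rank $\le r$). Feasibility means $P^K\circ(R^K-\theta^\star)=0$, i.e. $R^K$ and $\theta^\star$ agree on all entries with $|i-j|>\lceil K/4\rceil$. Define $\tilde B^K := R^K-\theta^\star$; then $\tilde B^K$ is supported on the band $|i-j|\le\lceil K/4\rceil$. I claim $\tilde B^K$ is realisable as the $K$-resolution matrix of a $\tilde\delta$-banded continuous covariance operator with $\tilde\delta<1/2$: the key point is that $\lceil K/4\rceil/K < 1/2$ for all $K\ge 1$, so there is a $\tilde\delta<1/2$ with $\tilde B^K(i,j)=0$ whenever $|t_i-t_j|\ge\tilde\delta$, and one can interpolate the banded matrix to a continuous banded kernel (as is standard; alternatively, one reruns the algebraic argument of Theorem \ref{discrete_ident} directly at the matrix level without needing an operator realisation). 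Now $L^K+B^K = R^K = \theta^\star + \tilde B^K$ with both $(L^K,B^K)$ and $(\theta^\star,\tilde B^K)$ of the form (low rank $\le r$) + (banded of bandwidth $<1/2$). Since $K\ge 4r+4 = (2r+2)/(1/2) \ge (2r+2)/(1-2\tilde\delta)$ once $\tilde\delta\le 1/4$ — and more generally the threshold $K^\star$ of Theorem \ref{discrete_ident} with $r_1=r$, $\delta_1=\delta$, $\delta_2=\tilde\delta<1/2$ is dominated by $4r+4$ precisely when both bandwidths are $<1/4$, which is the role of the hypothesis $\delta<1/4$ together with $\tilde\delta\le\lceil K/4\rceil/K$ — Theorem \ref{discrete_ident} applies and yields $\theta^\star=L^K$ and $\tilde B^K=B^K$. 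This simultaneously proves existence of a rank-$\le r$ feasible point, that the minimum rank is exactly $\mathrm{rank}(L^K)$, and uniqueness of the minimiser, giving part (1).

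For part (2), I would argue that the penalised problem \eqref{theoretical_lagrange} has the same solution for small $\tau$. Let $f(\theta)=\|P^K\circ(R^K-\theta)\|_F^2$ and note $f(L^K)=0$. For any $\theta$, if $\mathrm{rank}(\theta)\ge \mathrm{rank}(L^K)+1$ then the penalised objective at $\theta$ is at least $\tau(\mathrm{rank}(L^K)+1)$, while at $L^K$ it is $\tau\,\mathrm{rank}(L^K)$; so any competitor must have rank $\le \mathrm{rank}(L^K)$. Among matrices of rank $\le\mathrm{rank}(L^K)$, part (1) shows the constraint set $\{f=0\}$ is the singleton $\{L^K\}$; I then need that matrices of rank $\le\mathrm{rank}(L^K)$ with $f(\theta)>0$ cannot beat $L^K$ for small $\tau$. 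This requires a uniform lower bound $\inf\{f(\theta): \mathrm{rank}(\theta)\le \mathrm{rank}(L^K),\ \|\theta-L^K\|\ge\epsilon\}>0$, or more carefully a compactness/continuity argument on the variety of bounded-rank matrices near $L^K$; since $f$ is a nonnegative quadratic vanishing only at $L^K$ on this set, choosing $\tau$ smaller than the resulting gap does it. The main obstacle I anticipate is exactly this last step — making the penalised-to-constrained passage rigorous, i.e. controlling $f$ from below on the (noncompact, singular) bounded-rank variety away from $L^K$, and verifying the almost-everywhere-in-grid exception set can be taken uniform in $\tau$; the algebraic heart of the argument, by contrast, is a clean corollary of Theorem \ref{discrete_ident} once the bandwidth bookkeeping $\lceil K/4\rceil/K<1/2$ and $4r+4\ge K^\star$ is checked.
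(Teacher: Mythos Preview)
Your approach is the paper's. For part (1), note that the interpolation route cannot work as written: neither $\tilde B^K=R^K-\theta^\star$ nor $\theta^\star$ need be positive semidefinite, let alone arise from an operator with analytic eigenfunctions, so Theorem~\ref{discrete_ident} does not apply to the pair $(\theta^\star,\tilde B^K)$ as stated. Only your ``alternatively'' survives, and that is exactly what the paper does---it invokes the matrix-level completion argument inside the proof of Theorem~\ref{discrete_ident} (the iterative $(r{+}1)\times(r{+}1)$ minor expansion), which only requires that $L^K$ have all order-$r$ minors nonzero (Theorem~\ref{Analytic_implies_mino_condition}) and that the off-band contain the top-right $r\times r$ block; the hypotheses $\delta<1/4$ and $K\ge 4r+4$ secure both.

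For part (2) you manufacture an obstacle that is not there. You do not need a uniform lower bound on $f$ over $\{\mathrm{rank}(\theta)\le r,\ \|\theta-L^K\|\ge\epsilon\}$. Split by rank: if $\mathrm{rank}(\theta)\ge r+1$, the penalty alone gives $f(\theta)+\tau\,\mathrm{rank}(\theta)\ge \tau(r+1)>\tau r$; if $\mathrm{rank}(\theta)=r$ and $\theta\ne L^K$, part (1) gives $f(\theta)>0$, hence $f(\theta)+\tau r>\tau r$ with no lower bound needed. The only case requiring work is $\mathrm{rank}(\theta)\le r-1$, and there the paper simply sets
\[
\mu=\inf\{f(\theta):\mathrm{rank}(\theta)\le r-1\}
\]
and takes any $\tau<\mu/(r-1)$. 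Positivity of $\mu$ needs no delicate compactness on the rank variety: if $f(\theta_k)\to 0$ then the top-right $r\times r$ block of $\theta_k$ converges to that of $L^K$, which has full rank $r$ (nonzero order-$r$ minor), forcing $\mathrm{rank}(\theta_k)\ge r$ for large $k$ by lower semicontinuity of rank. Finally, the almost-everywhere exception set in $\mathcal{T}_K$ is the zero locus of the minor determinants of $L^K$ and does not depend on $\tau$, so uniformity in $\tau$ is automatic.
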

Simply put, among all possible matrix completions of $P^{K}\circ (R^K-\theta)$, the matrix $L^K$ is uniquely the one of lowest rank: no matrix of rank lower than the true rank $r$ will provide a completion; and any completion other than $L^K$ will have rank at least $r+1$. Notice that neither of the objective functions \ref{theoretical_min_problem} or \ref{theoretical_lagrange} depends on $\delta$ or $r$: unique recovery of $L^K$ and $B^K$ is feasible even when we do not know the true values of $r$ or $\delta$. The concession we had to make to achieve this adaptation is to require $\delta<1/4$ (compared to $\delta<1/2$ in Theorem \ref{discrete_ident}).  
 In particular, we use the penalised form in equation (\ref{theoretical_lagrange}) to motivate the formal definition of our estimation approach (the equivalent form in equation (\ref{theoretical_min_problem}) will be useful for computation, see Section \ref{sec:optimisation}):

\begin{definition}[Estimator of $L^K$] \label{def:L_estimator}
Let $(X_1,\ldots,X_n)$ be i.i.d. copies of $X=Y+W$. Let $(t_1,\ldots,t_K)\in\mathcal{T}_K$ and assume we observe
$$X_{ij}=X_i(t_j),\quad i=1,\ldots,n;\,j=1,\ldots,K.$$
Let $R_n^K\in\mathbb{R}^{K\times K}$ be the empirical covariance matrix of the vectors 
$$\{(X_{i1},\ldots,X_{iK})\transpose\}_{i=1}^{n}.$$
We define the estimator $\hat{L}_n^K$ of $L^K$ to be an approximate minimum of 
\begin{eqnarray} \label{hat_min_problem}
\min_{\theta \in \Theta_K} &&\Big\{ \frac{1}{K^2}\left\| P^K\circ ( R^K_n-\theta)\right\|_F^2 +\tau\emph{rank}(\theta)\Big\}   \label{empirical_obj_function} 
\end{eqnarray}
where $P^K\in \R^{K\times K}$ is defined as $P^K(i,j) = \mathbf{1}\left\{|i-j|>\left\lceil  K/4 \right\rceil\right\}$, $\tau>0$ is a sufficiently small tuning parameter, and $\Theta_K$ is the set of $K\times K$ nonnegative matrices of trace norm bounded by that of $R^K_n$ (which can be renormalised to unit trace norm). By approximate minimum, it is meant that the value of the functional at $\hat{L}^K_n$ is within $O_{\mathbb{P}}(n^{-1})$ of the value of the overall minimum.
\end{definition}

We discuss the practical implementation of the estimation method of Definition \ref{def:L_estimator}, including the selection of the tuning parameter, in Section \ref{sec:optimisation}.  Once $\hat L^K_n$ has been constructed, we may also construct a plug-in estimator for $B^K$.

\begin{definition}[Plug-in Estimator of $B^K$]\label{def:B_estimator}
Let $ R^K_n$ and $\hat L^K_n$ be as in Definition \ref{def:L_estimator}. We define the plug-in estimator $\hat{B}^K_n$ of $B^K_n$ to be the projection of $\Delta^K_n=R^K_n-\hat L^K_n$ onto the convex set of nonnegative banded $K\times K$ matrices of bandwidth at most $ \lceil K/4\rceil$.
\end{definition}
We could of course have used $\Delta^K_n=R^K_n-\hat L^K_n$ itself to estimate $B^K$, but there is no guarantee that this will be positive definite. Asymptotically in $n$, $\Delta^K_n$ and $\hat{B}^K_n$ will coincide. Note that the intersection of the set of banded matrices (with given band) and the set of nonnegative matrices is a closed convex set, thus the projection uniquely exists. In practice, it can be approximately determined by the method of alternative projections, or Dykstra's algorithm (see Section \ref{sec:optimisation}).

Once $\hat L^K_n$ and $\hat B^K_n$ are at hand, it is reasonable to use their sum as an estimator of $R^K$, instead of the empirical version $R^K_n$, as the former is in principle less ``noisy" than the latter.

\begin{definition}[Plug-in Estimator of $R^K$]\label{def:R_estimator}
Let $\hat{L}^K_n$ and $\hat{B}^K_n$ be as in Definitions \ref{def:L_estimator} and \ref{def:B_estimator}. We define the plug-in estimator $\hat{R}^K_n$ of $R^K$ as $\hat{R}^K_n=\hat{L}^K_n+\hat{B}^K_n$.
\end{definition}

Our $K$-resolution estimators ($\hat{\mathscr{L}}^K_n$, $\hat{\mathscr{B}}^K_n$, $\hat{\mathscr{R}}^K_n$) of ($\mathscr{L}$, $\mathscr{B}$, $\mathscr{R}$) will now be defined as the operators with step-function kernels ($\hat{\ell}^K_n(x,y)$, $\hat{b}^K_n(x,y)$, $\hat{\rho}^K_n(x,y)$) whose coefficients are given by the matrices $(\hat{L}^K_n,\hat{B}^K_n,\hat{R}^K_n)$:
\begin{eqnarray*}
\hat{\ell}^K_n(x,y)&=&\sum_{j=1}^K\hat{L}^K_n(i,j)\bm{1}\{(x,y)\in I_{i,K}\times I_{j,K}\},
\\
\hat{b}^K_n(x,y)&=&\sum_{j=1}^K\hat{B}^K_n(i,j)\bm{1}\{(x,y)\in I_{i,K}\times I_{j,K}\},
\\
\hat{\rho}^K_n(x,y)&=&\sum_{j=1}^K\hat{R}^K_n(i,j)\bm{1}\{(x,y)\in I_{i,K}\times I_{j,K}\}.
\end{eqnarray*}
Correspondingly, the estimators of their spectra will be given by the spectra of $\hat{\mathscr{L}}^K_n$, $\hat{\mathscr{B}}^K_n$, and $\hat{\mathscr{R}}^K_n$ :
$$\hat{\mathscr{L}}^K_n=\sum_{j=1}^{\hat{r}}\hat{\lambda}_j\hat{\eta}_j\otimes\hat{\eta}_j,\quad \hat{\mathscr{B}}^K_n=\sum_{j=1}^{K}\hat{\beta}_j\hat{\psi}_j\otimes\hat{\psi}_j,\quad \hat{\mathscr{R}}^K_n=\sum_{j=1}^{K}\hat{\theta}_j\hat{\varphi}_j\otimes\hat{\varphi}_j.$$ 
Here, $\hat{r}\leq K/4$ is the rank of $\hat{\mathscr{L}}^K_n$. Note that the empirical eigenfunctions $\hat\eta_j$ of $\hat{\mathscr{L}}^K_n$ will be step functions. They can, of course, be replaced by smooth versions thereof. For example, one can smooth the covariance function $\hat \ell^K_n$, and then calculate the spectrum of the induced covariance operator.  The amount of smoothing required will be rather limited since $\hat \ell^K_n$ is effectively already de-noised. One could also directly smooth the eigenfunctions, but then there is no guarantee that their smoothed versions will be still orthogonal. Without any additional smoothness assumptions on $\mathscr{B}$, we cannot presume to smooth the step functions $\hat\psi_j$ in order to obtain smoother versions (recall that only continuity of $b$ was assumed).

\section{Separation of Scales}\label{sec:recovery}

With estimators of the covariance operators $(\mathscr{L},\mathscr{B})$ and their spectra at our disposal, we now wish to carry out functional PCA separately for the smooth and the rough components, thus separating the two scales of variation. In order to have identifiability at the level of curves, we need to add the assumption that at least one of the two processes $Y$ and $W$ has a known mean. Here we assume that the rough process $W$ is known to have mean zero, and to simplify the presentation we assume that the mean of $Y$ has been removed from the data so we have $\E[Y]=0$ too. Focussing on the smooth component, we note that its Karhunen-Lo\`eve expansion is
$$Y_i=\sum_{j=1}^{r}\langle Y_i,\eta_j\rangle \eta_j.$$
Having estimated $\eta_j$ already, it suffices to estimate the scores $\{\langle Y_i,\eta_j \rangle\}_{i=1}^{n}$, in order to have a complete analysis into principal components. If we were able to observe $\{Y_i(t_j)\}_{i,j}$, then the natural estimator would be given by
$$\langle Y^K_i,\hat\eta_j \rangle_{L^2}=\frac{1}{K}\sum_{k=1}^{K}Y_i(t_k)\hat{\eta}_j(t_k),$$
where $Y_i^K(t)=\sum_{j=1}^{K}Y_i(t_j)\mathbf{1}\{t\in I_{j,K}\}$. A parallel discussion holds in the case of the rough components $\{W_i\}$. In effect, we see that the problem of estimating the principal scores of $Y$ and $W$ separately is equivalent to that of \emph{separating} the unobservable components $Y_i(t_j)$ and $W_i(t_j)$ in the decomposition
$$X_i(t_j)=Y_i(t_j)+W_i(t_j),$$
on the basis of the observations $X_i(t_j)$. We concentrate on a specific observation, say $i=1$, and drop the index $1$ for the sake of tidiness. 

Separation can be viewed as a problem of \emph{prediction} (similarly to the approach taken by Yao et al. \cite{PACE}). If the covariance operators $\mathscr{R}$ and $\mathscr{L}$ were known precisely, then we would attempt to recover the components $Y^K(t)=\sum_{j=1}^{K}Y(t_j)\mathbf{1}\{t\in I_{j,K}\}$ and $W^K(t)=\sum_{j=1}^{K}W(t_j)\mathbf{1}\{t\in I_{j,K}\}$ by means of their best predictors given the observation $X^K(t)=\sum_{j=1}^{K}X(t_j)\mathbf{1}\{t\in I_{j,K}\}$. The most tractable case is that of using the best \emph{linear} predictor (which is best overall in the Gaussian case), and this is what we will pursue. Noting that $Y$ and $W$ are zero mean and uncorrelated, the best linear predictor of $Y^K$ given $X^K$ (viewed as random elements of $L^2$) is
\begin{equation}\label{blp-series}
\Pi(X^K)=\sum_{j=1}^{r}\sum_{i=1}^{q}\frac{\lambda^K_j}{\theta^K_i}\langle \varphi^K_i,\eta^K_j \rangle \langle \varphi^K_i,X^K \rangle \eta^K_j=\sum_{j=1}^{r}\xi_j \eta^K_j,
\end{equation}
where $\{\theta^K_i,\varphi^K_i\}_{i=1}^q$ is the spectrum of $\mathscr{R}^K$ (with $q\leq \infty$) and $\{\lambda^K_j,\eta^K_j\}_{j=1}^{r}$ that of $\mathscr{L}^K$ (see Bosq \cite[Prop. 3.1]{bosq}, and Bosq \cite[Example 3.3]{bosq}). Note that $\mathscr{R}^K$ is the covariance operator of $X^K$.

We estimate the best linear predictor, by replacing the unknown elements in Equation \ref{blp-series} by their corresponding estimators. Specifically, recalling that
$$\hat{\mathscr{R}}^K_n=\sum_{i=1}^{\hat{q}}\hat{\theta}_i \hat{\varphi}_i \otimes \hat{\varphi}_i,\ \ \hat{q}=\mathrm{rank}(\hat{\mathscr{R}}^K_n)\qquad \& \qquad\hat{\mathscr{L}}^K_n=\sum_{j=1}^{\hat{r}}\hat{\lambda}_j \hat{\eta}_j\otimes \hat{\eta}_j,\ \ \hat{r}=\mathrm{rank}(\hat{\mathscr{L}}^K_n) ,$$
our estimator of the predictor of $Y^K$ given $X^K$ is 
\begin{equation}\label{estimated-blp-series}
\hat{Y}^K_n:=\sum_{j=1}^{\hat{r}}\sum_{i=1}^{\hat{q}}\frac{\hat{\lambda}_j}{\hat{\theta}_i}\langle \hat{\varphi}_i,\hat{\eta}_j \rangle \langle \hat{\varphi}_i,X^K \rangle \hat{\eta}_j=\sum_{j=1}^{\hat r} \hat{\xi}_{j} \hat{\eta}_j.
\end{equation}
In matrix notation, the estimated scores $(\hat\xi_1,\ldots,\hat\xi_{\hat{r}})\transpose$ of $Y$ satisfy
\begin{equation}\label{multi-predictor}
\hat{\xi}_j=\langle\hat{\lambda}_j (\hat{\mathscr{R}}^K_n)^{\dagger}\hat{\eta}_j,X^K\rangle=\frac{1}{K}\hat{\lambda}_j\bm{X}\transpose (\hat{R}^K_n)^{\dagger}\hat{\bm{\eta}}_j=\frac{1}{K}\hat{\lambda}_j\bm{X}\transpose (\hat{L}^K_n+\hat{B}^K_n)^{\dagger}\hat{\bm{\eta}}_j,
\end{equation}
where $\bm{X} = (X(t_1),\ldots,X(t_K))\transpose$, $\hat{\bm{\eta}}_j= (\hat{\eta}_j(t_1),\ldots,\hat{\eta}_j(t_K))\transpose$, and we use the notation $\mathscr{A}^{\dagger}$ to denote the generalised inverse of an operator (or matrix) $\mathscr{A}$. It is worth remarking that the last expression in Equation \ref{multi-predictor} is essentially the same as that of the PACE estimator of Yao et al. \cite{PACE}, with the exception that one has a banded matrix $\hat{B}^K_n$ in lieu of a diagonal matrix of the form $\hat{\sigma}^2I$.  The best linear predictor of $W^K$ given $X^K$, say $\Psi(X^K)$, can be estimated by means of the \emph{residuals}
$$\hat{W}(t_j)=X(t_j)-\hat{Y}^K_n(t_j),\qquad j=1,\ldots,K.$$
This definition is motivated from the simple fact that 
$$\Psi(X^K)=\mathbb{E}\big[W^K\big|X^K\big]=\mathbb{E}\big[X^K-Y^K\big|X^K\big]=X^K-\mathbb{E}\big[Y^K\big|X^K\big]=X^K-\Pi(X^K).$$

\section{Asymptotic Theory}\label{sec:asymptotics}

We now turn to consider the asymptotic behaviour of the estimators constructed in the last two sections. Our first result considers the asymptotic behaviour of our estimator $\hat{\mathscr{L}}^K_n$ and its spectrum, in terms of the observation grid and the number of curves. In the sequel, we will follow the usual convention that the sign of the estimated eigenfunctions is correctly identified (since only the eigenprojectors are formally identifiable). 

\begin{theorem}\label{thm:consistency_L}
In the setting of Section \ref{sec:estimation}, let the $r <\infty$ eigenvalues of $\Lo$ be of multiplicity one, $\mathbb{E}\|X\|^4_{L^2}<\infty$ and $\delta<\frac{1}{4}$, and define $K^*=4(r+1)$ to be the critical resolution. Then for any $K>K^*$ and almost all grids in $\mathcal{T}_K$ it holds that
\begin{eqnarray}\label{eq:covariance_rate_L}
\left\|\hat{\mathscr{L}}^K_n-\Lo\right\|^2_{\mathrm{HS}}&\leq&O_{\mathbb{P}}(n^{-1})+{4}K^{-2}\underset{x,y\in[0,1]}{\sup}\|\nabla\ell(x,y)\|^2_2,\\
\left\|\hat{\eta}_j-\eta_j\right\|^2_{L^2}&\leq&O_{\mathbb{P}}(n^{-1})+2K^{-2}\|\eta'_j\|^2_{\infty},\quad j\in \{1,\ldots,r\},
\label{eq:eigenfunction_rate_eta}\\
\sup_{j\geq 1}|\hat\lambda_j-\lambda_j|^2&\leq&O_{\mathbb{P}}(n^{-1})+{4}K^{-2}\underset{x,y\in[0,1]}{\sup}\|\nabla\ell(x,y)\|^2_2,
\end{eqnarray}
for all $\tau>0$ sufficiently small, where $\| \cdot\|_\mathrm{HS}$ is the Hilbert--Schmidt norm of an operator. Furthermore, the rank of $\hat{\mathscr{L}}^K_n$ satisfies
\begin{equation}
|\mathrm{rank}(\hat{\mathscr{L}}^K_n)- r|=O_{\mathbb{P}}(n^{-1}).
\label{eq:rank_consistency}
\end{equation}
\end{theorem}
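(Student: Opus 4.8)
The plan is to decompose each error into a ``statistical'' part (the gap between the empirical optimisation problem of Definition \ref{def:L_estimator} and its population counterpart \eqref{theoretical_lagrange}, driven by $R^K_n\to R^K$) and a ``discretisation'' part (the gap between the $K$-resolution operator $\mathscr{L}^K$ and the true operator $\mathscr{L}$, driven by the step-function approximation of the analytic kernel $\ell$). For the discretisation part, I would note that $\hat{\mathscr{L}}^K_n$ has a step-function kernel that, once the statistical error is controlled, equals $\ell$ evaluated on the grid and extended to be piecewise constant on the boxes $I_{i,K}\times I_{j,K}$; a first-order Taylor expansion of $\ell$ on each box (legitimate since $\ell$ is analytic, hence $C^1$) gives the $K^{-1}\sup\|\nabla\ell\|_2$ per-box error, and summing the squared errors over the $K^2$ boxes of area $K^{-2}$ each yields the $4K^{-2}\sup\|\nabla\ell\|^2_2$ term in \eqref{eq:covariance_rate_L}. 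The eigenvalue bound \eqref{eq:rank_consistency}, third display follows from the discretisation estimate together with Weyl's / Lidskii's perturbation inequality for Hilbert--Schmidt operators, and the eigenfunction bound \eqref{eq:eigenfunction_rate_eta} follows by an analogous one-dimensional Taylor argument on each $I_{j,K}$ combined with the Davis--Kahan $\sin\theta$ theorem applied to $\hat{\mathscr{L}}^K_n-\mathscr{L}^K$; the per-eigenvalue spectral gap of $\mathscr{L}$ is a fixed positive constant absorbed into the $O_\mathbb{P}$ and asymptotic constants. (Here I use the convention, stated just before the theorem, that the sign of $\hat\eta_j$ is correctly chosen.)

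For the statistical part, the key input is that the empirical covariance matrix satisfies $\|R^K_n-R^K\|_F = O_\mathbb{P}(n^{-1/2})$ uniformly in the (finitely many, once $K=mK^*$ with $K^*$ fixed) entries, which follows from $\mathbb{E}\|X\|^4_{L^2}<\infty$ and a standard CLT/Markov argument for sample second moments; the factor $K^{-2}$ in front of the objective in \eqref{empirical_obj_function} makes the relevant scale $K^{-2}\|P^K\circ(R^K_n-R^K)\|_F^2 = O_\mathbb{P}(n^{-1})$, uniformly in $K$. I would then argue that, for $n$ large, the approximate minimiser $\hat{L}^K_n$ of the empirical problem must have rank equal to the true rank $r$ of $L^K$: Proposition \ref{prop:optimisation} says $L^K$ is the \emph{unique} rank-$r$ (indeed lowest-rank) exact completion, and any competitor of rank $\le r-1$ incurs a fixed positive value of $\|P^K\circ(R^K-\theta)\|_F^2$ (by the identifiability/uniqueness in Theorem \ref{discrete_ident} and Proposition \ref{prop:optimisation}) that, for $n$ large, exceeds $O_\mathbb{P}(n^{-1})+\tau_n r$; conversely any competitor of rank $\ge r+1$ pays the penalty surplus $\tau_n$, which for the approximate-minimiser tolerance $O_\mathbb{P}(n^{-1})$ and $n\tau_n\to\infty$ cannot be optimal. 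This pins $\mathrm{rank}(\hat{L}^K_n)=r$ with probability tending to one, and then on that event $\hat{L}^K_n$ is the rank-$r$ matrix minimising $\|P^K\circ(R^K_n-\theta)\|_F^2$ over a fixed smooth (real-analytic) manifold near $L^K$, so a standard $M$-estimation expansion gives $\|\hat{L}^K_n-L^K\|_F=O_\mathbb{P}(n^{-1/2})$, i.e. $K^{-2}\|\hat{L}^K_n-L^K\|_F^2=O_\mathbb{P}(n^{-1})$ uniformly in $K$; translating matrix to operator norm via $\|\hat{\mathscr{L}}^K_n-\mathscr{L}^K\|_{\mathrm{HS}}^2=K^{-2}\|\hat{L}^K_n-L^K\|_F^2$ and combining with the discretisation bound via the triangle inequality yields \eqref{eq:covariance_rate_L}. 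The rank statement \eqref{eq:rank_consistency} then just records $\mathrm{rank}(\hat{\mathscr{L}}^K_n)=r$ on an event of probability $1-o(1)$, so $n\tau_n K^{-2}|\mathrm{rank}(\hat{\mathscr{L}}^K_n)-r|$ is zero with high probability and hence $O_\mathbb{P}(1)$; the precise rate requirement on $\tau_n$ (namely $n\tau_n\to\infty$ alongside $\tau_n\to0$) is what I would flag in the accompanying Remark \ref{rates}.

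The main obstacle I anticipate is making the ``rank of the approximate minimiser is exactly $r$'' step quantitative and uniform: one needs a \emph{lower bound}, bounded away from zero, on $\inf\{\|P^K\circ(R^K-\theta)\|_F^2 : \mathrm{rank}(\theta)\le r-1\}$, and this non-degeneracy is exactly the content of the uniqueness/identifiability results (Proposition \ref{prop:optimisation} and Theorem \ref{discrete_ident}) but must be invoked carefully since the optimisation is over a non-convex, non-closed set (rank is lower-semicontinuous, so a minimiser of rank exactly $r$ exists, but one must rule out a minimising sequence of rank $\le r-1$ whose value drops to the value at $L^K$). A compactness argument restricted to a bounded region — justified because $\hat{L}^K_n$ is bounded in $F$-norm with high probability by the penalised objective value — together with lower-semicontinuity of rank should close this; the rest is bookkeeping of the two error sources and standard perturbation theory, with all constants depending only on $\ell$ (through $\sup\|\nabla\ell\|_2$ and the spectral gaps) and hence uniform over the refinements $K=mK^*$.
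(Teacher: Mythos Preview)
Your decomposition into statistical and discretisation parts, and your handling of the discretisation part via Taylor expansion, match the paper exactly. The eigenvalue and eigenfunction perturbation arguments are also essentially the same. The difficulty is in the statistical part, where your route diverges from the paper's and runs into a genuine gap.

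Your strategy is to first establish $\mathrm{rank}(\hat L^K_n)=r$ with probability tending to one, and then do $M$-estimation on the rank-$r$ manifold. But to rule out competitors of rank $\ge r+1$ you need the penalty surplus $\tau_n$ to dominate the approximate-minimiser tolerance $O_\mathbb{P}(n^{-1})$, i.e.\ you need $n\tau_n\to\infty$. This is \emph{not} a hypothesis of the theorem: only $\tau_n\to 0$ is assumed. Remark~\ref{rates} is a \emph{consequence} of the bound \eqref{eq:rank_consistency}, not an input to it; the theorem must deliver the $O_\mathbb{P}(n^{-1})$ Frobenius rate and the bound \eqref{eq:rank_consistency} under $\tau_n\to 0$ alone, without ever knowing that $\mathrm{rank}(\hat L^K_n)=r$. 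The paper avoids this by \emph{not} separating the rank argument from the Frobenius argument: it defines a joint pseudo-distance
\[
d_n^2(\theta_1,\theta_2)=K^{-2}\|\theta_1-\theta_2\|_F^2+\tau_nK^{-2}\,|\mathrm{rank}(\theta_1)-\mathrm{rank}(\theta_2)|,
\]
verifies the two-sided growth/modulus conditions of van der Vaart--Wellner Theorem~3.4.1 for the penalised criteria $S_{n,K}$ and $\mathbb{S}_{n,K}$ in this $d_n$, and obtains $n\,d_n^2(\hat L^K_n,L^K)=O_\mathbb{P}(1)$ in one stroke. Both \eqref{eq:covariance_rate_L} and \eqref{eq:rank_consistency} then fall out by reading off the two summands of $d_n^2$.

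A second, related issue is uniformity in $K$. Your compactness argument for the lower bound on $\inf\{\|P^K\circ(R^K-\theta)\|_F^2:\mathrm{rank}(\theta)\le r-1\}$ yields a positive constant $\epsilon_K$ that a priori depends on $K$ and on the grid $\bm t_K$, and there is no reason it cannot degenerate as $K\to\infty$. The paper handles this with a coupling: it constructs $\bm t_{K^*}\subset\bm t_K$ (uniform on $\mathcal{T}_{K^*}$) and shows, by restricting any $K\times K$ matrix $\theta$ to its $K^*\times K^*$ submatrix on $\bm t_{K^*}$, that
\[
\inf_{\mathrm{rank}(\theta)<r}K^{-2}\|P^K\circ(R^K-\theta)\|_F^2\ \ge\ \Big(\frac{K^*}{K}\Big)^2\epsilon(\bm t_{K^*}),
\]
with $\epsilon(\bm t_{K^*})>0$ depending only on the fixed critical grid. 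This is exactly the $K^{-2}$ scaling needed to compare against $r\tau_nK^{-2}$ uniformly in $K$, and it is the step your proposal is missing.
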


\begin{remark}\label{probabilistic_grid}
The fact that the theorem holds true almost everywhere on $\mathcal{T}_K$ can equivalently be stated in probabilistic terms. Assume that the grid $\bm{t}_K=\{t_{j,K}\}_{j=1}^{K}$ is chosen at random according to the uniform distribution on $\mathcal{T}_K$. Then the theorem holds with probability 1 over the grid choice. Note that the uniform measure on $\mathcal{T}_K$ can be generated by selecting $\{t_{j,K}\}_{j=1}^{K}$ to be independent for $j\in\{1,\ldots,K\}$, each uniformly distributed on the corresponding subinterval $I_{j,K}$.\end{remark}

Similar asymptotics for $\hat{\mathscr{B}}^K_n$ follow as a corollary, since it is defined as a contraction of the difference $\mathscr{R}^K_n-\hat{\mathscr{L}}^K_n$.

\begin{corollary}\label{coro:consistency_B}
If the covariance function $b(s,t):[0,1]^2\rightarrow\mathbb{R}$ associated with $\mathscr{B}$ is continuously differentiable, then for any $K>K^*$ and almost all grids in $\mathcal{T}_K$ we have
\begin{eqnarray}\label{eq:covariance_rate_B}
\left\|\hat{\mathscr{B}}^K_n-\B\right\|^2_{\mathrm{HS}}&\leq&O_{\mathbb{P}}(n^{-1})+4K^{-2}\sup_{x,y\in [0,1] }\|\nabla b(x,y)\|^2_2,\label{eq:B_rate} \\
\frac{\sigma_j^2}{8}\left\|\hat{\psi}_j-\psi_j\right\|^2_{L^2}&\leq&O_{\mathbb{P}}(n^{-1})+\frac{\sigma_j^2}{4}K^{-2}\|\psi'_j\|^2_{\infty}, \label{eq:eigenfunction_rate_psi}\\
\sup_{j\geq 1}|\hat\beta_j-\beta_j|^2&\leq&O_{\mathbb{P}}(n^{-1})+4K^{-2}\sup_{x,y\in [0,1] }\|\nabla b(x,y)\|^2_2,\label{eq:eigenvalue_rate_psi}
\end{eqnarray}
for all $\tau>0$ sufficiently small. Here
$$\sigma_1=\beta_1-\beta_2,\quad\&\quad \sigma_j=\min\{\beta_{j-1}-\beta_j,\beta_{j}-\beta_{j+1}\},\quad 2\leq j\leq \textrm{rank}(\B)\wedge K.$$ 
\end{corollary}

The last two results can now be combined to obtain the asymptotic behaviour of $\hat{\mathscr{R}}$.

\begin{corollary}\label{coro:consistency_Rhat}
Under the same conditions as in Theorem \ref{thm:consistency_L} and Corollary \ref{coro:consistency_B}, we have that for any $K>K^*$ and almost all grids in $\mathcal{T}_K$,
\begin{equation}\label{eq:covariance_rate_R}
\left\|\hat{\mathscr{R}}^K_n-\mathscr{R}\right\|^2_{\mathrm{HS}}\leq O_{\mathbb{P}}(n^{-1})+4K^{-2}\sup_{x,y\in [0,1] }\|\nabla \rho(x,y)\|^2_2,
\end{equation}
for all $\tau$ sufficiently small. 
\end{corollary}

Finally, we show that the predictors of $Y^K$ and $W^K$ based on a finite grid of resolution $K$ are consistent in the $L^2$ sense, which also implies that the corresponding estimated PCA scores are consistent, too. 

\begin{corollary}\label{thm:best_lin_pred}
In the same setting as in Theorem \ref{thm:consistency_L}, let  $K>K^*$. If $\mathscr{R}^K$ is of full rank, and if the kernel $b(s,t):[0,1]^2\rightarrow\mathbb{R}$ of $\mathscr{B}$ is continuously differentiable, then
\begin{eqnarray*}
 \|\hat{Y}^K_n-\Pi(X^K)\|_{L^2}=O_{\mathbb{P}}(n^{-1/2}),\\
 \|\hat{W}^K_n-\Psi(X^K)\|_{L^2}=O_{\mathbb{P}}(n^{-1/2}),
 \end{eqnarray*}
almost everywhere on $\mathcal{T}_K$. 
\end{corollary}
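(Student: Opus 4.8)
The plan is to recast both predictors in matrix form at the fixed resolution $K$ and then run a perturbation argument on top of Theorem~\ref{thm:consistency_L} and Corollary~\ref{coro:consistency_B}, which already contain the analytic work. Since $\mathscr{R}^K$ is of full rank, $(\mathscr{R}^K)^{\dagger}=(\mathscr{R}^K)^{-1}$, and since $Y$ and $W$ are zero-mean and uncorrelated the best linear predictor is the closed form $\Pi(X^K)=\mathscr{L}^K(\mathscr{R}^K)^{-1}X^K$ of \eqref{blp-series}; unwinding the double sum in \eqref{estimated-blp-series} in the same way gives $\hat{Y}^K_n=\hat{\mathscr{L}}^K_n(\hat{\mathscr{R}}^K_n)^{\dagger}X^K$, an identity in which the rank $\hat r$ of $\hat{\mathscr{L}}^K_n$ plays no role, so no rank consistency is needed. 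Both functions are step functions on $\{I_{j,K}\}$; passing to coordinate vectors (the $1/K$ factors from the $L^2$ normalisation cancelling between operator and inverse) gives $\bm{Y}^{\mathrm{pred}}=L^K(R^K)^{-1}\bm{X}$ and $\hat{\bm{Y}}=\hat L^K_n(\hat R^K_n)^{\dagger}\bm{X}$, with $\bm{X}=(X(t_1),\dots,X(t_K))^{\top}$, so that
\begin{equation*}
\big\|\hat{Y}^K_n-\Pi(X^K)\big\|_{L^2}^2=\tfrac1K\big\|\big(\hat L^K_n(\hat R^K_n)^{\dagger}-L^K(R^K)^{-1}\big)\bm{X}\big\|_2^2 .
\end{equation*}
As $\mathbb{E}\|\bm{X}\|_2^2=\sum_j\rho(t_j,t_j)<\infty$ does not depend on $n$, $\|\bm{X}\|_2=O_{\mathbb{P}}(1)$, and it suffices to bound $\|\hat L^K_n(\hat R^K_n)^{\dagger}-L^K(R^K)^{-1}\|$ at fixed $K$.

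For that, I would first read off fixed-resolution rates. The bounds of Theorem~\ref{thm:consistency_L} and of the $C^1$-on-the-band case of Corollary~\ref{coro:consistency_B} split, as their proofs show, into a statistical term of order $O_{\mathbb{P}}(n^{-1})$ in squared Hilbert--Schmidt norm (uniform in $K$, and measuring distance to the $K$-resolution operators $\mathscr{L}^K$, $\mathscr{B}^K$) plus a deterministic discretisation bias. Using $\|\hat{\mathscr{L}}^K_n-\mathscr{L}^K\|_{\mathrm{HS}}^2=K^{-2}\|\hat L^K_n-L^K\|_F^2$ and the analogue for $\mathscr{B}$, this yields, at fixed $K$, $\|\hat L^K_n-L^K\|_F=O_{\mathbb{P}}(n^{-1/2})$, $\|\hat B^K_n-B^K\|_F=O_{\mathbb{P}}(n^{-1/2})$, hence $\|\hat R^K_n-R^K\|_F=O_{\mathbb{P}}(n^{-1/2})$. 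Because $R^K$ is invertible, on the event $E_n=\{\lambda_{\min}(\hat R^K_n)>\tfrac12\lambda_{\min}(R^K)\}$, whose probability tends to $1$, the matrix $\hat R^K_n$ is invertible (so its generalised inverse equals its inverse), $\|(\hat R^K_n)^{-1}\|\le 2\|(R^K)^{-1}\|$, and $\|(\hat R^K_n)^{-1}-(R^K)^{-1}\|\le\|(\hat R^K_n)^{-1}\|\,\|(R^K)^{-1}\|\,\|\hat R^K_n-R^K\|_F=O_{\mathbb{P}}(n^{-1/2})$.

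Next I would assemble the rate via $\hat L^K_n(\hat R^K_n)^{-1}-L^K(R^K)^{-1}=(\hat L^K_n-L^K)(\hat R^K_n)^{-1}+L^K\big((\hat R^K_n)^{-1}-(R^K)^{-1}\big)$: on $E_n$ the first summand is $O_{\mathbb{P}}(n^{-1/2})$ since $\|(\hat R^K_n)^{-1}\|$ is bounded there, the second since $L^K$ is a fixed matrix, so combined with the reduction above $\|\hat Y^K_n-\Pi(X^K)\|_{L^2}=O_{\mathbb{P}}(n^{-1/2})$ on $E_n$; since $\mathbb{P}(E_n^c)\to 0$, a rate valid on an asymptotically sure sequence of events is valid unconditionally in the $O_{\mathbb{P}}$ sense, giving the claim for $\hat Y^K_n$. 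For the rough component, $\hat W^K_n(t_j)=X(t_j)-\hat Y^K_n(t_j)$ and $\Psi(X^K)=X^K-\Pi(X^K)$, so $\hat W^K_n-\Psi(X^K)=-(\hat Y^K_n-\Pi(X^K))$ and the same rate follows; everything holds on the full-measure subset of $\mathcal{T}_K$ on which Theorem~\ref{thm:consistency_L} and Corollary~\ref{coro:consistency_B} hold.

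The substantive difficulty is upstream, in Theorem~\ref{thm:consistency_L} and Corollary~\ref{coro:consistency_B}; what remains delicate here is, first, isolating from their bias-plus-variance bounds the genuinely statistical fixed-$K$ rates measured against the $K$-resolution targets $\mathscr{L}^K,\mathscr{B}^K,\mathscr{R}^K$ (these, not the continuum operators, are what enter $\Pi$), and, second, handling the non-uniform invertibility of $\hat R^K_n$: its generalised inverse is not a continuous function of the matrix near a rank drop, so one must restrict to the event where $\hat R^K_n$ stays bounded away from singularity --- and it is precisely the full-rank hypothesis on $\mathscr{R}^K$ that makes this event asymptotically sure and the inversion step stable.
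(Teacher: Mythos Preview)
Your proof is correct and reaches the conclusion by a somewhat different route from the paper. The paper works spectrally: it extracts $|\hat\theta_i-\theta_i^K|=O_{\mathbb P}(n^{-1/2})$ from $\|\hat{\mathscr R}^K_n-\mathscr R^K\|_{\mathrm{HS}}$, applies the delta method to $g(x)=x^{-1}\mathbf 1\{x>0\}$ at each positive $\theta_i^K$, rewrites the predictor as $\sum_i\mathbf 1\{\hat\theta_i>0\}(\langle\hat\varphi_i,X^K\rangle/\hat\theta_i)\hat{\mathscr L}^K_n\hat\varphi_i$, and invokes the continuous mapping theorem term by term. You instead collapse the double sum to the closed form $\hat{\mathscr L}^K_n(\hat{\mathscr R}^K_n)^\dagger X^K$ and run a single matrix-level perturbation argument on $\hat L^K_n(\hat R^K_n)^{-1}-L^K(R^K)^{-1}$, restricting to the high-probability event where $\hat R^K_n$ is uniformly well-conditioned. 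Your route is arguably cleaner: it sidesteps eigenvector perturbation entirely and makes transparent exactly where the full-rank hypothesis enters (stability of the inverse), while the paper's route has the advantage of displaying the dependence on individual spectral quantities and connecting more directly to the series form of the predictor. Both rest on the same upstream inputs from Theorem~\ref{thm:consistency_L} and Corollary~\ref{coro:consistency_B}, and both handle $\hat W^K_n$ by the identity $\hat W^K_n-\Psi(X^K)=-(\hat Y^K_n-\Pi(X^K))$.
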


\section{Practical Implementation via Band--Deleted PCA} \label{sec:optimisation}

To compute the estimators $\hat{L}^K_n$ and $\hat{B}^K_n$ from a sample of discretely observed curves $\mathbf{X_1},\ldots,\mathbf{X}_n$, where $\mathbf{X}_i = (X_i(t_1),\ldots,X_i(t_K))\transpose$, we apply the following algorithm.
\begin{enumerate}
\item[(A)]  Compute the empirical covariance matrix of the sample 
$$ R^K_n = \frac{1}{n} \sum_{i=1}^n (\mathbf{X}_i-\hat{\mathbf{\mu}})(\mathbf{X}_i-\hat{\mathbf{\mu}})\transpose, \textrm{ where }\hat{\mathbf{\mu}} = \frac{1}{n} \sum_{i=1}^n  \mathbf{X}_i.$$
\item[(B)] Solve the optimisation problem
\begin{equation}\label{stepB_min}
\min_{0\preceq\theta \in \R^{K\times K}} \qquad \left\| P^{K}\circ (R^K_n-\theta)\right\|^2_F  \qquad\qquad
\textrm{subject to} \quad \mathrm{rank}(\theta)\le i, 
\end{equation} 
for $i=\{1,\ldots,K/4 -1\}$, obtaining minimisers $\hat\theta_1,\ldots,\hat\theta_{K/4-1}$.
\item[(C)]  Calculate the \emph{fits} $\{f(i)=\|P^{K}\circ (R^K_n-\hat \theta_i) \|^2_F:i=1,\ldots,K/4 -1\}$, and the quantities
$$f(i)+\tau i,$$
for some choice of the tuning parameter $\tau>0$.
\item[(D)]  Determine the $i$ that minimises the above quantity, and declare the corresponding optimising matrix to be the estimator $\hat L^K_n$.
\item[(E)]  Use an alternating projection algorithm (Bauschke and Borwein \cite{alt_proj}) to compute an approximation of the projection of $R^K_n-\hat L^K_n$ onto the intersection of the set of banded $K\times K$ matrices of bandwidth at most $\lceil K/4\rceil$ and the set of nonnegative definite $K\times K$ matrices. Set the resulting matrix to be $\hat B_n^K$.
\end{enumerate}

Notice that $\tau$ being positive in step (C) precludes us from overfitting by choosing a matrix of arbitrarily large rank. A natural question is: \emph{how does one choose the precise $\tau$ in Step (C)}? The answer is that, {any choice of $\tau$ implies a choice of rank $i_{\tau}$ (this being the rank of the optimum corresponding to $\tau$), and thus a fit value $f(i_{\tau})$. Thus one can use the the \emph{scree-plot} $i\mapsto f(i)$ as a guide to implicitly choose $\tau$, by replacing step (C) with:}
\begin{itemize}
\item[(C')] Plot the nonincreasing function $i\mapsto f(i)$, and choose a value of $i$ to be the smallest one such that $f(i)<c$, for some threshold value $c$. Then declare the corresponding optimising matrix to be the estimator $\hat L^K_n$. Again, $c$ being positive precludes us from overfitting by choosing an arbitrarily large rank.
\end{itemize}

\begin{remark}
The solution of (C') for a certain choice of $c>0$ is equivalent to the solution of (C) for a certain corresponding choice of $\tau$ (when the scree plot has a convex shape, as has been the case in all the simulations we carried out, there is an explicit relationship between $c$ and $\tau$; see the Appendix \ref{stepB}). \end{remark}

The value $c$ is in principle chosen to be small (converging to zero as $n$ increases), and corresponds to selecting a value $i$ for the rank beyond which the function $f$ levels out. This is precisely an ``elbow selection rule" as is usual with scree-plots in PCA. The analogy with traditional scree plots and PCA is, in fact, quite strong: in traditional PCA, for each $i$ one determines a rank $i$ matrix that best fits the empirical covariance, and then chooses an appropriate $i$ via a scree plot. Here, we do \emph{almost that}: for each $i$, we determine a rank $i$ matrix that best fits the band-deleted empirical covariance, and then we choose an appropriate $i$ via a scree plot. 
Particularly in our case, a clear motivation for the ``elbow" approach comes from the fact that if we could solve \ref{stepB_min} with $R^K$ instead of $R^K_n$, then we would have
$$f(i)>0 \textrm{ if } i=1,\ldots,r-1, \quad \textrm{ and } \quad f(i)=0 \textrm{ if } i\ge r.$$
The asymptotic validity of this motivation is shown in the Appendix \ref{stepB}.

Going back to Step (B), another difference with traditional PCA, is that the best rank $i$ approximation of the off-band elements of the empirical covariance cannot be determined in closed form by simple eigenanalysis. Thus, we must use approximate schemes in order to solve the optimisation problem \ref{stepB_min}. For a given value of $i$, we use the fact that any $K\times K$ positive semi-definite matrix of rank at most $i$ can be factorised as $CC\transpose$, with $C\in \R^{K\times i}$. The problem thus reduces to
\begin{eqnarray} \label{step1_min}
\min_{C \in \R^{K\times i}} & & \left\| P^{K}\circ (R^K_n-CC\transpose)\right\|^2_F,  
\end{eqnarray} 
for $i = 1,\ldots,K/4 -1$. Notice that these problems are \emph{not} convex in $C$, and we thus do not have guarantees that gradient descent-type algorithms will converge to a global optimum (of which there are multiple, since the matrix factorisation is not unique). That being said, recent theoretical progress (e.g., Chen and Wainwright \cite{wainwright}) shows that, remarkably, projected gradient descent methods with a reasonable starting point have high probability of yielding ``good" local optima in factorised matrix completion problems. In our own implementations, e.g., in our simulations in Section \ref{sec:simulations}, we solve the optimisation problem \ref{step1_min} (which can be seen as factorised matrix completion) using the function \texttt{fminunc} of the optimization toolbox in MATLAB \cite{MATLAB}, with starting point $C_0=U_i\Sigma_i^{1/2}$, where: $U\Sigma U^T$ is the singular value decomposition of $R^K_n$; $U_i$ is the $n\times i$ matrix obtained by keeping the first $i$ columns of $U$; and $\Sigma_i$ is the $i\times i$ matrix obtained by keeping the first $i$ lines and columns of $\Sigma$. This function uses a subspace trust-region method based on the interior-reflective Newton method described in \cite{coleman1} and \cite{coleman2} to perform the optimization. Though we do not use the exact same method, we are in a similar setup as Chen and Wainwright \cite{wainwright}, so we can expect to obtain ``good" local optima. Indeed, in our simulations (Section \ref{sec:simulations}), the computational method was stable and quickly converged to a reasonable local optimum.

With $\hat L^K_n$ at hand, the estimator $\hat{B}^K_n$ can be calculated as the alternated projection of $\Delta_n^K = R^K_n - \hat{L}^K_n$ onto the intersection of the convex sets of $K\times K$ banded matrices with bandwidth at most $\lceil K/4\rceil$, and of non-negative $K\times K$ matrices. While there is no closed form for this projection, we can iteratively approximate it either using iterated projections onto each of these sets (directly following the formal definition), or using Dykstra's algorithm (Boyle \& Dykstra \cite{dykstra}).

{Sample R and Matlab Code for the implementation of our methodology can be found at \url{http://smat.epfl.ch/code/FDA_MatrixCompletion.zip}}.

\section{Simulation Study}\label{sec:simulations}

In order to study the performance of our method on a broad range of setups, we consider nine general scenarios to simulate our data. For each of these scenarios, we simulate $n$ i.i.d. mean-zero functions $Y_i$ and $n$ i.i.d. mean-zero functions $W_i$ on a grid of $K$ equally spaced points on the interval $[0,1]$. From these samples of discretised curves, we calculate the matrices $L_n^K$ and $B_n^K$:
$$L^K_n(a,b) = \frac{1}{n} \sum_{i=1}^n  Y_{i}(t_a)Y_{i}(t_b) \ \textrm{    and    } \ B^K_n(a,b) = \frac{1}{n} \sum_{i=1}^n  W_{i}(t_a)W_{i}(t_b),$$
for $a,b \in \{1,\ldots,K\}$, and then set $R^K_n = L^K_n + B_n^K$.

We construct the smooth curves $Y_i$ by setting $Y_i(t_j) = \sum_{a=1}^r c_{ia} \lambda_a^{1/2}\eta_a(t_j)$, where $\lambda_1,\ldots,\lambda_r$ are positive scalars and $c_{ia}\sim N(0,1)$. We consider three different cases for the functions $\eta_1,\ldots,\eta_r$ (which are, by construction, the eigenfunctions of $\Lo$). In the first case, we take the $\{\eta_j\}_{j=1}^r$ as the first $r$ Fourier basis elements (denoted by FB in the sequel), and for the particular case $r=1$, instead of using the constant function $\eta_1(t) = 1$, we take $\eta_1(t) = \sin(2\pi t)$; in the second case, the $\{\eta_j\}_{j=1}^r$ are constructed as the Gram--Schmidt orthogonalisation of the first $r$ analytic functions (denoted by AC in the sequel) from the following list:

\smallskip
\begin{tabular}{lll}
$\eta_1(t) = 5t \sin{(2\pi t)} $, & $\eta_2(t) = t \cos{(2\pi t)} -3$, &$\eta_3(t) = 5t + \sin{(2\pi t)} -2 $,\\
$\eta_4(t) = \cos{(4\pi t)} + (t/2)^2$, &$\eta_5(t) = \frac{\Gamma(4)}{\Gamma(2)\Gamma(2)}t(1-t)$.  \\

\end{tabular}
\smallskip

\noindent Finally, in the third case, we take the $\{\eta_j\}_{j=1}^r$ as the first $r$ shifted Legendre polynomials $\tilde P_i(x)$ (denoted by LP in the sequel) defined as :

\smallskip
\begin{tabular}{lll}
$\eta_1(t) = 6t^2 -6t+1 $, & $\eta_2(t) = 2t-1$, &$\eta_3(t) = 1 $,\\
$\eta_4(t) = 20t^3 - 30t^2 +12t -1$, &$\eta_5(t) = 70t^4 -140t^3 +90t^2-20t+1$. \\

\end{tabular}
\smallskip

The rough curves $W_i$ are produced in one of the following three ways:
\begin{enumerate}
\item We set $W_i(t_j) =\sum_{a=0}^q \theta_a \varepsilon_{i,j-a}$, where $q=\lceil K\delta /2\rceil$, $\theta_0 = 1$, $\theta_1\ldots,\theta_q \in (-1,1)$ are scalars and $\varepsilon_{i,j} \stackrel{i.i.d.}{\sim} N(0,1)$ (denoted by MA in the sequel).
\item We set $W_i(t_j) = \sum_{a=1}^{d} b_{ia} \beta_a^{1/2}\psi_a(t_j)$, where $\beta_1,\ldots,\beta_d$ are positive scalars and $b_{ia}\sim N(0,1)$. The functions $\psi_a$ are triangular functions of norm 1 with support $[(a-1)\delta,a\delta]$ (denoted by TRI in the sequel).
\item We set $W_i(t_j) = \sum_{a=1}^{d} b_{ia} \beta_a^{1/2}\psi_a(t_j)$, where $\beta_1,\ldots,\beta_d$ are positive scalars and $b_{ia}\sim N(0,1)$. The functions $\psi_a$ are realisations of reflected Brownian bridges defined on $[(a-1)\delta,a\delta]$ (denoted by RBB in the sequel).
\end{enumerate}
The nine different scenarios resulting from the three possible choices for the eigenfunctions $\eta$ and the three possible choices for the rough component $W$ are summarised in Table  \ref{scenario}.
\begin{table}[h!]
\begin{tabular}{|c|c|c|c|c|c|c|c|c|c|} 
\hline
Scenarios & A & B & C &D &E&F&G&H&I\\
\hline \hline
$Y_i$ & FB & AC &LP& FB & AC &LP& FB & AC&LP  \\
\hline
$W_i$ & MA & MA & MA& TRI & TRI & TRI& RBB & RBB & RBB \\
\hline 
\end{tabular}
\caption{Scenarios for the simulation study.}
\label{scenario}
\end{table}

\noindent For each scenario, we consider $6$ different combinations of the rank and bandwidth parameters $r$ and $\delta$, as given in the Table \ref{parameter}. 
\begin{table}[h!]
\begin{tabular}{|c|c|c|c|c|c|c|c|c|c|c|}
\hline
Combination & 1 & 2 & 3 &4 & 5 & 6 \\
\hline \hline
r & 1 & 1 & 3 & 3 & 5 & 5   \\
\hline
$\delta$& 0.05 & 0.1 & 0.05 & 0.1  & 0.05 & 0.1  \\
\hline
\end{tabular}
\caption{Different values of the rank and bandwidth parameter.}
\label{parameter}
\end{table}

Finally, we also consider two different regimes for the choice of the eigenvalues $\lambda_1<\ldots<\lambda_r$ of $\Lo$ and $\beta_1<\ldots<\beta_d$ of $\B$; the first one can be seen as the easy case where there is a clear ordering distinction between the two sets, that is, $\lambda_r\gg \beta_1$ (regime 1); the second one is the interlaced case, when $\lambda_r < \beta_1 < \lambda_{r-1}$ (regime 2). In regime 1, the $r$ eigenvalues $\lambda$ are equally spaced between $\lambda_1 = 1.45$ and $\lambda_r=0.25$, and we use $\lambda_1 = 0.25$ for $r=1$. In regime 2, the eigenvalues $\{\lambda_1,\ldots,\lambda_r\}$ are equally spaced between $\lambda_1 = 1$ and $\lambda_r=0.04$. In either regime, the rough processes are simulated with $\beta_1=0.09$. The remaining eigenvalues for the scenarios (TRI) or (RBB) are smaller than $0.04$ and decreasing toward zero, while those for the scenario (MA) are slowly decreasing toward zero, yielding a challenging situation in regime 2, since in this case there is more than one eigenvalue of the rough process that exceeds the smallest eigenvalue of the smooth process. For each combination $(r,\delta)$ with $r>1$ of Table \ref{parameter}, we consider each of the two regimes and for the particular case $r=1$, we consider only regime 1. In total, we consider 10 different cases in each one of the nine simulation scenarios.

Our simulation study is divided into two parts. We first illustrate how the scree plots used to select the rank $r$ of the operator $\Lo$ behave for the different scenarios. These show that using the scree plot as a basis for selection can be a very reasonable approach. We then compare our estimator $\hat L_n^K$ of $L_n^K$ to the one obtained by three other methods: a direct use of a truncated Karhunen--Lo\`eve expansion; the spline smoothing approach popularised by Ramsay and Silverman \cite{ramsay-silver}; and the PACE method of Yao et al. \cite{PACE}. We also construct the estimated predictors $\hat Y^K_n$ of $Y^K$ for a subset of the scenarios in order to probe their predictive accuracy. In doing this, we use the true rank of $\Lo$, as the simulations are computationally very intensive, and it would be infeasible to use an automatic selection method (and of course, it would be impossible to make a choice based on inspection of scree plots for all replications). Note that for the rest of this section we consider the maximal bandwidth of $B^K$ to be $10$ instead of $K/4=25$ (without emphasising it by a new notation), since one would rarely expect a rough process to have such a long memory, and since using a smaller maximal bandwidth value gives more stable and accurate numerical results. We have also carried out a simulation study to probe the performance of the estimators $\hat L_n^K, \hat B_n^K$ and $\hat Y^K_n$ when the data are corrupted by measurement errors and/or high frequency noise. The results can be found in the Appendix \ref{further_simulations}, and are qualitatively very similar to those presented in the main text.

\subsection{Rank Selection}\label{sec:rank_simulations}

In order to probe the appropriateness of using a scree-type plot in order to estimate the rank $r$ of the operator $\Lo$, we ran simulations on one sample of each scenario, each combination of the parameters $r$ and $\delta$ and both regimes (for a total of $9\times 6 + 9 \times 4 = 90$ simulations). As explained in Section \ref{sec:optimisation}, we plot the function $f(i) = \| P^{K}\circ (R^K_n-\hat{C}_i\hat{C}\transpose_i)\|^2_F$, where $\hat{C}_i \in \R^{K\times i} $ is the minimiser of the optimisation problem \ref{step1_min}, and then we select the rank $j$ beyond which $f(j)$ levels out, that is, beyond which no meaningful reduction to the objective function is achieved. In practice we evaluate the function $f$ over $i=1,\ldots,10$ and not over $1,\ldots,K/4-1=24$ as mentioned in the theory since the procedure is quite computationally intensive; it is clear from the resulting plots that this is not restrictive. The results are presented by scenario and by regime in Figure \ref{find_rank}. Since the functions $f$ are not on the same scale for every regime and every combination, we plotted a normalised version of $f$ given by $ f(i)/\|P^K\circ R^K_n \|_F^2$. For each scenario, the function $f$ for the samples generated with $r=5$ are in black, the ones generated with $r=3$ are in red and the ones generated with $r=1$ in blue. The dotted vertical lines indicate the location of the true rank, that is, $5$ (in black), $3$ (in red) and $1$ (in blue). The figure reveals that for most of the scenarios, we would select the rank quite accurately in regime 1 and we would underestimate it a little bit in regime 2. In further simulations (reported in the Appendix \ref{further_simulations}) we study the effect of rank misspecification. It seems that underestimation is quite impactful in Regime 1 (noninterlaced eigenvalues) and that overestimation does not have a severe impact in both regimes, which suggests that one should not hesitate to over-estimate the rank relative to what the scree-plot indicates.

\begin{figure}[p]
\centering
\begin{tabular}{ccc}
\includegraphics[scale=0.25]{./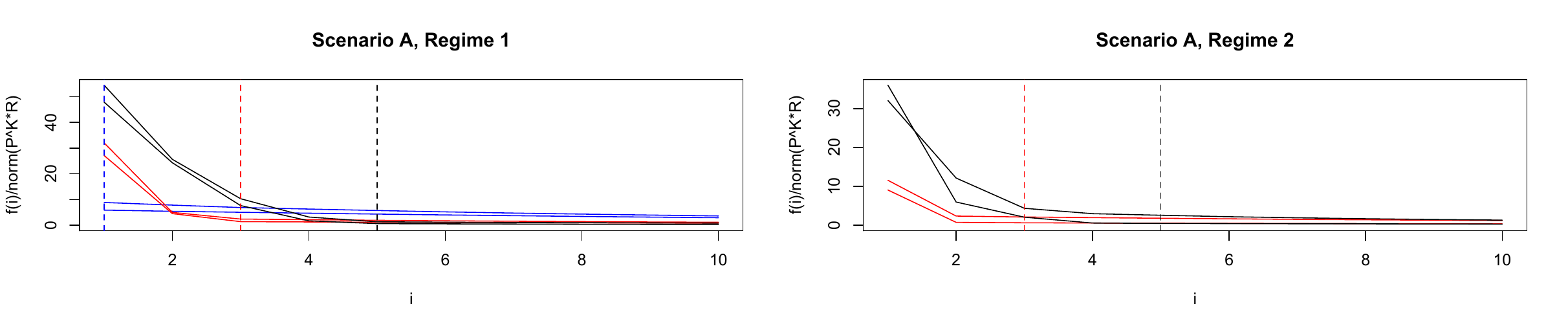} \\ \includegraphics[scale=0.25]{./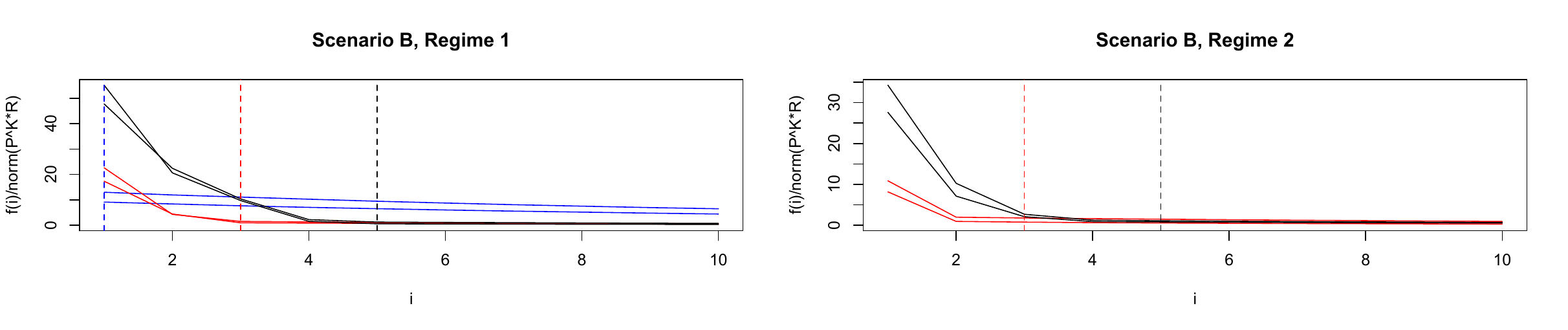} \\
\includegraphics[scale=0.25]{./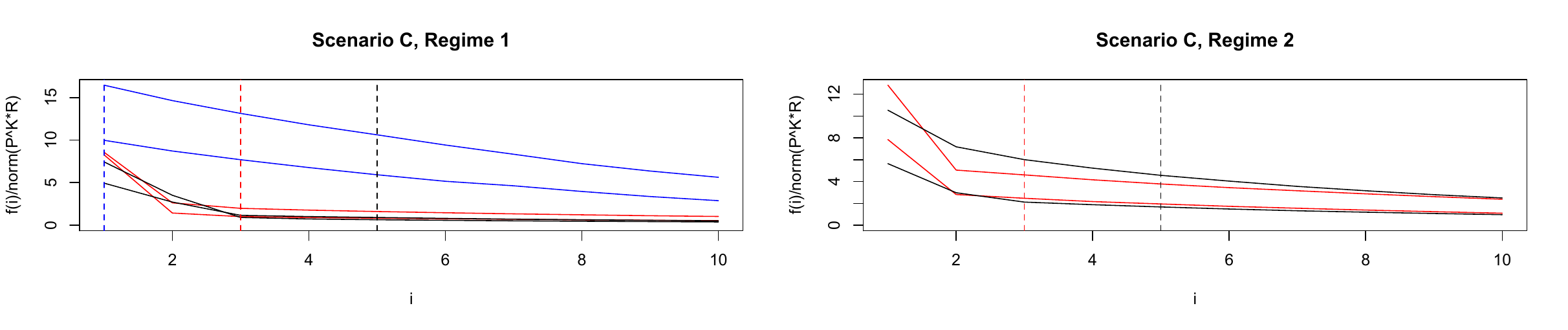} \\
\includegraphics[scale=0.25]{./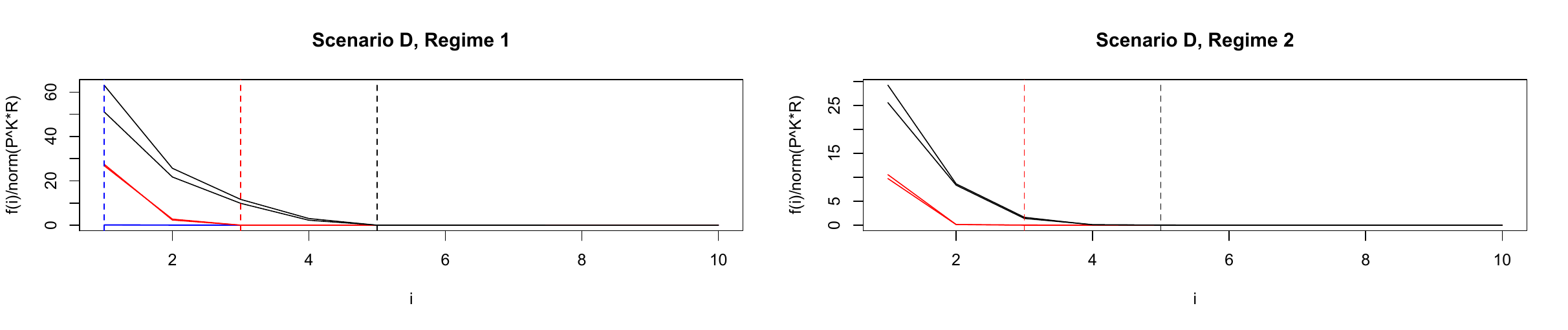} \\
\includegraphics[scale=0.25]{./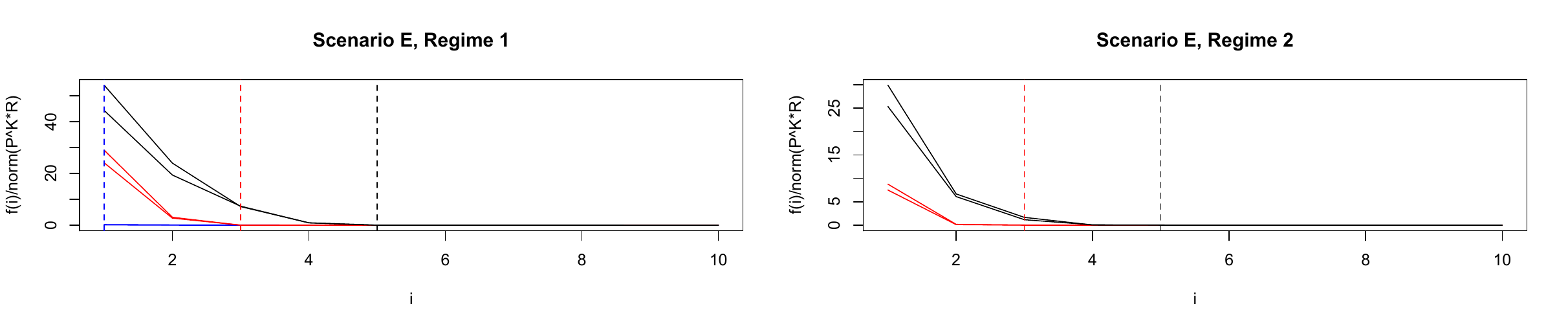} \\ \includegraphics[scale=0.25]{./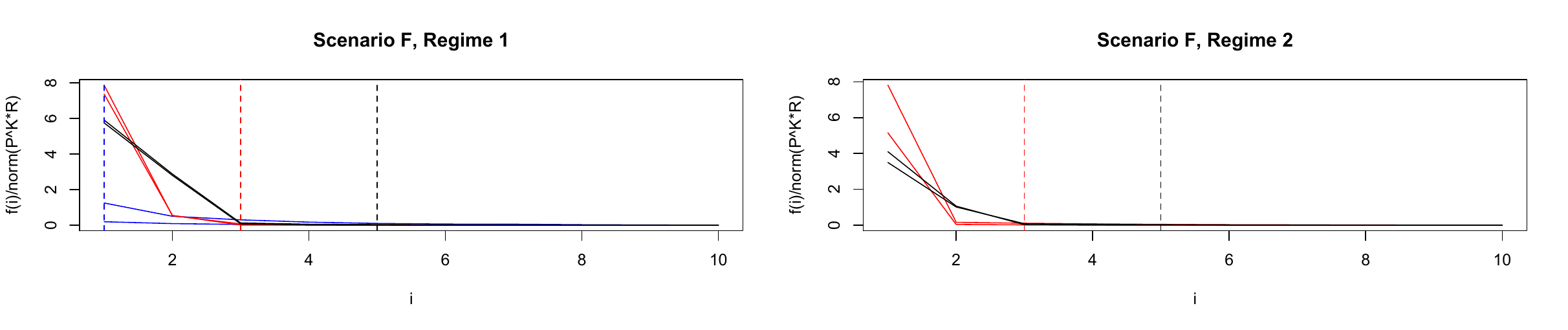} \\
\includegraphics[scale=0.25]{./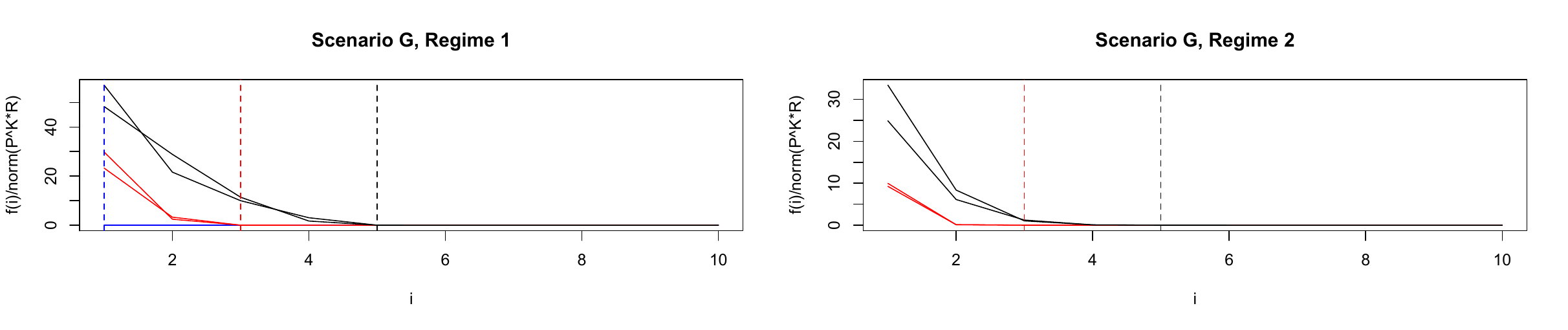} \\
\includegraphics[scale=0.25]{./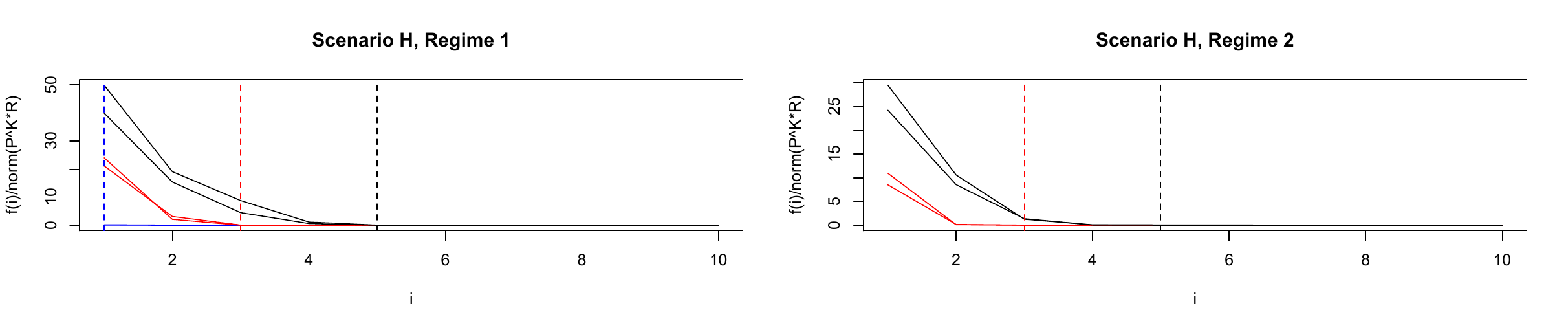} \\
 \includegraphics[scale=0.25]{./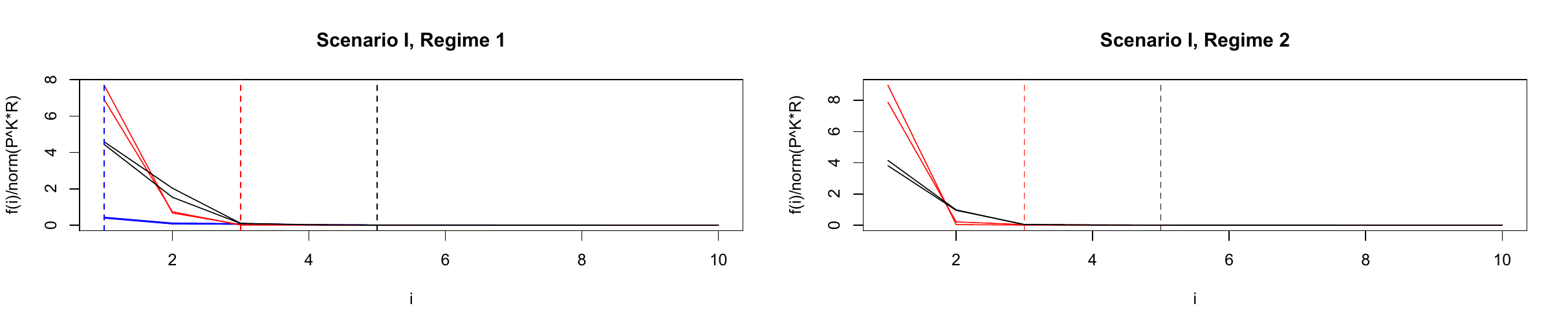} \\
\end{tabular}
\caption{Plots of the function $f(\cdot)$ (defined in Section \ref{sec:optimisation}) normalised by $\|P^K\circ R^K_n \|_F^2$ for a given scenario, a given combination of parameters and a given regime. The curves in black correspond to a setting with $r=5$, those in red to a setting with $r=3$ and those in blue to a setting with $r=1$.}
\label{find_rank} 
\end{figure}

\subsection{Comparisons}\label{sec:comparison_simulations}
We investigate the performance of our estimator of $L^K_n$, alongside the three following methods:
\begin{enumerate}
\item The spline smoothing approach, popularised by Ramsay and Silverman \cite{ramsay-silver}: compute $\widetilde X_i$, the smooth version of the observed curves $X_i$, by using B-spline smoothing; then define the estimator of $L_n^K$ as $\hat L^K_{RS}(a,b) = \frac{1}{n} \sum_{i=1}^n  \widetilde X_{i}(t_a)\widetilde X_{i}(t_b)$;
\item The PACE method (Yao et al. \cite{PACE}) described in Section \ref{intro}: the estimator of $L_n^K$ is given by $\hat L^K_{PACE}(a,b) = \widetilde\rho(t_a,t_b)$. Of course it must be noted that PACE was primarily introduced for the sparse sampling case, but it can still be used in a dense setting.

\item Truncation of the empirical Karhunen--Lo\`eve (KL) expansion: we derive the spectral decomposition of $R_n^K$, and the estimator of $L_n^K$ is simply equal to a spectrally truncated version thereof, at a level $rk$, where $rk$ is chosen such that the variance explained is at least $95\%$.
\end{enumerate}

For every choice of scenario (A--I), rank/bandwidth combination (1--6), and eigenvalue regime (regime 1 or regime 2), we simulate 100 replications for a sample size of $n=300$ on a grid of $K=100$ points. Results for different values of $n$ and $K$ can be found in the Appendix \ref{further_simulations}. For each replicate, we determine the estimators given by the four different methods, and calculate their normalised error, by evaluating the function $\textrm{Err}(u) = (\|u-L_n^K\|_F)/\|L_n^K\|_F$ at every one of these estimators. We then form the ratio between our method's relative error (in the denominator) and the relative error of each of the three other methods (in the numerator). Consequently, we calculate $3\times 100$ ratios per simulation regime. Their corresponding first quartiles, medians and third quartiles are presented in Table \ref{table_median_reg1} (regime 1) and in Table \ref{table_median_reg2} (regime 2), where those medians exceeding $1$ have been highlighted in bold. These indicate settings where our approach typically performs comparably or at least as well any as the approach it is being compared to. %Corresponding boxplots are provided in the Supplementary Material \cite[Section 7.4]{supp}, allowing for a finer appreciation of the distribution of relative errors.

Of course, one cannot expect there to be a uniformly best method (for instance, the KL expansion is expected to perform best when all the eigenfunctions are approximately mutually orthogonal and the eigenvalues are not interlaced). That being said, Tables \ref{table_median_reg1} and \ref{table_median_reg2} reveal that our method has a performance that is typically better than or comparable to that of the best competitor in all but one scenarios/combinations. The exceptional case corresponds to a situation where the smooth curves were generated with the first $5$ Legendre polynomials. In this particular setup, our optimisation problem was quite unstable due to the particular shape of the matrix $L_n^K$ -- it had very high values on the band relative to values outside the band, rendering matrix completion difficult. Consequently, some of the replications returned estimators that where completely off, as is indicated in the table by the small values of the first quartile for the scenarios C, F and I with $r=5$. Of course, all the results need to be taken with a grain of salt, as we make use of the true rank when constructing our estimator, which in practice is unknown and must be selected (and of course, the methods to which we compare also involve the choice of tuning parameters, depending on which their performance may vary). These comparisons should thus be viewed as a benchmark, rather than a claim to superiority, as we compare to methods not specifically tailored for the problem at hand.

In practice, it may of course be that the rough component is indeed pure noise. In order to check whether our method performs comparably well with the other methods in this more classical setup, we additionally consider a scenario where the smooth curves are generated using a Fourier basis and the rough curves are discrete white noise. In this situation, the matrix $B^K$ representing the discretised kernel $b$ is precisely diagonal instead of just banded. The results are presented in the Table \ref{table_median_secanrioH}. Surprisingly, it appears that our method performs equally well or better than all other methods in all scenarios considered. A likely explanation is that, even when the process $W$ has a diagonal kernel, its finite sample empirical kernel will not be exactly diagonal, but banded (since some empirical correlations will exist).

\begin{table}
\begin{tabular}{|c|c|c|c|c|}
\hline
 \multicolumn{5}{|c|}{Regime 1}  \\
\hline
Scenario & (rk,$\delta$) & PACE & KL & RS \\
\hline
\multirow{6}{*}{A} &  $(1, 0.05)$ &${\bf4.01}$ $(2.51,6.46)$ &${\bf2.87}$ $(1.93,4.18)$ &${\bf4.15}$ $(3.59,5.16)$ \\
& $(1, 0.10)$ &${\bf4.44}$ $(2.21,7.83)$&${\bf3.40}$ $(2.01,5.48)$ &${\bf4.92}$ $(3.79,6.03)$ \\
& $(3, 0.05)$ & ${\bf3.19}$ $(2.31,4.60)$& ${\bf2.89}$ $(2.19,3.73)$&${\bf3.02}$ $(2.59,3.46)$ \\
& $(3, 0.10)$ &${\bf3.10}$ $(2.13,4.50)$&${\bf2.75}$ $(1.89,3.97)$ &${\bf2.89}$ $(2.40,3.32)$ \\
& $(5, 0.05)$ &${\bf2.58}$ $(2.07,3.26)$&${\bf2.41}$ $(2.04,2.92)$ &${\bf2.04}$ $(1.81,2.33)$ \\
& $(5, 0.10)$ &${\bf2.20}$ $(1.79,2.86)$&${\bf2.10}$ $(1.71,2.60)$ &${\bf1.87}$ $(1.60,2.08)$ \\
\hline
\multirow{6}{*}{B} &  $(1, 0.05)$ &${\bf3.95}$ $(2.05,5.80)$ &${\bf3.09}$ $(1.79,4.46)$ &${\bf4.30}$ $(3.51,5.02)$ \\
& $(1, 0.10)$ &${\bf3.54}$ $(1.83,6.12)$&${\bf2.55}$ $(1.55,4.83)$ &${\bf4.18}$ $(3.44,5.08)$ \\
& $(3, 0.05)$ & ${\bf2.93}$ $(2.55,4.11)$& ${\bf2.85}$ $(2.36,3.55)$&${\bf2.72}$ $(2.37,3.03)$ \\
& $(3, 0.10)$ &${\bf3.16}$ $(2.49,4.14)$&${\bf2.74}$ $(2.22,3.46)$ &${\bf2.71}$ $(2.43,3.11)$ \\
& $(5, 0.05)$ &${\bf1.91}$ $(1.49,2.83)$&${\bf1.84}$ $(1.48,2.38)$ &${\bf1.49}$ $(1.23,1.72)$ \\
& $(5, 0.10)$ &${\bf1.62}$ $(1.28,2.20)$&${\bf1.57}$ $(1.25,2.03)$ &${\bf1.35}$ $(1.07,1.61)$ \\
\hline
\multirow{6}{*}{C} &    $(1, 0.05)$ &${\bf2.22}$ $(0.87,4.20)$ &${\bf1.05}$ $(0.49,2.27)$ &${\bf2.82}$ $(2.17,3.71)$ \\
& $(1, 0.10)$ &${\bf1.34}$ $(0.71,3.02)$&$0.63$ $(3.38,1.95)$ &${\bf2.23}$ $(1.01,3.78)$ \\
& $(3, 0.05)$ & ${\bf2.08}$ $(1.58,2.90)$& ${\bf1.73}$ $(1.28,2.27)$&${\bf2.19}$ $(1.78,2.59)$ \\
& $(3, 0.10)$ &${\bf1.52}$ $(1.08,2.36)$&${\bf1.33}$ $(0.79,2.01)$ &${\bf1.95}$ $(1.33,2.45)$ \\
& $(5, 0.05)$ &$0.43$ $(0.37,0.55)$&$0.5$ $(0.48,0.75)$ &$0.42$ $(0.28,0.74)$ \\
& $(5, 0.10)$ &$0.49$ $(0.40,0.70)$&$0.51$ $(0.48,0.69)$ &$0.44$ $(0.28,0.74)$ \\
\hline
\multirow{6}{*}{D} &  $(1, 0.05)$ &${\bf11.7}$ $(9.89,12.8)$&${\bf11.7}$ $(9.89,12.8)$&${\bf10.5}$  $(8.77,11.6)$\\
& $(1, 0.10)$ &${\bf21.0}$ $(18.3,26.5)$&${\bf21.9}$ $(18.2,26.4)$&${\bf16.1}$ $(13.4,19.3)$\\
& $(3, 0.05)$ & ${\bf6.83}$ $(5.98,7.41)$&${\bf6.66}$ $(5.85,7.33)$&${\bf5.00}$ $(5.21,6.46)$\\
& $(3, 0.10)$ &${\bf11.2}$ $(9.62,12.9)$&${\bf10.8}$ $(9.10,12.4)$&${\bf8.80}$ $(7.34,10.0)$\\
& $(5, 0.05)$ &${\bf4.51}$ $(3.91,5.18)$&${\bf4.27}$ $(3.68,4.95)$ &${\bf3.92}$ $(3.38,4.52)$ \\
& $(5, 0.10)$ &${\bf7.50}$ $(6.20,8.65)$&${\bf7.11}$ $(5.65,8.24)$ &${\bf5.94}$ $(4.88,6.74)$ \\
\hline
\multirow{6}{*}{E}  & $(1, 0.05)$ &${\bf7.77}$ $(6.97,9.13)$ &${\bf7.76}$ $(6.97,9.12)$ &${\bf7.03}$ $(6.17,8.01)$ \\
& $(1, 0.10)$ &${\bf15.1}$ $(12.6,18.0)$&${\bf15.0}$ $(12.6,18.0)$ &${\bf11.0}$ $(9.41,13.4)$ \\
& $(3, 0.05)$ & ${\bf5.55}$ $(5.05,6.31)$& ${\bf5.73}$ $(5.15,6.61)$&${\bf4.88}$ $(4.45,5.60)$ \\
& $(3, 0.10)$ &${\bf9.15}$ $(7.81,10.7)$&${\bf9.36}$ $(8.00,11.0)$ &${\bf7.08}$ $(5.98,8.25)$ \\
& $(5, 0.05)$ &${\bf2.83}$ $(2.26,3.62)$&${\bf3.03}$ $(2.39,3.95)$ &${\bf2.54}$ $(1.95,3.12)$ \\
& $(5, 0.10)$ &${\bf5.40}$ $(4.31,6.71)$&${\bf5.55}$ $(4.56,7.09)$ &${\bf4.30}$ $(3.34,5.30)$ \\
\hline
\multirow{6}{*}{F} &  $(1, 0.05)$ &${\bf8.91}$ $(7.56,10.2)$&${\bf9.05}$ $(7.69,10.3)$&${\bf7.78}$  $(6.77,9.08)$\\
& $(1, 0.10)$ &${\bf18.2}$ $(14.6,24.5)$&${\bf18.3}$ $(14.7,24.6)$&${\bf13.3}$ $(10.9,17.9)$\\
& $(3, 0.05)$ & ${\bf5.43}$ $(4.58,6.31)$&${\bf5.67}$ $(4.82,6.67)$&${\bf4.69}$ $(3.89,5.51)$\\
& $(3, 0.10)$ &${\bf9.84}$ $(8.83,11.2)$&${\bf10.2}$ $(9.12,11.5)$&${\bf7.47}$ $(6.51,8.43)$\\
& $(5, 0.05)$ &$0.51$ $(0.18,0.86)$&$0.52$ $(0.19,0.91)$ &$0.44$ $(0.15,0.72)$ \\
& $(5, 0.10)$ &${\bf1.03}$ $(0.47,2.11)$&${\bf1.07}$ $(0.49,2.20)$ &$0.73$ $(0.36,1.57)$ \\
\hline
\multirow{6}{*}{G}  & $(1, 0.05)$ &${\bf13.5}$ $(10.2,17.0)$ &${\bf13.4}$ $(10.2,16.8)$ &${\bf12.1}$ $(9.43,15.0)$ \\
& $(1, 0.10)$ &${\bf17.2}$ $(13.0,24.6)$&${\bf17.2}$ $(13.0,25.0)$ &${\bf15.6}$ $(11.6,20.9)$ \\
& $(3, 0.05)$ & ${\bf9.78}$ $(8.17,11.8)$& ${\bf9.21}$ $(7.38,11.2)$&${\bf7.93}$ $(6.97,9.71)$ \\
& $(3, 0.10)$ &${\bf9.76}$ $(7.94,12.2)$&${\bf9.34}$ $(7.58,12.2)$ &${\bf8.64}$ $(7.00,10.7)$ \\
& $(5, 0.05)$ &${\bf7.05}$ $(6.07,8.36)$&${\bf7.15}$ $(5.93,8.67)$ &${\bf5.64}$ $(4.85,7.23)$ \\
& $(5, 0.10)$ &${\bf6.93}$ $(5.68,8.23)$&${\bf6.44}$ $(5.37,8.03)$ &${\bf6.00}$ $(5.04,7.46)$ \\
\hline
\multirow{6}{*}{H} &  $(1, 0.05)$ &${\bf11.0}$ $(8.29,13.8)$&${\bf10.9}$ $(8.49,13.8)$&${\bf9.29}$  $(7.66,11.8)$\\
& $(1, 0.10)$ &${\bf14.2}$ $(10.4,18.2)$&${\bf14.2}$ $(10.5,18.2)$&${\bf11.7}$ $(8.96,16.0)$\\
& $(3, 0.05)$ & ${\bf7.76}$ $(6.74,9.89)$&${\bf8.72}$ $(7.00,10.2)$&${\bf6.89}$ $(5.65,7.85)$\\
& $(3, 0.10)$ &${\bf8.67}$ $(6.83,11.2)$&${\bf8.63}$ $(6.88,11.3)$&${\bf7.95}$ $(6.19,10.2)$\\
& $(5, 0.05)$ &${\bf4.80}$ $(3.41,6.20)$&${\bf6.01}$ $(4.49,8.14)$ &${\bf4.03}$ $(2.94,5.44)$ \\
& $(5, 0.10)$ &${\bf5.36}$ $(3.82,6.89)$&${\bf5.60}$ $(3.89,7.17)$ &${\bf4.67}$ $(3.38,5.95)$ \\
\hline
\multirow{6}{*}{I} &  $(1, 0.05)$ &${\bf11.1}$ $(9.31,13.7)$&${\bf11.7}$ $(9.68,14.2)$&${\bf9.87}$  $(8.21,12.4)$\\
& $(1, 0.10)$ &${\bf16.0}$ $(11.4,20.6)$&${\bf16.2}$ $(11.5,20.7)$&${\bf13.8}$ $(9.87,17.4)$\\
& $(3, 0.05)$ & ${\bf7.13}$ $(6.00,9.25)$&${\bf7.61}$ $(6.49,10.0)$&${\bf6.03}$ $(5.21,7.29)$\\
& $(3, 0.10)$ &${\bf7.72}$ $(6.29,9.58)$&${\bf8.17}$ $(6.49,9.99)$&${\bf6.76}$ $(5.46,8.43)$\\
& $(5, 0.05)$ &${\bf1.06}$ $(0.65,1.53)$&${\bf1.33}$ $(0.72,1.92)$ &$0.88$ $(0.53,1.27)$ \\
& $(5, 0.10)$ &$0.94$ $(0.18,1.77)$&$0.99$ $(0.19,1.82)$ &$0.78$ $(0.15,1.54)$ \\
\hline
\end{tabular}
\caption{Table containing the median (the first and third quartiles are in parentheses) of the ratios for the three methods we compared our method with and for the 9 scenarios we considered with the regime 1. We highlight in bold the medians that exceed $1$.}
\label{table_median_reg1}
\end{table}
\begin{table}
\begin{tabular}{|c|c|c|c|c|}
\hline
 \multicolumn{5}{|c|}{Regime 2} \\
\hline
Scenario & Combination & PACE & KL & RS \\
\hline
\multirow{4}{*}{A} &  $(3, 0.05)$  & ${\bf1.84}$ $(1.16,2.54)$ &${\bf1.87}$ $(1.09,2.79)$&${\bf2.26}$ $(1.28,2.86)$\\
& $(3, 0.10)$ & ${\bf1.20}$ $(0.95,1.87)$&$0.98$ $(0.83,1.89)$ &${\bf1.14}$ $(0.78,2.17)$\\
& $(5, 0.05)$  & ${\bf1.06}$ $(0.87,1.61)$& $0.96$ $(0.86,1.72)$&${\bf1.08}$ $(0.62,1.76)$\\
& $(5, 0.10)$  &${\bf1.01}$ $(0.84,1.24)$ &$0.93$ $(0.82,1.25)$ &$0.91$ $(0.63,1.22)$\\
\hline
\multirow{4}{*}{B} &  $(3, 0.05)$  & ${\bf2.11}$ $(1.29,2.90)$ &${\bf2.22}$ $(1.22,2.95)$&${\bf2.06}$ $(1.30,2.65)$\\
& $(3, 0.10)$ & ${\bf1.32}$ $(1.05,1.78)$&${\bf1.10}$ $(0.91,1.73)$ &${\bf1.26}$ $(0.73,2.24)$\\
& $(5, 0.05)$  & $0.94$ $(0.82,1.10)$& $0.89$ $(0.80,1.04)$&$0.75$ $(0.44,1.07)$\\
& $(5, 0.10)$  &${\bf1.04}$ $(0.87,1.24)$ &$0.94$ $(0.80,1.17)$ &$0.90$ $(0.60,1.33)$\\
\hline
\multirow{4}{*}{C} &  $(3, 0.05)$  & ${\bf1.18}$ $(0.88,1.61)$ &$0.80$ $(0.64,1.44)$&${\bf1.25}$ $(0.93,2.02)$\\
& $(3, 0.10)$ & ${\bf1.15}$ $(0.85,1.62)$&$0.72$ $(0.58,1.53)$ &${\bf1.35}$ $(0.83,1.91)$\\
& $(5, 0.05)$  & $0.68$ $(0.54,0.89)$& $0.53$ $(0.48,0.71)$&$0.79$ $(0.52,1.32)$\\
& $(5, 0.10)$  &$0.74$ $(0.54,1.03)$ &$0.56$ $(0.47,1.04)$ &$0.77$ $(0.58,1.26)$\\
\hline
\multirow{4}{*}{D} &  $(3, 0.05)$  & ${\bf5.70}$ $(5.06,6.62)$& ${\bf5.59}$ $(5.03,6.65)$&${\bf4.93}$ $(4.42,5.73)$\\
& $(3, 0.10)$  & ${\bf10.7}$ $(8.66,12.2)$&${\bf10.5}$ $(8.48,12.2)$&${\bf8.03}$ $(6.39,9.37)$\\
& $(5, 0.05)$  &${\bf3.58}$ $(3.10,4.18)$&${\bf3.48}$ $(3.05,4.03)$&${\bf3.08}$ $(2.73,3.59)$\\
& $(5, 0.10)$  & ${\bf6.81}$ $(5.64,8.09)$&${\bf6.63}$ $(5.54,7.72)$&${\bf5.27}$ $(4.23,6.17)$\\
\hline
\multirow{4}{*}{E} &  $(3, 0.05)$  & ${\bf4.60}$ $(3.89,5.43)$ &${\bf4.66}$ $(3.96,5.45)$&${\bf4.16}$ $(3.60,4.81)$\\
& $(3, 0.10)$ & ${\bf8.59}$ $(6.96,10.2)$&${\bf8.65}$ $(7.00,10.2)$ &${\bf6.51}$ $(5.22,7.80)$\\
& $(5, 0.05)$  & ${\bf2.09}$ $(1.11,2.76)$& ${\bf2.14}$ $(1.13,2.82)$&${\bf1.84}$ $(0.94,2.45)$\\
& $(5, 0.10)$  &${\bf3.96}$ $(3.15,5.46)$ &${\bf4.24}$ $(3.33,5.72)$ &${\bf3.12}$ $(2.42,4.27)$\\
\hline
\multirow{4}{*}{F} &  $(3, 0.05)$  & ${\bf1.13}$ $(0.06,2.74)$ &${\bf1.17}$ $(0.07,2.83)$&$0.99$ $(0.06,2.47)$\\
& $(3, 0.10)$ & ${\bf3.45}$ $(0.16,7.03)$&${\bf3.55}$ $(0.16,7.20)$ &${\bf2.61}$ $(0.11,5.21)$\\
& $(5, 0.05)$  & $0.78$ $(0.07,1.43)$& $0.81$ $(0.07,1.50)$&$0.66$ $(0.06,1.27)$\\
& $(5, 0.10)$  &$0.70$ $(0.09,2.85)$ &$0.71$ $(0.09,2.95)$ &$0.52$ $(0.07,2.13)$\\
\hline
\multirow{4}{*}{G} &  $(3, 0.05)$  & ${\bf7.87}$ $(6.60,9.69)$ &${\bf7.31}$ $(6.22,9.55)$&${\bf6.56}$ $(5.56,8.07)$\\
& $(3, 0.10)$ & ${\bf8.05}$ $(6.46,9.91)$&${\bf8.02}$ $(6.41,9.92)$ &${\bf7.03}$ $(5.58,9.10)$\\
& $(5, 0.05)$  & ${\bf5.73}$ $(4.73,6.52)$& ${\bf7.03}$ $(5.95,8.53)$&${\bf4.94}$ $(3.92,5.68)$\\
& $(5, 0.10)$  &${\bf5.87}$ $(4.77,7.88)$ &${\bf5.75}$ $(4.69,7.92)$ &${\bf5.30}$ $(4.35,7.00)$\\
\hline
\multirow{4}{*}{H} &  $(3, 0.05)$  & ${\bf7.10}$ $(6.07,8.22)$ &${\bf6.99}$ $(5.73,8.16)$&${\bf6.06}$ $(5.13,7.17)$\\
& $(3, 0.10)$ & ${\bf7.51}$ $(6.03,9.43)$&${\bf7.61}$ $(6.09,9.53)$ &${\bf6.74}$ $(5.63,8.19)$\\
& $(5, 0.05)$  & ${\bf3.84}$ $(3.16,4.91)$& ${\bf5.26}$ $(4.11,6.90)$&${\bf3.40}$ $(2.64,4.14)$\\
& $(5, 0.10)$  &${\bf3.89}$ $(1.76,5.46)$ &${\bf4.30}$ $(1.82,5.84)$ &${\bf3.53}$ $(1.47,5.02)$\\
\hline
\multirow{4}{*}{I} &  $(3, 0.05)$  & ${\bf4.94}$ $(3.27,6.13)$ &${\bf5.32}$ $(3.48,6.54)$&${\bf4.41}$ $(3.12,5.30)$\\
& $(3, 0.10)$ & ${\bf3.11}$ $(0.20,6.11)$&${\bf3.16}$ $(0.20,6.24)$ &${\bf2.87}$ $(0.17,5.12)$\\
& $(5, 0.05)$  & $0.59$ $(0.06,1.47)$& $0.67$ $(0.07,1.58)$&$0.49$ $(0.05,1.24)$\\
& $(5, 0.10)$  &${\bf1.16}$ $(0.14,2.54)$ &${\bf1.20}$ $(0.15,2.60)$ &${\bf1.02}$ $(0.11,2.38)$\\
\hline
\end{tabular}
\caption{Table containing the median (the first and third quartiles are in parentheses) of the ratios for the three methods we compared our method with and for the 9 scenarios we considered with the regime 2. We highlight in bold the medians that exceed $1$.}
\label{table_median_reg2}
\end{table}

\begin{table}
\begin{tabular}{|c|c|c|c|}
\hline
  \multicolumn{4}{|c|}{Regime 1}  \\
\hline
$r$  & PACE & KL & RS\\
\hline
  1 &${\bf1.92}$ $(1.69,2.16)$&${\bf1.76}$ $(1.53,2.05)$& ${\bf4.08}$ $(3.77,4.33)$\\
3 &${\bf2.90}$ $(2.58,3.16)$&${\bf3.02}$ $(2.66,3.28)$&${\bf3.36}$ $(3.05,3.53)$\\
5 &${\bf2.80}$ $(2.61,3.01)$&${\bf2.78}$ $(2.56,3.02)$&${\bf2.40}$ $(2.24,2.65)$\\
\hline
    \multicolumn{4}{|c|}{Regime 2} \\
\hline
$r$  & PACE & KL & RS\\
\hline
  3 & ${\bf1.63}$ $(1.45,1.76)$&${\bf2.01}$ $(1.85,2.19)$&${\bf2.36}$ $(2.22,2.57)$\\
5 &${\bf 1.28}$ $(1.16,1.37)$&${\bf1.48}$ $(1.36,1.61)$&${\bf1.64}$ $(1.45,1.75)$\\
\hline

\end{tabular}
\caption{Table containing the median (the first and third quartiles are in parentheses) of the ratios for the three methods we compared our method with for the classical scenario where the rough component is a white noise. We highlight in bold the results that exceed $1$.}
\label{table_median_secanrioH}
\end{table}

\subsection{Prediction of the smooth curves}

We selected $6$ different cases in order to probe the performance of our estimated predictor $\hat Y^K_n$ as a proxy for the true predictor $\Pi(X^K)$. We considered, for both regimes, combination 5 of scenario A, combination 4 of scenarios F and combination 6 of scenario H. For every sample, we calculated the average of the approximation of the normalised mean integrated squared error of $\hat Y^K_n$:
$$ \textrm{relMISE} = \frac{1}{n}\sum_{i=1}^n\frac{\sum_{j=1}^K [\hat Y_{n,i}^K(t_j)-\Pi(X_i^K)(t_j)]^2}{\sum_{j=1}^K[\Pi(X_i^K)(t_j)]^2}. $$
Figure \ref{prediction} contains boxplots of their distributions. These illustrate that, as expected, our predictions perform better when the eigenvalues of $\Lo$ and $\B$ are not interlaced.

\begin{figure}[t]
\centering
\begin{tabular}{cc}
\includegraphics[scale=0.30]{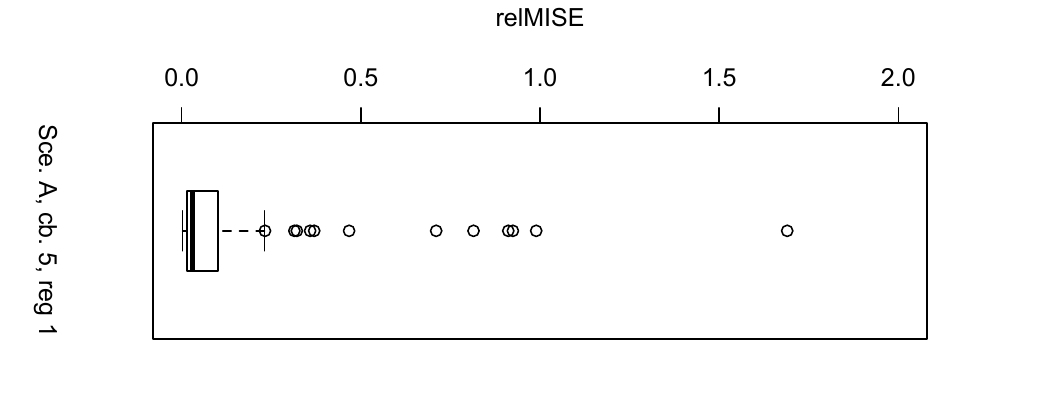}& \includegraphics[scale=0.3]{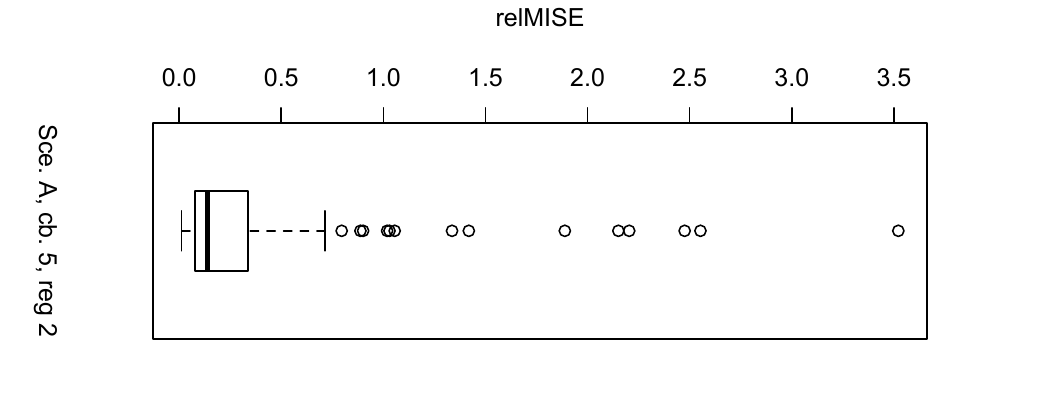} \\
\includegraphics[scale=0.3]{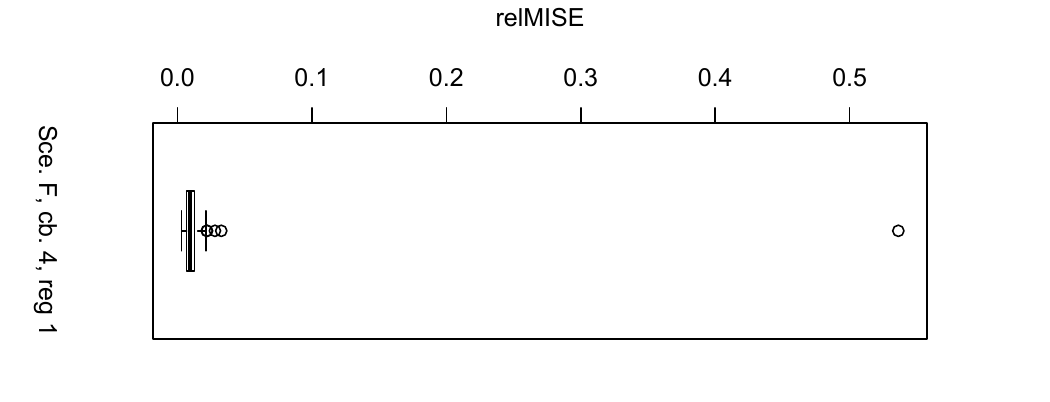} & \includegraphics[scale=0.3]{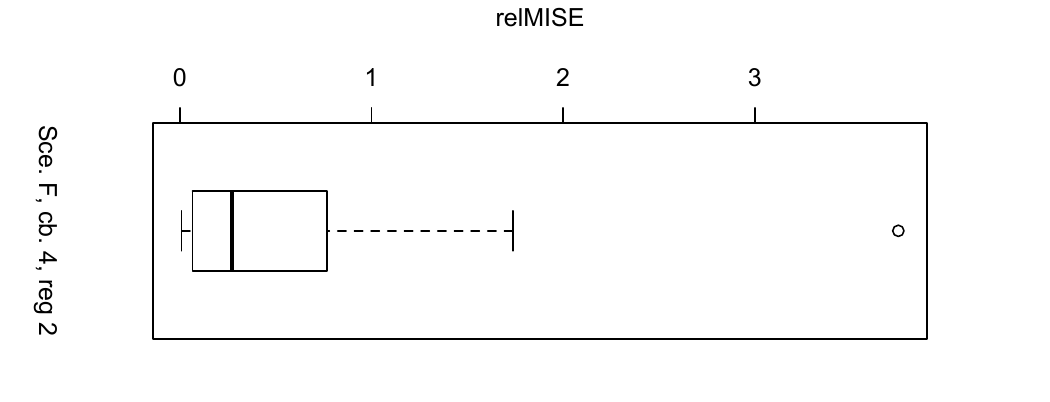} \\
\includegraphics[scale=0.3]{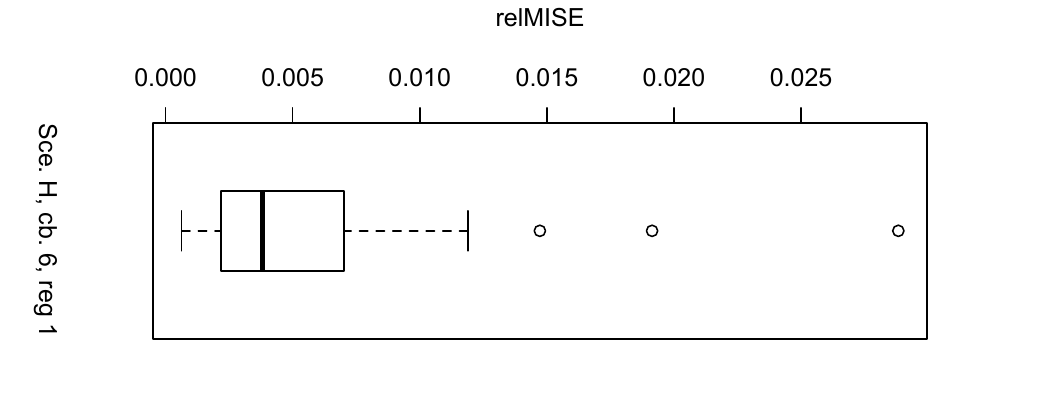} & \includegraphics[scale=0.3]{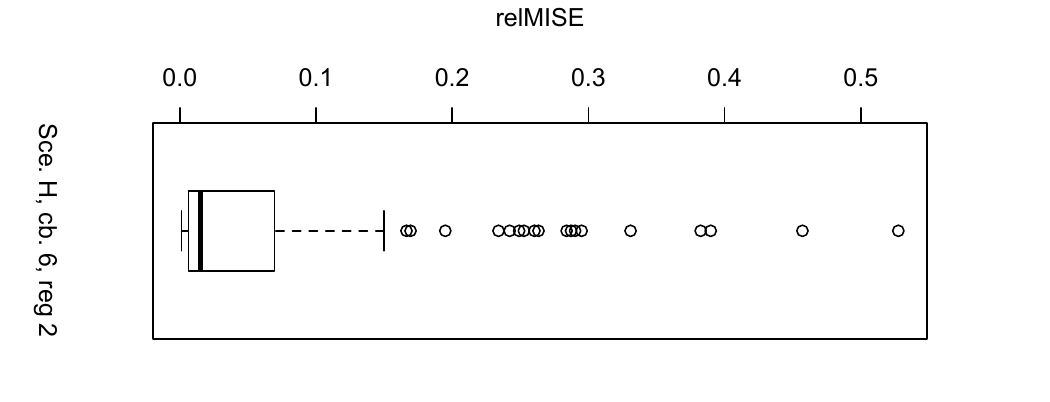} \\
\end{tabular}
\caption{Distributions of \emph{relMISE}. First row : scenario A with $r=5,\delta=0.05$, regime 1 on the left and regime 2 on the right. Middle row : scenario F with $r=3,\delta=0.1$, regime 1 on the left and regime 2 on the right. Last row : scenario H with $r=5,\delta=0.1$, regime 1 on the left and regime 2 on the right.}
\label{prediction} 
\end{figure}

\section{Proofs of Formal Statements}\label{proofs_section}

\subsection*{Proofs of Theorems in Section \ref{uniqueness_ident}}

\begin{proof}[Proof of Theorem \ref{infinite_unique_dec}]
Since the eigenfunctions of $\Lo_1$ and $\Lo_2$ are analytic and $\max\{r_1,r_2\}<\infty$, it follows that the corresponding covariance kernels are bivariate analytic functions on $[0,1]^2$ (\citet[Thm 4.3.3]{Krantz}). 

This being the case, the zero set of either kernel is at most $1$-dimensional, unless the kernels are uniformly zero (\citet[Thm 6.33]{Krantz}). Since our theorem follows trivially if $\Lo_1$ and $\Lo_2$ are the zero operator, we can assume that their kernels are not uniformly zero. Thus, if we can show that the two kernels coincide on an open subset $U$ of $[0,1]^2$, then they will necessarily coincide everywhere on $(0,1)^2$, and thus on $[0,1]^2$ by continuity. This, in particular, will in turn imply that $\B_1$ and $\B_2$ also coincide.

Without lost of generality, assume that $\delta_1 \ge \delta_2$. Define
$$U=\Big(\delta_1,1\Big) \times \Big(0,1-\delta_1\Big).$$
Since $\Lo_1 + \B_1	 = \Lo_2+ \B_2$, but $\B_1=\B_2=0$ on $U$, it must be that the kernels of $\Lo_1$ and $\Lo_2$ coincide on the open set $U$, and the proof is complete.
\end{proof}

\begin{proof}[Proof of Proposition \ref{analytic_approximation}]
We will first prove the results referring to the processes $Z$ and $Y$, and then those referring to their covariances, $\mathscr{G}$ and $\mathscr{L}$. Let $\mu$ be the mean function of $Z$ and
$$\mathscr{G}=\sum_{n=1}^{r}\theta_n \phi_n\otimes\phi_n,$$ 
be the specrtum of $\mathscr{G}$, with $\{\theta_n,\phi_n\}$ the corresponding eigenvalues/eigenfunctions. Now let $\epsilon>0$ be arbitrary, and define $\gamma=\epsilon/\mathrm{trace}\{\mathscr{G}\}$. Define the function $f_{n,J}$ to be the order $J$ Fourier series approximation of $\phi_n$, and note that this is an analytic function for all $J<\infty$ (and of course all $n$). Since Fourier series are dense in $L^2$, we know that there exists $J_1,...,J_r$ such that 
$$\| \phi_n - f_{n,J_n}\|_{L^2}<\gamma.$$
In particular, if we we pick $J_*=\max\{J_1,...,J_r\}$ and define $f_n=f_{n,J^*}$, we have that
$$\sup_{1\leq n\leq r}\| \phi_n - f_{n}\|_{L^2}<\gamma.$$
The functions $f_n$ are, of course, analytic. Finally, define a new random function $Y$ via the random series
$$Y=\mu+\sum_{n=1}^{r}\underset{=\xi_n}{\underbrace{\langle Z-\mu,\phi_n\rangle_{L^2}}}\,f_n=Z+\sum_{n=1}^{r}\xi_ne_n,$$
where $e_n=f_n-\phi_n$ satisfies $\|e_n\|_{L^2}<\gamma$. Note that since the $\{f_n\}_{n=1}^{r}$ are analytic and finitely many, their span consists of analytic functions. Thus the eigenfunctions of the covariance of $Y$ (which are not necessarily exactly equal to the $f_n$) are analytic too. Furthermore, the rank of $Y$ can clearly not exceed $r$, whatever the value of $\epsilon$. Now, since the $\{\xi_n\}$ are mean-zero, uncorrelated, and of variance $\{\theta_n\}$, we may write
\begin{eqnarray*}
\mathbb{E}\|Z-Y\|_{L^2}^2&=&\mathbb{E}\int_0^1\left(\sum_{n=1}^{r}\xi_ne(t)\right)^2dt=\int_0^1 \mathbb{E}\left(\sum_{n=1}^{r}\xi_ne(t)\right)^2dt=\int_0^1\sum_{n=1}^{r}\theta_ne_n^2(t)dt\\
&=&\sum_{n=1}^{r}\theta_n \|e_n\|_{L^2}^2<\gamma\,\mathrm{trace}\{\mathscr{G}\}=\epsilon.
\end{eqnarray*}
If we happen to know that $\{\phi_n\}$ are $C^1$, we may define again $\gamma=\epsilon/\mathrm{trace}\{\mathscr{G}\}$, but now re-define $f_n$ to be trigonometric functions such that
$$\sup_{1\leq n\leq r}\| \phi_n - f_{n}\|_{\infty}<\gamma^{1/2}.$$
This is possible, since the eigenfunctions $\{\phi_n\}$ are $C^1$, and thus can be uniformly approximated by Fourier series. Define $Y$ and $e_n$ as before, but with the new definition of $f_n$ in place. Once again, since the $\{\xi_n\}$ are mean-zero, uncorrelated, and of variance $\{\theta_n\}$, we have that for any $t\in[0,1]$,
$$\mathbb{E}(Z(t)-Y(t))^2=\mathbb{E}\left[\sum_{n=1}^{r}\xi_ne_n(t)\right]^2=\sum_{i=1}^{r}\theta_ne_n^2(t)< \gamma\,\mathrm{trace}\{\mathscr G\}=\epsilon.$$

Now let us focus on the approximation of $\mathscr{G}$ itself. Let $\epsilon>0$, and define $\gamma=\epsilon / (2 \cdot \mathrm{trace}\{\mathscr{G}\})$. Write 
$$\mathscr{G}=\sum_{n=1}^{r}\theta_n \phi_n\otimes\phi_n,$$ 
with $\{\theta_n,\phi_n\}$ its eigenvalues/eigenfunctions. Define the function $f_{n,J}$ to be the order $J$ Fourier series approximation of $\phi_n$, as before. Again, there exist $J_1,...,J_r$ such that 
$$\| \phi_n - f_{n,J_n}\|_{L^2}<\gamma.$$
Set $J_*=\max\{J_1,...,J_r\}$ and define $f_n=f_{n,J^*}$, so that
$$\sup_{1\leq n\leq r}\| \phi_n - f_{n}\|_{L^2}<\gamma.$$
The functions $f_n$ are, of course, analytic. Now define the operator $\mathscr{L}$ to be
$$\mathscr{L}=\sum_{n=1}^{r}\theta_n f_n\otimes f_n.$$
This operator is analytic, and has rank at most $r$. Furthermore, its eigenfunctions are analytic, since they lie in the span of $\mathscr{L}$, which is spanned by the analytic $f_n$. 
We now have:
\begin{eqnarray*}
\| \mathscr{G}-\mathscr{L}\|_{*}&\leq&\sum_{n=1}^{r}\theta_n \|\phi_n\otimes\phi_n-f_n\otimes f_n \|_{*} \\
&=&\sum_{n=1}^{r}\theta_n \|\phi_n\otimes\phi_n-\phi_n\otimes f_n +  \phi_n\otimes f_n - f_n\otimes f_n \|_{*} \\
&\le&\sum_{n=1}^{r}\theta_n \left\{ \|\phi_n\otimes(\phi_n- f_n) \|_{*} + \| (\phi_n-f_n)\otimes f_n \|_{*}\right\} \\
&=& \sum_{n=1}^{r}\theta_n \left\{ \|\phi_n\|_{L^2} \|\phi_n- f_n \|_{L^2} + \| \phi_n-f_n\|_{L^2} \| f_n \|_{L^2}\right\} \\
&=& \sum_{n=1}^{r}\theta_n (1+\| f_n \|_{L^2}) \|\phi_n- f_n \|_{L^2} \\
&<& {2}\gamma\, \mathrm{trace}\{\mathscr{G}\}=\epsilon
\end{eqnarray*}
where we used the fact that $\| f_n \|_{L^2}<1$. If we know that the eigenfunctions $\{\phi_n\}$ of $\mathscr{G}$ are $C^1$, the Fourier series expansion of each $\phi_n(t)$ converges \emph{uniformly} and \emph{absolutely}. {Let $c_1<\infty$ be the maximum of the $\ell_1$ norms of the Fourier coefficients of $\phi_1,...,\phi_r$ ($c_1<\infty$ by absolute convergence of the respective Fourier series)}. Re-define
$$\gamma=\epsilon\times\left[\left({c_1}+\sup_{1\leq n \leq r} \|\phi_n\|_{\infty})\right) \mathrm{trace}\{\mathscr{G}\}\right]^{-1}.$$
Following the same steps as before, we can choose a $J^*$ sufficiently large, such that setting $f_n=f_{n,J^*}$ we have
$$\sup_{1\leq n\leq r}\| \phi_n - f_{n}\|_{\infty}<\gamma.$$
It now follows that
\begin{eqnarray*}
\| g-\ell \|_{\infty}&\leq&\sum_{n=1}^{r}\theta_n \sup_{s,t}|\phi_n(s)\phi_n(t)-f_n(s) f_n(t) |\\
&=&\sum_{n=1}^{r}\theta_n \sup_{s,t}|\phi_n(s)\phi_n(t) -\phi_n(s)f_n(t) + \phi_n(s)f_n(t) -f_n(s) f_n(t) |\\
&\leq&\sum_{n=1}^{r}\theta_n \left\{\sup_t\sup_s| \phi_n(s)|\left|\phi_n(t)-f_n(t)\right|+\sup_s\sup_t | f_n(t)|\left|\phi_n(s)-f_n(s)\right|\right\}\\
&\leq&\sum_{n=1}^{r}\theta_n ({c_1}+\sup_t |\phi_n(t)|)\| \phi_n-f_n\|_{\infty}\\
&<&\left({c_1}+\sup_{1\leq n \leq r} \|\phi_n\|_{\infty}\right)\, \gamma\, \mathrm{trace}\{\mathscr{G}\} = \epsilon.
\end{eqnarray*}

 Finally, for any $\epsilon>0$, we can replace the specific truncation $J^*(\epsilon)$ used in each of the four parts of the proof, by the largest of all these $J^*(\epsilon)$, and so $\epsilon$ can be chosen to be the same in all the approximation results. This concludes the proof.

\end{proof}

Moving on, the proof of Theorem \ref{discrete_ident} rests upon the observation that it is essentially a statement regarding matrix completion. Our strategy of proof will thus be to translate our functional conditions on $\mathscr{B}$ and $\mathscr{L}$ into matrix properties of $L^K$ and $B^K$ that suffice for unique matrix completion. We first develop the said matrix properties in the form of Lemma \ref{lemma:discrete_band} and Theorem \ref{Analytic_implies_mino_condition}. 

\begin{lemma}\label{lemma:discrete_band}
Let $b(s,t)$ be a continuous kernel on $[0,1]^2$ such that $b(s,t)=0$ whenever $|s-t|>\delta$, and let $(t_1,\ldots,t_K)\in\mathcal{T}_K$ be a grid of $K$ points. Then the matrix $B^K=\{b(t_i,t_j)\}_{i,j=1}^{K}$ is banded with bandwidth $2\lceil \delta \cdot K \rceil+1$.
\end{lemma}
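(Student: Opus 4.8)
The plan is to read off the banding directly from the grid geometry defining $\mathcal{T}_K$, together with the continuity of $b$. Recall that a $K\times K$ matrix is \emph{banded with bandwidth} $2\lceil \delta K\rceil+1$ precisely when all its nonzero entries lie on the main diagonal and the $\lceil\delta K\rceil$ diagonals on either side of it; equivalently, $B^K(i,j)=0$ whenever $|i-j|>\lceil \delta K\rceil$. So it suffices to show that $|i-j|\geq \lceil\delta K\rceil+1$ forces $t_i$ and $t_j$ to be at distance at least $\delta$, and then invoke the vanishing hypothesis on $b$.

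The first step I would take is to upgrade the hypothesis slightly: since $b$ is continuous on $[0,1]^2$ and vanishes on the open set $\{|s-t|>\delta\}$, it also vanishes on its closure, so in fact $b(s,t)=0$ whenever $|s-t|\geq\delta$. This innocuous extension is what lets the boundary case go through. Next, fix $i,j\in\{1,\ldots,K\}$ with $|i-j|\geq\lceil \delta K\rceil+1$ and assume without loss of generality that $i>j$. Because $(t_1,\ldots,t_K)\in\mathcal{T}_K$, we have $t_i\in I_{i,K}$ and $t_j\in I_{j,K}$, hence $t_i\geq (i-1)/K$ and $t_j\leq j/K$. Therefore
$$t_i-t_j \geq \frac{(i-1)-j}{K} = \frac{(i-j)-1}{K} \geq \frac{\lceil\delta K\rceil}{K} \geq \delta,$$
using $i-j-1\geq\lceil\delta K\rceil$ and $\lceil\delta K\rceil\geq\delta K$. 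By the extended vanishing property, $B^K(i,j)=b(t_i,t_j)=0$, which is exactly what was needed.

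I do not expect a genuine obstacle here: the argument is a one-line interval estimate, and the only point that requires a moment of care is the equality case $|t_i-t_j|=\delta$, which is precisely why continuity of $b$ (and not merely the stated strict inequality) is used at the outset. It is also worth noting that the conclusion is fully deterministic---it holds for \emph{every} grid in $\mathcal{T}_K$, not merely almost every one---which is what makes this lemma a convenient building block for the matrix-completion reformulation underlying Theorem \ref{discrete_ident}.
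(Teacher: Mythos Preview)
Your proof is correct. The paper actually states this lemma without proof, treating it as an elementary observation, so there is no proof to compare against; your argument is precisely the kind of direct interval estimate the authors evidently had in mind, and your use of continuity to handle the boundary case $|t_i-t_j|=\delta$ is the right detail to flag.
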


\begin{theorem} \label{Analytic_implies_mino_condition}
Let $\mathscr{L}$ have kernel $\ell(s,t) = \sum_{i=1}^r \lambda_i \eta_i(s)\eta_i(t)$ with  $r< \infty$
and real analytic orthonormal eigenfunctions $\{\eta_1,\ldots,\eta_r\}$. If $K>r$, then the minors of order $r$ of the matrix $L^K=\{\ell(t_i,t_j)\}_{i,j=1}^{K}$ are all nonzero, almost everywhere on $\mathcal{T}_K$.
\end{theorem}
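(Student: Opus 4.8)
The plan is to exploit the rank-$r$ factorisation of $L^K$ and reduce the claim to a statement about a single analytic function of $r$ variables. Write $L^K=E\Lambda E^{\transpose}$, where $E\in\R^{K\times r}$ has entries $E(i,k)=\eta_k(t_i)$ and $\Lambda=\mathrm{diag}(\lambda_1,\ldots,\lambda_r)$; note that necessarily $\lambda_k\neq 0$ for every $k$, since otherwise $\mathrm{rank}(\mathscr L)<r$. For index sets $I=\{i_1<\cdots<i_r\}$ and $J=\{j_1<\cdots<j_r\}$, the corresponding $r\times r$ submatrix of $L^K$ equals $E_I\Lambda E_J^{\transpose}$, where $E_I$ (resp. $E_J$) denotes the $r\times r$ matrix formed by the rows of $E$ indexed by $I$ (resp. $J$). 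All three factors being square, multiplicativity of the determinant gives
\[
\det\!\big(L^K_{I,J}\big)=\Big(\textstyle\prod_{k=1}^{r}\lambda_k\Big)\,\det(E_I)\,\det(E_J).
\]
Hence all order-$r$ minors of $L^K$ are nonzero if and only if \emph{every} $r\times r$ minor $\det(E_I)$ of $E$ is nonzero, and it suffices to establish the latter for almost every grid in $\mathcal{T}_K$.

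Next, fix $I=\{i_1<\cdots<i_r\}$ and view $\det(E_I)$ as the value at $(t_{i_1},\ldots,t_{i_r})$ of the generalised Vandermonde function
\[
D(x_1,\ldots,x_r):=\det\big[\eta_k(x_l)\big]_{k,l=1}^{r}.
\]
Since each $\eta_k$ is real analytic on an open set containing $[0,1]$ and $D$ is a polynomial in the quantities $\eta_k(x_l)$, the function $D$ is real analytic on a connected open neighbourhood of $[0,1]^r$. The key point is that $D$ is not identically zero: this follows from the linear independence of $\eta_1,\ldots,\eta_r$ (which holds because they are orthonormal) by the classical induction on $r$ — given $x_1,\ldots,x_{r-1}$ with $\det[\eta_k(x_l)]_{k,l=1}^{r-1}\neq 0$, expand the $r\times r$ determinant along its last column to write $x\mapsto D(x_1,\ldots,x_{r-1},x)$ as a linear combination of $\eta_1,\ldots,\eta_r$ whose $\eta_r$-coefficient is precisely that nonzero $(r-1)$-minor, hence a nontrivial combination, hence not identically zero, which yields the required $x_r$. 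A not-identically-zero real analytic function on a connected open set has a Lebesgue-null zero set and, by the identity theorem, is not identically zero on any open subset; applying this on the open box $\prod_{l}\mathrm{int}(I_{i_l,K})$ shows that $\{(x_1,\ldots,x_r)\in\prod_{l}I_{i_l,K}:D(x_1,\ldots,x_r)=0\}$ has measure zero in $\R^{r}$.

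Finally I would assemble the pieces: for each pair $(I,J)$ of $r$-subsets of $\{1,\ldots,K\}$, the set of grids on which $\det(L^K_{I,J})=0$ is contained in the union of the cylinder over $\{D=0\}$ in the $I$-coordinates and the cylinder over $\{D=0\}$ in the $J$-coordinates, each Lebesgue-null in $\R^{K}$ by Fubini; taking the union over the finitely many such pairs preserves nullity, so outside a null subset of $\mathcal{T}_K$ every order-$r$ minor of $L^K$ is nonzero. I expect the only genuine work to lie in the middle step — the two analytic-function facts, namely that the generalised Vandermonde determinant $D$ is not identically zero (via linear independence of the eigenfunctions) and that zero sets of nontrivial real analytic functions are Lebesgue-null — both of which are standard (cf. Krantz and Parks \cite{Krantz}, in the same spirit as the argument used in the proof of Theorem \ref{infinite_unique_dec}); everything else is bookkeeping, modulo reading ``analytic on $[0,1]$'' as analyticity on an open set containing $[0,1]$ so that the identity theorem applies on the open sub-boxes.
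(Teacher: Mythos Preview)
Your proof is correct and follows essentially the same route as the paper: factor $L^K=E\Lambda E^{\transpose}$, reduce every order-$r$ minor to a product $(\prod_k\lambda_k)\det(E_I)\det(E_J)$, observe that $\det(E_I)$ is the value of a real analytic function $D$ in $r$ variables, show $D\not\equiv 0$, and invoke the fact that the zero set of a nontrivial real analytic function is Lebesgue-null. The only cosmetic difference lies in the verification that $D\not\equiv 0$: you argue by induction via cofactor expansion and linear independence of the $\eta_k$, whereas the paper iteratively integrates $D$ against $\eta_r,\eta_{r-1},\ldots$ (exploiting orthonormality) to collapse the determinant down to $\eta_1\equiv 0$, a contradiction.
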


\begin{proof}
First, notice that from \mbox{$\ell(s,t)=\sum_{i=1}^r \lambda_i \eta_i(s)\eta_i(t)$}, we have
$$L^{K}_{jl}=\sum_{i=1}^r \lambda_i\eta_i(t_j) \eta_i(t_l).$$ 
Thus, $L^{K}$ can be written as $U^{K}\Sigma (U^{K})\transpose$, where 
\begin{equation} \label{matU}
U^{K}= 
\left(  \begin{array}{cccc}
\eta_1(t_1) & \eta_2(t_1) & \cdots &\eta_r(t_1)  \\
\eta_1(t_2) & \eta_2(t_2) & \cdots &\eta_r(t_2) \\
\vdots & \vdots & & \vdots\\
\eta_1(t_K) & \eta_2(t_K) & \cdots &\eta_r(t_K)
\end{array}
\right) \quad \textrm{ and }\quad 
\Sigma=\left(  \begin{array}{cccc}
\lambda_1 & 0 & \cdots &0  \\
0 &\lambda_2 & \cdots &0 \\
\vdots & \vdots & & \vdots\\
0 & 0 & \cdots & \lambda_r
\end{array}
\right).
\end{equation}  
Any $r\times r$ submatrix of $L^{K}$ obtained by deleting rows and columns, can then be written as 
$$U^{K}_F \Sigma (U^{K}_{F'})\transpose,$$ 
where $U^{K}_F$ (resp., $U^{K}_{F'}$) is an $r\times r$ matrix obtained by deleting rows of $U^{K}$ whose indices are not included in $F  \subseteq \{ 1,\ldots,K\}$ (resp., $F'$). 
The condition that any minor of order $r$ of $L^{K}$ be nonzero is then equivalent to the condition that 
$$\textrm{det}\Big[U^{K}_F \Sigma(U^{K}_{F'})\transpose\Big]=\textrm{det}[U^K_F] \textrm{det}[\Sigma] \textrm{det}[U^K_{F'}]  \neq 0,$$
for any subset $F,F' \subseteq \{ 1,\ldots,K\}$ of cardinality $r$. By construction $\textrm{det}(\Sigma) \neq 0$, so the minor condition is then equivalent to requiring that $\textrm{det}(U_F^{K}) \neq 0$ for any subset $F \subseteq \{ 1,\ldots,K\}$ of cardinality $r$.

We will show that this is indeed the case almost everywhere on $\mathcal{T}_K$. Let $\mu$ denote Lebesgue measure on $\mathcal{T}_K$ and let $F = \{ 1,\ldots,r\}$, without loss of generality (so that $U_F^{K}$ is formed by keeping the first $r$ rows of $U^{K}$). Using the Leibniz formula, we have that $\textrm{det}(U_F^{K})$ can be written as the function 
$$D(t_1,\ldots,t_r) = \sum_{\sigma \in S_r} \varepsilon(\sigma) \prod_{i=1}^r \eta_i (t_{\sigma(i)}),$$ 
where $S_r$ is the symmetric group on $r$ elements and $\varepsilon(\sigma)$ is the signature of the permutation $\sigma$. Note that the function $D$ is real analytic on $(0,1)^r$, by virtue of each $\eta_i$ being real analytic on $(0,1)$.

We will now proceed by contradiction. Assume that
$$\mu\{(x_1,\ldots,x_K) \in\mathcal{T}_K: D(x_1,\ldots,x_r) = 0\}>0.$$ 
Since $\mu$ is Lebesgue measure, it follows that the Hausdorff dimension of the set $A=\{(x_1,\ldots,x_r) : D(x_1,\ldots,x_r) = 0\}$ is equal to $r$. However, since $D$ is analytic, \citet[Thm 6.33]{Krantz} implies the dichotomy: either $D$ is constant everywhere on $(0,1)^r$, or the set $A$ is at most of dimension $r-1$. Thus, it must be that $D$ is everywhere constant on $(0,1)^r$, the constant being of course zero:
$$D(x_1,\ldots,x_r) = \sum_{\sigma \in S_r} \varepsilon(\sigma) \prod_{i=1}^r \eta_i (x_{\sigma(i)}) = 0, \, \, \forall \ (x_1,\ldots,x_r) \in (0,1)^r .$$
Now fix $(x_1,\ldots,x_{r-1})$ and apply to $D$ (viewed as a function of $x_r$ only) the continuous linear functional $T_{\eta_r}(f)=\langle f, \eta_r \rangle$. We obtain that for all $(x_1,\ldots,x_{r-1}) \in (0,1)^r$: 
\begin{eqnarray*}
0= \langle D, \eta_r \rangle &=& \sum_{\sigma \in S_r} \varepsilon(\sigma)\Bigg[ \prod_{i:\sigma(i)\neq r} \eta_i (x_{\sigma(i)})  \Bigg]\langle  \eta_{\sigma^{-1}(r)} , \eta_r  \rangle
 = \sum_{\sigma \in S_{r-1}} \varepsilon(\sigma)  \prod_{i=1}^{r-1} \eta_{i} (x_{\sigma(i)}) .
 \end{eqnarray*}
 Applying iteratively the continuous linear functionals $T_{\eta_j}(f)=\langle f, \eta_j \rangle$ to $D$ while keeping $(x_1,\ldots,x_{j-1})$ fixed then leads to $$ \eta_1(y) = 0, \, \forall \, y \in (0,1).$$ 
 This last equality contradicts the fact that $\eta_1$ is of norm one, and allows us to conclude that $\mu\{(x_1,\ldots,x_K) \in\mathcal{T}_K: D(x_1,\ldots,x_r) = 0\}=0$.
\end{proof}

We now prove Theorem \ref{discrete_ident} by demonstrating that the matrix properties of $(L^K,B^K)$ that derive from its assumptions are sufficient for unique matrix completion. {The proof is inspired by Proposition 2.12 of \cite{Kiraly1}.}

\begin{proof}[Proof of Theorem \ref{discrete_ident}]
Given our conditions, Lemma \ref{lemma:discrete_band} implies that  $B_1, B_2 \in \R^{K\times K}$ are banded matrices with bandwidth $2\lceil \delta_i \cdot K \rceil+1$, for $i\in\{1,2\}$.

Let $\delta =\max\{\delta_1,\delta_2\}$ and assume without loss of generality that $r_1\geq r_2$. Let $\Omega$ be the set of indices on which both $B_1$ and $B_2$ vanish, which by Lemma \ref{lemma:discrete_band} is $\Omega= \{ (i,j)\in\{1,\ldots,K\}^2: |i-j|>  \lceil\delta\cdot K\rceil\}$. From $L_1+B_1=L_2+B_2$, we obtain that $\{L_1\}_{ij}=\{L_2\}_{ij}, \forall (i,j) \in \Omega$. Let $\Omega_A$ be the set of indices of a submatrix formed by the first $r_1$ rows and the last $r_1$ columns of a $K\times K$ matrix, the condition $K\ge K^*=  \frac{2r_1+2}{1-2\delta}$ implies that $\Omega_A \subset \Omega$, which in turn implies that the matrices $L_1$ and $L_2$ contain a common submatrix $A$ of dimension $r_1\times r_1$ . 

Assume that all minors of order $r_1$ of $L_1$ are nonzero. Then the determinant of $A$ is non-zero, which implies that the rank of $L_2$ is also $r_1$.   We thus establish that $L_1$ and $L_2$ are two rank $r_1$ matrices equal on $\Omega$. Let $L^*$ be a matrix equal to $L_1$ on $\Omega$, but unknown at those indices that do not belong to $\Omega$. We will now show that there exists a unique rank $r_1$ completion of $L^*$. Due to the band pattern of the unobserved entries of $L^*$  and the inequality $K\ge K^*=\frac{2r_1+2}{1-2\delta}$, it is possible to find a submatrix of $L^*$ of dimension $(r_1+1)\times (r_1+1)$ with only one unobserved entry, denoted $x^*$. Using the fact that the determinant of any square submatrix of dimension larger than $r_1+1$ is zero, we obtain a linear equation of the form $ax^*+b=0$, where $a$ is equal to the determinant of a submatrix of dimension $r_1\times r_1$. Since we assume that any minor of order $r_1$ is nonzero, we have that $a\neq 0$ and the previous equation has a unique solution. It is then possible to impute the value of $x^*$. Applying this procedure iteratively until all missing entries are determined allows us to uniquely complete the matrix $L^*$ into a rank $r_1$ matrix. In summary, we have demonstrated that when all minors of order $r_1$ of $L_1$ are nonzero, it holds that $L^*=L_1=L_2$ and hence $B_1=B_2$. Theorem \ref{Analytic_implies_mino_condition} assures us that $L_1$ indeed has nonvanishing minors of order $r_1$ almost everywhere on $\mathcal{T}_K$, and so we conclude that it must be that $L_1=L_2$ and $B_1=B_2$ almost everywhere on $\mathcal{T}_K$
\end{proof}

\subsection*{Proofs of Theorems in Section \ref{sec:estimation}}

\begin{proof}[Proof of Proposition \ref{prop:optimisation}]

Since $\delta < 1/4$ and $K\ge 4r+1$ implies $K\ge \frac{2r+2}{1-2\delta}$, Theorem \ref{discrete_ident} implies that the objective function \ref{theoretical_min_problem} achieves its minimal value of $r$ at $L^K$. To elaborate, note that any minimiser of \ref{theoretical_min_problem} must equal $L^K$ on the set $\Omega = \{ (i,j) \in \{1,\ldots,K \}^2:|i-j|> \lceil \delta\cdot K\rceil\}$, as it has to satisfy the constraint $\|P^K(R^K-\theta)\|^2_F=0$. Consequently, any minimiser has a nonzero minor of order $r$ in $\Omega$, implying that its rank is bounded below by $r$. Thus its rank must be exactly $r$, since $L^K$ satisfies the constraint and has rank $r$. We conclude that any minimiser of \ref{theoretical_min_problem} must be equal to $L^K$ everywhere, following the same iterative completion process as in the second part of the proof of Theorem \ref{discrete_ident} (see immediately above).

We now turn to prove that
$
L^K=\underset{\theta\in \mathbb{R}^{K\times K}}{\arg\min}\left\{\left\| P^K\circ ( R^K-\theta)\right\|_F^2 +\tau \,\mathrm{rank}(\theta)\right\},
$
for all $\tau>0$ sufficiently small. Since we have established that $L^K$ uniquely solves
$$
\min_{\theta \in \R^{K\times K}}\mathrm{rank}\{\theta\}   \qquad \textrm{subject to} \,\, \left\| P^{K}\circ (R^K-\theta)\right\|^2_F=0,
$$
it follows that for all $\tau>0$ and any $\theta\in \mathbb{R}^{K\times K}$ of rank greater or equal to $r$, we have that
$$\left\| P^K\circ ( R^K-L^K)\right\|_F^2 +\tau \,\mathrm{rank}(L^K)<\left\| P^K\circ ( R^K-\theta)\right\|_F^2 +\tau \,\mathrm{rank}(\theta).$$
We thus concentrate on matrices $\theta\in \mathbb{R}^{K\times K}$ of rank at most $r-1$, for $r>1$. Let
$$\mu = \min_{\theta\in\mathbb{R}^{K\times K}, \ \mathrm{rank}(\theta)\leq r-1} \{\left\| P^K\circ ( R^K-\theta)\right\|_F^2\}>0.$$
Now let $\tau_*=\frac{\mu}{r-1}$. Then, for any $\tau<\tau_*$, and any $\theta$ of rank less than $r$,
$$\left\| P^K\circ ( R^K-L^K)\right\|_F^2 +\tau \,\mathrm{rank}(L^K)=\tau r < \mu+\tau \leq \left\| P^K\circ ( R^K-\theta)\right\|_F^2+ \tau\mathrm{rank}(\theta).$$

In summary, putting our results together, we have shown that for all $\tau \in (0,\tau_*)$, 
$$
L^K=\underset{\theta\in \mathbb{R}^{K\times K}}{\arg\min}\left\{\left\| P^K\circ ( R^K-\theta)\right\|_F^2 +\tau \,\mathrm{rank}(\theta)\right\}.
$$
Finally, it is worth pointing out that although $\tau_*$ depends on $r$, this does not mean that the objective function depends on unknowns: $r$ can be shown {(using Theorem \ref{Analytic_implies_mino_condition}) to be equal to the rank of the submatrix formed by the first $\lceil K/4\rceil $ rows and the last $\lceil K/4\rceil $ columns of $R^K$, and thus we can determine $\tau_*$ directly from the matrix $R^K$.} This completes the proof.

\end{proof}

\subsection*{Proofs of Theorems in Section \ref{sec:asymptotics}}

\begin{proof}[Proof of Theorem \ref{thm:consistency_L}]

We begin by the usual bias/variance decomposition
\begin{eqnarray*}\left\| \hat{\mathscr{L}}^K_n-\mathscr{L}\right\|_{\mathrm{HS}}^2&\leq& 2\left\| \hat{\mathscr{L}}^K_n-\mathscr{L}^K\right\|_{\mathrm{HS}}^2+2\left\| {\mathscr{L}}^K-\mathscr{L}\right\|_{\mathrm{HS}}^2\\
&=&2K^{-2}\left\| \hat{L}^K_n-L^K \right\|^2_{\mathrm{F}}+2\left\| {\mathscr{L}}^K-\mathscr{L}\right\|_{\mathrm{HS}}^2.
\end{eqnarray*}
For the second term (bias), we note that by a Taylor expansion
$$
\int_0^1\int_0^1 (\ell(x,y)-\ell_K(x,y))^2dxdy=\sum_{i,j=1}^{K}\int_{I_{i,K}}\int_{I_{j,K}}(\ell(x,y)-\ell(t_i,t_j))^2dxdy $$
$$\leq\sum_{i,j=1}^{K}\int_{I_{i,K}}\int_{I_{j,K}}2K^{-2}\sup_{(x,y)\in I_{i,K}\times I_{j,K} }\|\nabla\ell(x,y)\|^2_2
\le2K^{-2}\sup_{(x,y)\in [0,1]^2}\|\nabla\ell(x,y)\|^2_2.$$
Without loss of generality, we assume that the data are rescaled so that $K^{-1}\mathrm{trace}(R^K_n)=1$. To show that $K^{-2}\left\| \hat{L}^K_n-L^K \right\|^2_{\mathrm{F}}=O_{\mathbb{P}}(n^{-1})$ almost everywhere on $\mathcal{T}_K$, define $\Theta_{K}$ to be the space of $K\times K$ nonnegative matrices of trace at most $K$. Consider the functionals
$$\mathbb{S}_{n,K}:\Theta_{K}\rightarrow [0,\infty),\qquad\mathbb{S}_{n,K}(\theta)=\underset{\mathbb{M}_{n,K}(\theta)}{\underbrace{K^{-2}\|{P}^{K}\circ(\theta-R_{n}^{K})\|^2_{\mathrm{F}}}}+\tau \mathrm{rank}(\theta),$$
$$S_{K}:\Theta_{K}\rightarrow [0,\infty),\qquad S_{K}(\theta)=\underset{M_K(\theta)}{\underbrace{K^{-2}\|P^K\circ(\theta-R^{K})\|^2_{\mathrm{F}}}}+\tau \mathrm{rank}(\theta),
$$
where $P^K(i,j)=\mathbf{1}\{|i-j|>\lceil K/4\rceil \}$. Note that, since $K\ge4r+4$, Theorem \ref{discrete_ident} implies that for almost all grids, $L^K$ is the unique minimiser of $S_{K}$, for all $\tau>0$ sufficiently small. From now on, fix such a grid, and let $\tau>0$ be sufficiently small. 

First, we will show that $\hat L^K_n$ is consistent for $L^K$. To this aim, note that

\begin{eqnarray*}
 \left|\mathbb{S}_{n,K}(\theta)-S_{K}(\theta)\right| &=& \left| \mathbb{M}_{n,K}(\theta)-M_{K}(\theta)\right| \\
&=& K^{-2} |~\|P^{K} \circ (\theta - R_{n}^{K})\|_{F}^{2} - \|P^{K} \circ (\theta - R^{K})\|_{F}^{2}| \\
&\leq& K^{-2} |~\|P^{K} \circ (\theta - R_{n}^{K})\|_{F} - \|P^{K} \circ (\theta - R^{K})\|_{F}|\\
&&\qquad\quad\times(\|P^{K} \circ (\theta - R_{n}^{K})\|_{F} + \|P^{K} \circ (\theta - R^{K})\|_{F}) \\
&\leq& K^{-2} \|P^{K} \circ (R_{n}^{K} - R^{K})\|_F~(2\|\theta\|_{F} + \|R^{K}_n\|_F + \|R^{K}\|_F).
\end{eqnarray*}

It follows that $\sup_{\theta\in\Theta_K} \left|\mathbb{S}_{n,K}(\theta)-S_{K}(\theta)\right|\stackrel{n\rightarrow\infty}{\rightarrow}0$ almost surely, and given that $S_{K}(\theta)$ is lower semicontinuous with a unique minimum at $L^K$, and $\hat L^K_n\in\Theta_K$, consistency of $\hat{L}^K_n$ for $ L^K$ follows \citep[Corollary 3.2.3]{emp_pro}. 

Next we show that $\mathrm{rank}(\hat{{L}}^K_n)$ is consistent for the true rank. Suppose that this is not true. Then there exist $\epsilon > 0$, $\delta > 0$ and a subsequence $\{n_{j}\}$ such that $\mathbb{P}\{|\mathrm{rank}(\hat{L}^K_{n_{j}}) - r| > \epsilon\} > \delta$ for all $j \geq 1$. So, $\mathbb{P}\{\mathrm{rank}(\hat{L}^K_{n_{j}}) \neq r\} > \delta$ for all $j \geq 1$. Thus, there exist possibly two subsequences $\{j_{l}\}$ and $\{k_{l}\}$ such that $\mathbb{P}\{\mathrm{rank}(\hat{L}^K_{j_{l}}) > r\} > \delta/2$ and $\mathbb{P}\{\mathrm{rank}(\hat{L}^K_{k_{l}}) < r\} > \delta/2$ for all $l \geq 1$. The latter possibility is impossible since $\hat{L}^K_n$ is consistent, and matrices of rank at most $r-1$ form a closed set. For the first possibility, since $\hat{L}^K_{j_{l}}$ converges to $L^K$ in probability, there exists a further subsequence $\{j_{l_{m}}\}$ such that $\mathrm{rank}(\hat{L}^K_{j_{l_{m}}}) > r$ for all $m \geq 1$ and $\hat{L}^K_{j_{l_{m}}}$ converges to $L^K$ as $m \rightarrow \infty$. Without any loss of generality, we can assume that $P(\mathrm{rank}(\hat{L}^K_{j_{l_{m}}}) > r) > \delta/2$ for all $m \geq 1$, and $\hat{L}^K_{j_{l_{m}}}$ converges to $L^K$ as $m \rightarrow \infty$ almost surely (or take further subsequences). So, the set where both of these events hold has probability at least $\delta/2$. Working on this set, and by $\hat{L}^K_{j_{l_{m}}}$ being a minimiser,  
\begin{eqnarray} \label{s1}
\mathbb{M}_{n,K}(\hat{L}^K_{j_{l_{m}}}) + \tau(r+1) &=& K^{-2} \|P^K \circ (\hat{L}^K_{j_{l_{m}}} - R_{n}^K)\|_{F}^{2} + \tau(r+1) \nonumber \\
&\leq& K^{-2} \|P^K \circ (\hat{L}^K_{j_{l_{m}}} - R_{n}^K)\|_{F}^{2} + \tau\mathrm{rank}(\hat{L}^K_{j_{l_{m}}}) \nonumber \\
&\leq& \inf_{\theta \in \Theta_{K} \ : \ \mathrm{rank}(\theta) = r} \{K^{-2} \|P^K \circ (\theta - R_{n}^K)\|_{F}^{2} + \tau\mathrm{rank}(\theta)\} \nonumber \\
&=& \inf_{\theta \in \Theta_{K} \ : \ \mathrm{rank}(\theta) = r} K^{-2} \|P^K \circ (\theta - R_{n}^K)\|_{F}^{2} + {\tau}r \nonumber \\
&=& \inf_{\theta \in \Theta_{K} \ : \ \mathrm{rank}(\theta) = r} \mathbb{M}_{n,K}(\theta) + {\tau}r,
\end{eqnarray}
for all $m \geq 1$. But $\sup_{\theta \in \Theta_{K}} |\mathbb{M}_{n,K}(\theta) - M_{K}(\theta)| \rightarrow 0$ almost surely, so $\mathbb{M}_{n,K}(\hat{L}^K_{j_{l_{m}}}) - M_{K}(\hat{L}^K_{j_{l_{m}}}) \rightarrow 0$. Also, by continuity, $M_{K}(\hat{L}^K_{j_{l_{m}}}) \rightarrow M_{K}(L^K) = 0$. Consequently, $\mathbb{M}_{n,K}(\hat{L}^K_{j_{l_{m}}}) \rightarrow 0$. Now note that, on the set $\{\theta \in \Theta_{K} \ : \ \mathrm{rank}(\theta) = r\}$, the sequence of functions $\mathbb{M}_{n,K}(\theta)$ are equi-Lipschitz continuous almost surely. So, from the uniform convergence, we will also have 
$$ \inf_{\theta \in \Theta_{K} \ : \ \mathrm{rank}(\theta) = r} \mathbb{M}_{n,K}(\theta) \rightarrow \inf_{\theta \in \Theta_{K} \ : \ \mathrm{rank}(\theta) = r} M_{K}(\theta) = 0.$$
Combining the above facts and using \eqref{s1}, we arrive at the contradiction that $\tau \leq 0$. Summarising, if we define
$$d^2(\theta,L^K)={K^{-2}\|\theta-L^K\|^2_F + {\tau}|\mathrm{rank}(\theta)-\mathrm{rank}(L^K)|},$$
then we have $d(\hat{L}^K_{n},L^K) \rightarrow 0$ in probability as $n \rightarrow \infty$. 
We will now use consistency in conjunction with \citep[Theorem 3.4.1]{emp_pro} to obtain the rate. Write
$$
\Delta(\theta)=S_{K}(\theta)-S_{K}(L^K)= K^{-2}\| P^K\circ(\theta-L^K)\|^2_{\mathrm{F}} +\tau (\mathrm{rank}(\theta)-r).
$$
Choose $\eta^{2} < \tau$ and observe that, for any $\theta$ with $\mathrm{rank}(\theta) \neq r$, we must have $d^{2}(\theta,L^K) \geq \tau|\mathrm{rank}(\theta) - r| \geq \tau > \eta^{2}$, which implies that  $d(\theta,L^K) > \eta$. Thus, no matrix $\theta$ with $\mathrm{rank}(\theta) \neq r$ satisfies $\gamma/2 < d(\theta,L^K) < \gamma$ for $\gamma < \eta$. Hence,
\begin{eqnarray*}
\inf_{\theta \in \Theta_{K} \ : \ \gamma/2 < d(\theta,L^K) < \gamma} \Delta(\theta) = \inf_{\theta \in \Theta_{K} \ : \ \gamma/2 < d(\theta,L^K) < \gamma, \ \mathrm{rank}(\theta) = r} \Delta(\theta).
\end{eqnarray*}

We will show that the latter quantity is bounded below by $\alpha_{0}\gamma^{2}$, where $\alpha_{0}>0$ and $\gamma < \eta$, for $\eta>0$ sufficiently small. This is equivalent to showing that
\begin{eqnarray} \label{e3}
\inf_{\theta \in \Theta_{K} \ : \ \gamma^{2}/4 < \|\theta - L^K\|_{F}^{2} < \gamma^{2}, \ \mathrm{rank}(\theta) = r} \|P^K \circ (\theta - L^K)\|_{F}^{2} > \alpha_1\gamma^{2},
\end{eqnarray}
for some $\alpha_{1} > 0$. We argue by contradiction. Fix any $\theta$ with $\mathrm{rank}(\theta) = r$ and $\|\theta - L^K\|^{2}_{F} > d$, where we write $d=\gamma^2/4$ for tidiness. Suppose that $\|P^{K} \circ (\theta - L^K)\|^{2}_{F} < {\beta}d$, for some $\beta \in (0,1/2)$ . Now, we can always write $\theta = L^K + A + B$, where $A = P^{K} \circ A$ and $P^{K} \circ B = 0$ (simply define $A = P^{K} \circ (\theta - L^K)$ and $B = \theta - L^K - A$). If $\|P^{K} \circ (\theta - L^K)\|^{2}_{F} < {\beta}d$ for some $\beta\in(0,1/2)$, we have $\|A\|^{2}_{F} < {\beta}d$ and $\|A + B\|^{2}_{F} = \|A\|^{2}_{F} + \|B\|^{2}_{F} > d$. So, $\|B\|^{2}_{F} > (1-\beta)d > d/2$ and there exists an element $(j,k)$ (in the band defined by $P^K$) such that $|B_{j,k}| > \sqrt{d/(2c_{K})}$, where $c_{K}$ is the total number of elements in the band. Observe that $\theta_{j,k} = L^K_{j,k} + B_{j,k}$. 

Now, we know that all possible minors of $L^K$ of order $r$ are non-zero, and for sufficiently small $\eta$, the same is true in an $\eta$-neighbourhood of $L^K$, which includes $\theta$. Let the indices of the rows and columns of such an $r\times r$ sub-matrix of $L^K$, say $C_K$, be denoted by $\{p_{1},p_{2},\ldots,p_{r}\}$ and $\{q_{1},q_{2},\ldots,q_{r}\}$, respectively. Exploiting the structure of the band, choose this sub-matrix in such a way that the sub-matrix elements and the entries $\{(j,q_l) : 1 \leq l \leq r\}$ and $\{(p_l,k) : 1 \leq l \leq r\}$ lie outside the band defined by $P^K$. Consider the sub-matrix of order $r$ of $\theta$, say $E$, by taking the same rows and columns as in $C_K$. Define the sub-matrix $F$ (resp. D) of order $(r+1)$ obtained by adjoining to $E$ (resp. to $C_K$), the elements ${\bf q}_{1}=(\theta_{j,q_1},\ldots,\theta_{j,q_r})'$, ${\bf q}_{2}=(\theta_{p_1,k},\ldots,\theta_{p_r,k})'$ and $\theta_{j,k}$ (resp. the elements ${\bf c}_{1}=(L^K_{j,q_1},\ldots,L^K_{j,q_r})'$, ${\bf c}_{2}=(L^K_{p_1,k},\ldots,L^K_{p_r,k})'$ and $L^K_{j,k}$). So, 
\begin{eqnarray*}
F = 
\begin{bmatrix}
\theta_{j,k} & {\bf q}_{1}' \\
{\bf q}_{2} & E
\end{bmatrix} \quad \textrm{and } D=
\begin{bmatrix}
L^K_{j,k} & {\bf c}_{1}' \\
{\bf c}_{2} & C_K
\end{bmatrix}.
\end{eqnarray*} 
Then, for $\eta$ sufficiently small, we have that
$$|B_{j,k}| = |{\bf q}_{1}'E^{-1}{\bf q}_{2} - {\bf c}_{1}'C_{K}^{-1}{\bf c}_{2}|<\kappa \|P^K\circ(\theta-L^K)\|_F<\kappa\sqrt{\beta d},$$
by the fact that the map $({\bf{q}}_1,{\bf{q}}_2,E)\mapsto {\bf{q}}_{1}'E^{-1}{\bf{q}}_{2}$ is locally Lipschitz at any $({\bf{c}}_1,{\bf{c}}_2,C_K)$ as constructed above. So for $\beta$ chosen to be sufficiently small, we have contradicted the fact that $|B_{j,k}| > \sqrt{d/(2c_{K})}$. In summary, for some $\beta\in (0,1/2)$ sufficiently small, we must have $\|P^{K} \circ (\theta - L^K)\|^{2}_{F} > {\beta}d$ if $\theta$ is a rank $r$ matrix with $\|\theta - L^K\|^{2}_{F} > d$, as sought.

Next, define
\begin{eqnarray*}
D(\theta)&=&\mathbb{S}_{n,K}(\theta)-S_{K}(\theta)-\mathbb{S}_{n,K}(L^K)+S_{K}(L^K)\\
&=&\mathbb{M}_{n,K}(\theta)-M_{K}(\theta)-\mathbb{M}_{n,K}(L^K)+M_{K}(L^K).
\end{eqnarray*}
We expand $(\mathbb{M}_{n,K} - M_{K})$ in a first-order Taylor expansion with Lagrange remainder, around $L^K$, which gives for a certain $\tilde p \in [0,1]$ and $\tilde\theta =\tilde p{L}^K+(1-\tilde p)\theta$:
\begin{eqnarray*}
D(\theta)&=&\langle\mathbb{M}'_{n,K}(\tilde\theta),\theta-L^{K}\rangle_{\mathrm{F}}-\langle M'_{K}(\tilde\theta),\theta-L^{K}\rangle_{\mathrm{F}}\\
&=&K^{-2} \langle 2P^{K}\circ(\tilde\theta-R_{n}^K),\theta-L^K\rangle_{\mathrm{F}}-K^{-2} \langle 2P^{K}\circ(\tilde\theta-R^K),(\theta-L^K)\rangle_{\mathrm{F}}\\
&=&K^{-2} \langle  2P^{K}\circ\tilde\theta-2P^{K}\circ\tilde\theta-2P^{K}\circ R_{n}^{K}+2P^{K}\circ R^{K},\theta-L^{K}\rangle_{\mathrm{F}}\\
&\leq& K^{-2} \|2P^{K}\circ(R_{n}^{K}-R^{K})\|_{\mathrm{F}}\|\theta-L^{K}\|_{\mathrm{F}}
\leq2K^{-1} \| R_{n}^{K}-R^{K}\|_{\mathrm{F}}K^{-1}\|\theta-L^{K}\|_{\mathrm{F}}.
\end{eqnarray*}

Since $\mathbb{E}\|X\|^4_{L^2}<\infty$, the process $X(s)X(t)$ is trace class on $[0,1]^2$, and thus has a continuous covariance kernel on $[0,1]^4$ (and consequently a continuous variance function on $[0,1]^2$).  Assume without loss of generality that $\mathbb{E}X=0$. Since the observations $X_i(t_j)$ are independent for distinct $i$, and since $X_m(t_j)X_m(t_j)$ is an unbiased estimator of $\mathbb{E}[X(t_j)X(t_j)]$, we have
\begin{eqnarray*}
\frac{1}{K^2}\mathbb{E}\|R^K_n-R^K\|_F^2&=&\sum_{i=1}^{K}\sum_{j=1}^{K} \frac{1}{K^2} \mathbb{E}\left[\frac{1}{n}\sum_{m=1}^{n}X_m(t_{i,K})X_m(t_{j,K})-\mathbb{E}\left[X(t_{i,K})X(t_{j,K})\right]\right]^2\\
&=&\frac{K^{-2}}{n}\sum_{i=1}^{K}\sum_{j=1}^{K}\mbox{Var}[X(t_{i,K})X(t_{j,K})] \\ 
&\leq&\frac{1}{n}\sup_{(s,t)\in[0,1]^2}\mathrm{Var}[X(s)X(t)]=\frac{C}{n},
\end{eqnarray*}
and $C=\sup_{[0,1]^2}\mathrm{Var}[X(s)X(t)]<\infty$. Once again, by the choice of $\eta$ in relation to $\tau$, it follows that
\begin{eqnarray} 
\mathbb{E}\left\{\sup_{\theta \in \Theta_{K} : d(\theta,L^K) < \gamma} |D(\theta)|\right\} &=& \mathbb{E}\left\{\sup_{\theta \in \Theta_{K} : d(\theta,L^K) < \gamma, \mathrm{rank}(\theta) = r} |D(\theta)|\right\} \nonumber \\
&=& \mathbb{E}\left\{\sup_{\theta \in \Theta_{K} : K^{-1}\|\theta - L^K\|_{F} < \gamma} |D(\theta)|\right\} \nonumber \\
&\leq& 2{\gamma}\mathbb{E}\left\{\|R_{n}^K - R^K\|_{F}\right\} \ \leq \ C{\gamma}n^{-1/2}.
\end{eqnarray}
It now follows \citep[Theorem 3.4.1]{emp_pro} that if $\hat{L}^K_n$ is an approximate minimiser of $\mathbb{S}_{n,K}$, in the sense given by the assumptions, then it holds that
$$nd^{2}(\hat{L}^K_n,L^K) = nK^{-2}\|\hat{L}^K_n - L^K\|_{F}^{2} + n\tau|\mathrm{rank}(\hat{L}^K_n) - r| = O_{\mathbb{P}}(1),$$
from which we conclude that
$$K^{-2} \|\hat{L}^K_n - L^K\|_{F}^2 = O_{\mathbb{P}}(n^{-1}), \quad \textrm{and} \quad |\mathrm{rank}(\hat{L}^K_n) - r| = O_{\mathbb{P}}(n^{-1}).$$

Finally, we turn our attention to the estimated eigenfunctions. Since these are finitely many, we will omit the index indicating the order of an eigenfunction for tidiness, and consider an eigenfunction $\eta$. Let $\eta^K$ be the $K$-resolution step function approximation of $\eta$, $\eta^K(x)=\sum_{j=1}^{K}\eta(t_{j,K})\mathbf{1}\{x\in I_{j,K}\}$.
Then, by Taylor expanding,
\begin{eqnarray*}
\int_0^1 \left(\eta(x)-\eta^K(x)\right)^2dx&=&\sum_{j=1}^{K}\int_{I_{j,K}}\left(\eta(x)-\eta(t_{j,K})\right)^2dx\\
&\leq&\sum_{j=1}^{K}\int_{I_{j,K}}K^{-2} \|\eta'\|^2_{\infty}=\frac{\|\eta'\|^2_{\infty}}{K^2}.
\end{eqnarray*}
It follows that
\begin{eqnarray*}\left\| \hat\eta-\eta\right\|_{L^2}^2&\leq& 2\left\| \hat{\eta}-\eta^K\right\|_{L^2}^2+2\left\| \eta^K-\eta\right\|_{L^2}^2\\
&\leq&c \|\hat{\mathscr{L}}^K_n-\mathscr{L}^K\|^2_{\mathrm{HS}}+\frac{2\|\eta'\|^2_{\infty}}{K^2}=O_{\mathbb{P}}(n^{-1})+\frac{2\|\eta'\|^2_{\infty}}{K^2}.
\end{eqnarray*}
The constant $c$ can be chosen uniformly over the order of eigenfunction, since there are only $r<\infty$ eigenfunctions to consider.
The convergence rate for $\sup_j|\hat\lambda_j-\lambda_j|$ follows from the inequality $\sup_j|\hat\lambda_j-\lambda_j|\leq \|\hat{\mathscr{L}}^K_n-\mathscr{L}\|_{\mathrm{HS}}$ (e.g. \cite[equation 4.43]{bosq_book}).

\end{proof}

\begin{proof}[Proof of Corollary \ref{coro:consistency_B}]

We start with the decomposition:
\begin{eqnarray*}\left\| \hat{\mathscr{B}}^K_n-\mathscr{B}\right\|_{\mathrm{HS}}^2&\leq& 2\left\| \hat{\mathscr{B}}^K_n-\mathscr{B}^K\right\|_{\mathrm{HS}}^2+2\left\| {\mathscr{B}}^K-\mathscr{B}\right\|_{\mathrm{HS}}^2.
\end{eqnarray*}
If $b\in C^1([0,1]^2)$, then we may Taylor expand the second term on the right hand side to write
\begin{eqnarray*}
2\int_0^1\int_0^1 (b(x,y)-b^K(x,y))^2dxdy&\leq&2\sum_{i,j=1}^{K}\iint_{I_{ij}}2K^{-2}\sup_{(x,y)\in I_{ij} }\|\nabla b(x,y)\|^2_2dxdy\\
&\leq&4K^{-2}\sup_{(x,y)\in (0,1)^2 }\|\nabla b(x,y)\|^2_2.
\end{eqnarray*}

\noindent For the other term, we note that, almost everywhere on $\mathcal{T}_K$,
$$\|\hat{\mathscr{B}}^K_n-\mathscr{B}^K\|_{\mathrm{HS}}^2\leq\|\mathscr{D}^K_n-\mathscr{B}^K\|_{\mathrm{HS}}^2\leq 2\|\hat{\mathscr{L}}^K_n-\mathscr{L}^K\|_{\mathrm{HS}}^2+2\|{\mathscr{R}}^K_n-\mathscr{R}^K\|_{\mathrm{HS}}^2=O_{\mathbb{P}}(n^{-1}),$$
where $\mathscr{D}^K_n$ is the operator corresponding to the matrix $\Delta^K_n = R^K_n-\hat L_n^K$ and, with the $O_{\mathbb{P}}(n^{-1})$ as in Theorem \ref{thm:consistency_L}. 

Consider the decomposition
\begin{equation} \label{psi_dec}
\|\hat\psi_j-\psi_j\|^2_{L^2} \leq 2\|\hat\psi_j-\psi_j^K\|^2_{L^2} + 2 \|\hat\psi_j^K-\psi_j\|^2_{L^2}.
\end{equation}
By Taylor expansion we have that 
$$ \|\hat\psi_j^K-\psi_j\|^2_{L^2} \leq \frac{\|\psi_j' \|^2_{\infty}}{K^2},$$
and from Bosq \cite[Lemma 4.3]{bosq_book} that 
$$ \|\hat\psi_j-\psi_j^K\|^2_{L^2} \leq 8\sigma_j^{-2}\|\hat{\mathscr{B}}^K_n-\mathscr{B}^K\|_{\mathrm{HS}}.$$
The convergence rate for the estimated eigenfunctions is obtained by incorporating these two last inequalities in (\ref{psi_dec}).
The convergence rates for $\sup_j|\hat\beta_j-\beta_j|$ follow from the inequality $\sup_j|\hat\beta_j-\beta_j|\leq \|\hat{\mathscr{B}}^K_n-\mathscr{B}\|_{\mathrm{HS}}$ (see, e.g., Bosq \cite[Equation. 4.43]{bosq_book}).

\end{proof}

\begin{proof}[Proof of Corollary \ref{thm:best_lin_pred}]
Since $K\geq K^*$, it must be that
$$\|\hat{\mathscr{L}}^K_n-\mathscr{L}^K\|_\mathrm{HS}=O_{\mathbb{P}}(n^{-1/2}),\quad\&\quad \|\hat{\mathscr{B}}^K_n-\mathscr{B}^K\|_{\mathrm{HS}} =O_{\mathbb{P}}(n^{-1/2}),$$
almost everywhere on $\mathcal{T}_K$ (as has been shown in the proof of Theorem \ref{thm:consistency_L} and of Corollary \ref {coro:consistency_B}). 
Consequently, for almost all grids in $\mathcal{T}_K$,
$$\|\hat{\mathscr{R}}^K_n-\mathscr{R}^K\|_{\mathrm{HS}}\leq\|\hat{\mathscr{L}}^K_n-\mathscr{L}^K\|_{\mathrm{HS}}+\|\hat{\mathscr{B}}^K_n-\mathscr{B}^K\|_{\mathrm{HS}}=O_{\mathbb{P}}(n^{-1/2}).$$
It thus holds true that, for almost all grids in $\mathcal{T}_K$,
$$|\hat{\theta}_i-\theta_i^K|=O_{\mathbb{P}}(n^{-1/2}),\quad i=1,\ldots,K,$$
where $\hat\theta_i$ (resp. $\theta_i^K$) is the $i$th eigenvalue of $\mathscr{\hat{R}}^K_n$ (resp. $\mathscr{R}^K$). Since $\mathrm{rank}(\mathscr{R}^K)=K$, it must be that $\theta_1^K,\ldots,\theta_K^K>0$. Letting $g(x)=x^{-1}\mathbf{1}\{x>0\}$, and noting that $g$ is differentiable at $\{\theta_i^K\}_{i=1}^{K}$, the delta method thus implies
$$\left|\frac{\mathbf{1}\{\hat\theta_i>0\}}{\hat\theta_i}-\frac{\mathbf{1}\{\theta_i^K>0\}}{\theta_i^K}\right|=\left|\frac{\mathbf{1}\{\hat\theta_i>0\}}{\hat\theta_i}-\frac{1}{\theta_i^K}\right|=O_{\mathbb{P}}(n^{-1/2}),\quad i=1,\ldots,K,$$
for almost all grids in $\mathcal{T}_K$. 
Now observe that
\begin{equation}\label{pred-seq}
\hat{Y}^K_n:=\hat{\Pi}_n(X^K)=\sum_{j=1}^{\hat{r}}\sum_{i=1}^{K}\mathbf{1}\{\hat\theta_i>0\}\frac{\hat{\lambda}_j}{\hat{\theta}_i}\langle \hat{\varphi}_i,\hat{\eta}_j \rangle \langle \hat{\varphi}_i,X^K \rangle \hat{\eta}_j=\sum_{i=1}^{K}\mathbf{1}\{\hat\theta_i>0\}\left(\frac{\langle\hat\varphi_i,X^K\rangle}{\hat\theta_i}\right)\hat{\mathscr{L}}^K_n\hat\varphi_i.
\end{equation}
By the continuous mapping theorem, we know that the right hand side converges in probability to
$$\sum_{i=1}^{K}\left(\frac{\langle\varphi^K_i,X^K\rangle}{\theta^K_i}\right)\mathscr{L}^K\varphi^K_i=\sum_{j=1}^{{r}}\sum_{i=1}^{K}\frac{{\lambda}_j^K}{{\theta}_i^K}\langle {\varphi}_i^K,{\eta}_j^K \rangle \langle {\varphi}_i^K,X^K \rangle {\eta}_j^K=\Pi(X^K),$$
for almost all grids in $\mathcal{T}_K$, as $n\rightarrow\infty$. The fact that the rate of convergence is $O_{\mathbb{P}}(n^{-1/2})$ follows directly from the fact that each term of the summands in the right hand side of Equation \ref{pred-seq} has been shown to converge at the rate $O_{\mathbb{P}}(n^{-1/2})$. The corresponding result follows for $\|\hat{W}^K_n-\Psi(X^K)\|_{L^2}$ by writing 
$$\|\hat{W}^K_n-\Psi(X^K)\|_{L^2}=\|X^K-\hat{Y}^K_n-(X^K-\Pi(X^K))\|_{L^2}=\|\hat{Y}^K_n-\Pi(X^K)\|_{L^2}.$$
\end{proof}

Corollary \ref{coro:consistency_Rhat} follows directly from Theorem \ref{thm:consistency_L} and Corollary \ref{coro:consistency_B}.

\section{Concluding Remarks}\label{conclusions}

We conclude the paper with a short discussion and some perspectives regarding the role of smoothing, and the impact of high frequency noise and/or pure measurement error.

\subsection*{To Smooth or Not to Smooth} As discussed in detail in Section \ref{more_on_the_effects} of the Appendix, smoothing should be avoided prior to separating the smooth and rough components of the process, as it can confound the two types of variation and distort further analysis when $\mathscr{B}$ is not purely diagonal. At the same time, even if $\mathscr{B}$ is purely diagonal, our simulation results in Table \ref{table_median_secanrioH} show that our method can still perform at least as well as classical smoothing-based methods, leading to no apparent loss in efficiency. Therefore, it seems that smoothing prior to separation is either not advisable, or not necessary. Smoothing \emph{can} be applied, however, as a post-processing step, to each of the smooth and rough covariances obtained \emph{after} our methodology has been applied (see the discussion at the end of Section \ref{sec:estimation}). Such a post-processing smoothing step can lead to visually more appealing estimators of the smooth covariance $\mathscr{L}$; and, in the case of the rough covariance $\mathscr{B}$, to potentially more efficient estimators, if more regularity can be assumed on $\mathscr{B}$. In summary, we do not advocate that smoothing should be altogether replaced by our method. Instead, we suggest that in the presence of non-diagonal error covariance, smoothing is preferable as a post-processing rather than a pre-processing step. The two steps (separation and smoothing) are best seen as complementary.

\subsection*{High Frequency Noise} Our model $X=Y+W$ implicitly assumes that any high frequency fluctuations in $X$ should be attributed to local variations due to $W$ (i.e., rough components of variation exhibit short-range dependence). This reflects a common principle that high frequency features usually are localised in nature, as one assumes in many wavelet-based methods. Nevertheless, one may ask what may happen if there exist high frequency fluctuations in $X$ that are \emph{global}, that is, have analytic eigenfunctions, and so must be attributed to $Y$ --- for example, cases where $Y$ is not precisely of finite rank, but has most of its variation expressed in $r$ eigenfunctions, and a small part of its variation expressed by higher order eigenfunctions. This residual variation can be considered as nuisance noise, but one may wonder if it would impact the performance of our method. Simulations carried out in Section \ref{app_err_hf} of the Appendix consider precisely this scenario, by adding higher frequency components to $Y$, such as high frequency trigonometric functions or diffusion processes with analytic eigenfunctions. It is observed that the presence of this high frequency noise \emph{has a negligible effect} on the performance of our method, at least as far as estimation of $\mathscr{L}$ is concerned. Estimation of $\mathscr{B}$ is more appreciably affected, since the band is now contaminated, and more structural knowledge would be required to reliably separate the global from the local high frequency fluctuations. More detailed discussion of this point can be found in Section \ref{app_err_hf}.

\subsection*{Pure Measurement Error}

It can happen that further to the rough -- yet trace-class -- component $W$, there is still some i.i.d. measurement error which enters the model at the level of discrete measurement. The presence of such measurement does not impact the method of estimation of $\mathscr{L}$, since this is based on removing a band of size $\lceil K/4\rceil$ from the empirical covariance $R^K_n$, and carrying out matrix completion. Without additional assumptions, however, we would not be able to estimate the kernel $b$ of $\mathscr{B}$ near the diagonal. Additional simulations in Section \ref{app_err_hf} of the Appendix consider contamination by pure measurement error, and corroborate these theoretical predictions.

\section{Appendix}\label{supp}

This Appendix is structured as follows. Section \ref{more_on_the_effects} discusses the distorting effects of traditional FDA analysis on data that are characterised by two scales of variation in depth. Section \ref{counterexample_section} gives counterexamples that demonstrate that the combination of analyticity/banding assumptions is quite sharp (even more precisely, that without more assumptions on the banded component, analyticity of the smooth component is necessary). Section \ref{stepB} demonstrates that the scree-plot approach described in the main article indeed yield the rank-penalised estimator. Section \ref{data_analysis} illustrates our methodology by applying it to air pollution data. Finally, Section \ref{further_simulations} contains substantial additional results, as well as more detailed information on the simulation presented in Section \ref{sec:simulations}.

\subsection{More on the Effect of Smoothing and PCA}\label{more_on_the_effects}
Recall that our setup is
$$\rho(s,t)=\underset{\ell(s,t)}{\underbrace{\sum_{j=1}^{r}\lambda_j \eta_j(s)\eta_j(t)}}+\underset{b(s,t)}{\underbrace{\sum_{j=1}^{\infty}\beta_j \psi_j(s)\psi_j(t)}},$$
where: (1) $b(s,t)=0$ for $|s-t|>\delta$,  $0<\delta<1$; (2) $r<\infty$; (3) the $\{\eta_j\}$ are sufficiently smooth. The covariance kernel $\rho(\cdot,\cdot)$ of $X$ admits its own uniformly convergent Mercer expansion,
$$\rho(s,t)=\sum_{j=1}^{\infty}\theta_j \varphi_j(s)\varphi_j(t)=\sum_{j=1}^{r}\lambda_j \eta_j(s)\eta_j(t)+\sum_{j=1}^{\infty}\beta_j \psi_j(s)\psi_j(t).$$
The question now is: what is the relationship between the system $\{\varphi_j\}$ and the systems $\{\eta_j\}$ and $\{\psi_j\}$? If it so happens that the $\{\eta_j\}_{j=1}^r$ system is  orthogonal to the $\{\psi_j\}_{j=1}^{\infty}$ system, and we are fortunate enough that $\max_{i}\beta_{i}<\min_{i}\lambda_i$ then $\{\varphi_j=\eta_j:j\leq r\}$ and $\{\varphi_j=\psi_j:j>r\}$, and a direct Karhunen-Lo\`eve analysis will perfectly recover the smooth and rough variations. All that is required is a good rule for estimating the ``truncation point" $r$ (see e.g. Yao et al. \cite{PACE}, or Panaretos et al. \cite{PKMjasa} for AIC-type criteria), and the first few components of the expansion will give the smooth variation, while the remaining ones will give the rough variation, just as is typically assumed in FDA). Of course, if $\max_{i}\beta_{i}>\min_{i}\lambda_i$, then a direct Karhunen-Lo\`eve expansion will still recover the correct principal components of variation, but their order will not distinguish the smooth from the rough components.

However, if the $\{\eta_j\}_{j=1}^r$ are \emph{not} orthogonal to the $\{\psi_j\}_{j=1}^{\infty}$ (as may very well happen in practice) more severe distortions will arise: it may very well happen that neither $\eta$ nor $\psi$ will be eigenfunctions of $\mathscr{R}$, so that we cannot identify the carriers of smooth and rough variation from direct PCA. Assume, for example, that no pair $\{\eta_i,\psi_j\}$ is orthogonal. Then,
\begin{itemize}
\item[(a)] If $\beta_{1}>\lambda_k$ for some $k$, it is clear that the eigenfunctions $\{\varphi_j\}_{j\geq k}$ will be linear combinations of $\{\psi_j\}_{j\geq 1}$ and $\{\eta_j\}_{j\geq k}$. Thus, we will neither be able to recover the smooth components of variation beyond order $k$, nor the rough components: the extracted components of variation from order $k$ onwards will be confounded versions of smooth and rough components of variation. 

\item[(b)] Even if $\max_{i}\beta_{i}<\min_{i}\lambda_i$, it will still happen that $\phi_j\neq \psi_j$ for $j>r$ (since $\{\phi_j\}_{j>r}$ will be in the orthogonal complement of $\mathrm{span}\{\eta_1,...,\eta_r\}$, whereas $\{\psi_j\}_{j\geq 1}$ are not). In other words, the rough components of variation will be distorted (for example, if the $\{\psi_j\}_{j\geq 1}$ are locally supported, the $\{\phi_j\}_{j\geq r}$ will typically fail to be so). In fact, $\max_{i}\beta_{i}<\min_{i}\lambda_i$ alone does not even guarantee that $\phi_i=\eta_i$ for $i\leq r$. Depending on the spacings of $\{\lambda_j\}_{j=1}^{r}\cup\{\beta_j\}_{j\geq 1}$ it may happen that some of the $\{\phi_i\}_{i=1}^r$ could be linear combinations between the $\eta_j$ and the $\psi_j$. Thus the smooth components of variation could be distorted too.

\end{itemize}

\begin{figure}[t]
\centering
\includegraphics[scale=0.42]{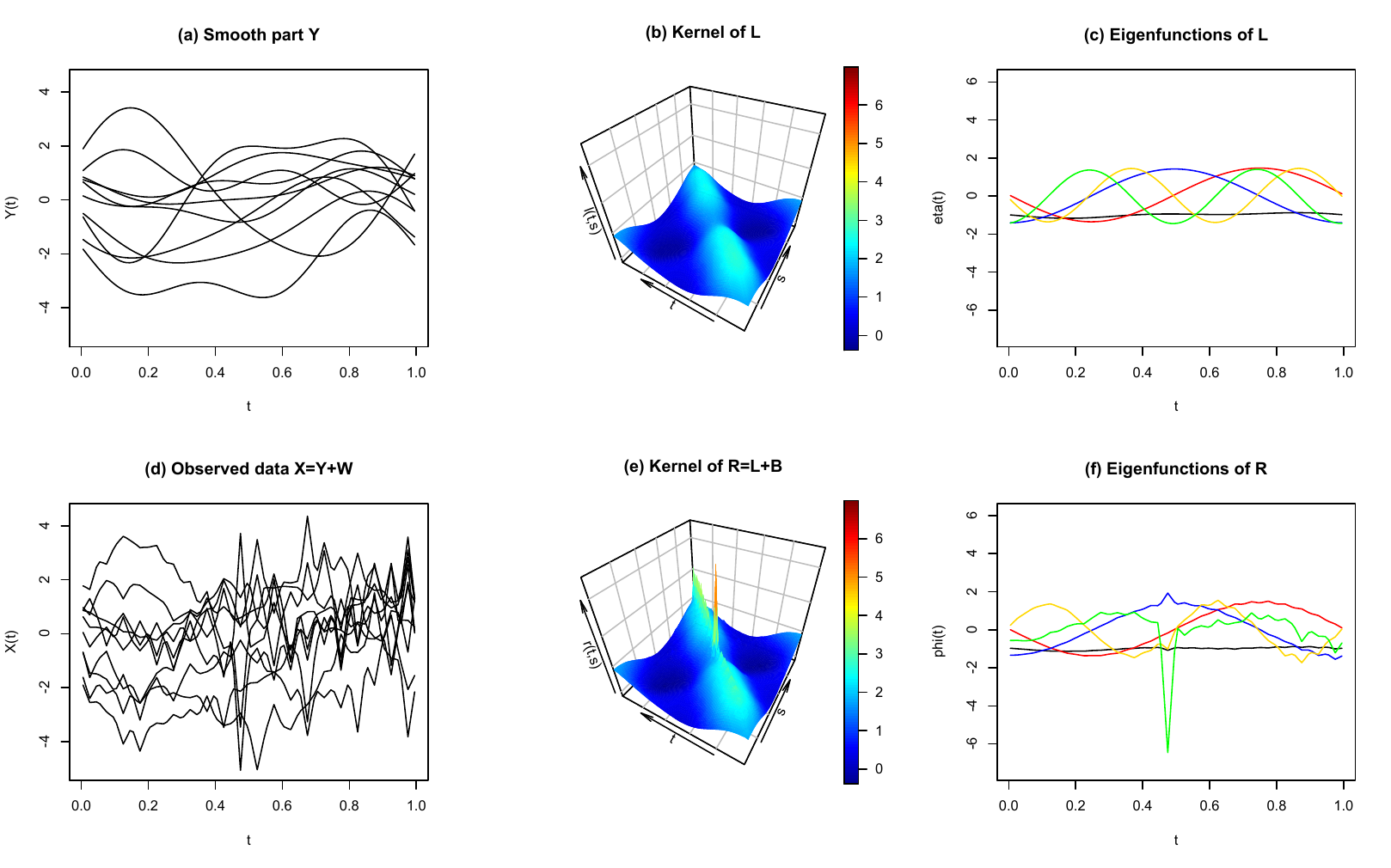} 
\caption{Ten smooth curves $Y_1,\ldots,Y_{10}$  (plot (a)) corresponding to a rank 5 smooth operator $\mathscr{L}$ (plot (b)), with eigenfunctions $\eta_j$ (plot (c), in decreasing order: black, red, blue yellow, green). To these smooth curves, we add uncorrelated banded rough processes, yielding observables $X_1,\ldots,X_{10}$ (plot (d)), whose covariance operator is $\mathscr{R}$ (plot (e)) with eigenfunctions $\varphi_j$ (plot (f)). }

\label{illustrative_example_data}
\end{figure}

For instance, Figure \ref{illustrative_example_data} presents a simulated example where the data are constructed as the sum of a smooth process $Y$ with trigonometric principal components and covariance $\mathscr{L}$ of rank 5; and a rough process $W$ built as the sum of locally supported rough principal components with covariance $\mathscr{B}$ of (non-trivial) band $\delta=0.05$ (see Section \ref{sec:simulations} for more details on this example; the eigenfunctions $\psi_j$ are triangular functions locally supported on non-overlapping subintervals of length $0.05$). The eigenvalues are chosen so that $\beta_1>\lambda_5$. We see that $X=Y+W$ has fifth eigenfunction $\varphi_5$ (in green) that is a distorted version of $\eta_5$ (indeed a linear combination of $\eta_5$ and $\psi_1$). It is also clear that the eigenfunctions $\{\varphi_j\}_{j\geq 5}$ of $X$ will typically not be locally supported (since they must be orthogonal to $\varphi_1,...,\varphi_5$), in contrast to the true rough eigenfunctions $\{\psi_j\}_{j\geq 1}$ that were chosen to be locally supported. Finally, we see that even eigenfunctions of order lower than 5 have been affected (as we mentioned earlier this could happen too, depending on the spacings), and they contain artefacts resulting from confounding with rough eigenfunctions.

If smoothing were to also take place prior to a Karhunen-Lo\`eve expansion, then there could be a further confounding effect, at least for finite samples. Whether using splines or the PACE algorithm, we would essentially be convolving the discrete data with a kernel of some positive bandwidth $h>0$ (spline smoothing can be seen as approximately kernel smoothing with an equivalent kernel, Silverman \cite{silverman1984spline}). If the size of this bandwidth is comparable with $\delta$ (which it may be in finite samples), then the variations of scale $\delta$ due to the $W$ component would propagate to larger scales, entangling the covariance of $Y$ with that of $W$. Smoothing could also yield smoothed versions of the $\eta_j$ and the $\psi_j$ that are even further away from being orthogonal than initially (with the effects discussed earlier). The effect of smoothing is hard to quantify precisely, since the behaviour of the $h$ parameter is typically understood asymptotically, and is usually chosen in a data dependent manner in finite samples (which can also be a source of further trouble, see for instance Opsomer et al. \cite{opsomer}).

If we could take the scale of $W$ to be $\delta\approx 0$, then $W$ would correspond to a generalised noise process. For instance, take the rough component $W$ as being precisely white noise of level $\sigma^2$ (corresponding to taking $\mathscr{B}$ as being $\sigma^2$ times the identity), and interpret the equality $X(t)=Y(t)+W(t)$ in the weak sense $\langle X,f\rangle = \langle Y,f\rangle + \sigma^2\int_0^1f(t)dB_t$, for any $f\in L^2[0,1]$, and for $\{B_t\}$ a standard Brownian motion. In this case there is no confounding problem: the eigenfunctions $\varphi_j$ corresponding to $X$ would be exactly equal to the eigenfunctions $\eta_j$ corresponding to $Y$, for all $j=1,\ldots,r$ (the remaining $\varphi_j$ could be taken to be any ONS for the orthogonal complement of $\mbox{span}\{\eta_1,\ldots,\eta_n\}$). Furthermore, the $\theta_j$ would simply satisfy $\theta_j=\lambda_j\mathbf{1}\{j\leq r\}+\sigma^2$. In particular their order would not change. Thus, smoothing (either by spline smoothing or by the PACE algorithm) followed by PCA would have essentially no distorting effects on our understanding of the covariation properties of $X$.

\subsection{Analyticity and Uniqueness}\label{counterexample_section}

In Remark \ref{rem_analytic} following Theorem \ref{infinite_unique_dec} (conditions ensuring uniqueness of the decomposition $\mathscr{R}=\mathscr{L}+\mathscr{B}$), it was pointed out that the conditions of the theorem can actually be strictly weakened, while retaining the same conclusion. One can retain the bandedness assumption on $(\B_1,\B_2)$, but replace the assumption of requiring finite ranks and analytic eigenfunctions for $(\Lo_1,\Lo_2)$, by the weaker assumption that the kernels of $(\Lo_1,\Lo_2)$ be analytic on an open set $U\subset [0,1]^2$ that contains the larger of the two bands, $U\supset\{(s,t)\in[0,1]^2: |t-s|\leq \max(\delta_1,\delta_2)\}$. However, this assumption cannot be further weakened, unless we are willing to make stronger assumptions on $\mathscr{B}$. If we seek \emph{completely non-parametric} conditions for  \emph{unique recovery} of a decomposition $\mathscr{R}=\mathscr{L}+\mathscr{B}$ from knowledge of the sum $\mathscr{R}$, our assumptions are quite sharp: one cannot weaken one of them without strengthening the other. For instance, if we do not impose further restrictions on $\mathscr{B}$ than just bandedness, then analyticity of $\mathscr{L}$ is \emph{necessary} and cannot be weakened. We now construct two counterexamples to demonstrate this.

\subsubsection{Counterexample 1}

 We provide a counterexample to show that the analyticity assumption cannot be further weakened. Let $g$ be the self-convolution of the bump function defined as
$$ \tilde g(s,t)=\begin{cases}
      \exp\left\{-\frac{1}{1-\big(2(s-t)\big)^2}\right\}& \text{if}\, |s-t|< 1/2, \\
      0& \text{if}\, |s-t|\ge1/2.
\end{cases}$$
The function $g$ is supported on the band $|s-t|<1$, is $C^\infty$ everywhere and is analytic except on the line $|s-t|=1$. Consider now two stationary kernels $\ell_1$ and $\ell_2$ on $[0,10]^2$ defined as:
$$\ell_1(s,t)=\begin{cases}
      \displaystyle\frac{1}{1+(s-t)^2}+g(s,t)& \text{if}\, |s-t|< 1 \\
       \displaystyle\frac{1}{1+(s-t)^2}& \text{if}\, |s-t|\ge1,
\end{cases}$$
and
$$\ell_2(s,t)= \frac{1}{1+(s-t)^2}.$$
Note that: (1) $\ell_2$ is analytic; (2) $\ell_1$ is analytic, except on the line $|s-t|=1$, and is $C^{\infty}$ everywhere. Consequently, even though
$$\ell_1(s,t)=\ell_2(s,t),\qquad \forall |s-t|\ge1,$$
it still happens that
$$\ell_1(s,t)\neq\ell_2(s,t),\qquad \forall |s-t|<1.$$
Now define banded kernels, with a bandwidth of at most 1,
$$b_1(s,t)=0, \quad b_2(s,t) = g(s,t).$$

We now have
$$\ell_1+b_1=\ell_2+b_2,$$
but of course $\ell_1\neq \ell_2$ and $b_1\neq b_2$.

\subsubsection{Counterexample 2} The first counterexample included stationary kernels of infinite rank. We now show that analyticity remains a necessary assumption even in a finite rank situation. For some $\delta\in(0,1)$, let $\phi_{\delta}$ be the self convolution of the function
\begin{equation}
\tilde \phi_{\delta}(x)=\begin{cases}
      \displaystyle\exp\left\{-\frac{1}{1-(2x/\delta)^2}\right\}& \text{if }\, x\in[0,\delta/2), \\
       \displaystyle 0&  \text{otherwise},
\end{cases}
\end{equation}
and let $\psi(x)$ be an analytic function on $[0,1]$ (for example $\psi(x)=x$). Define the covariance kernel
$$\rho(x,y)=\psi(x)\psi(y)+\phi_{\delta}(x)\phi_{\delta}(y),$$
and note that it has rank 2, while each of its summands has rank 1. Moreover, the component $\phi_{\delta}(x)\phi_{\delta}(y)$ is supported on $[0,\delta]^2$, and thus it is banded with bandwidth $\delta$. It follows that we may define:
\begin{eqnarray*}
\ell_1(x,y)&=&\psi(x)\psi(y),\\
\ell_2(x,y)&=&\psi(x)\psi(y)+\phi_{\delta}(x)\phi_{\delta}(y),\\
b_1(x,y)&=&\phi_{\delta}(x)\phi_{\delta}(y),\\
b_2(x,y)&=&0,\\
\end{eqnarray*}
such that $\ell_1\neq \ell_2$ and $b_1\neq b_2$ but
$$\ell_1+b_1=\ell_2+b_2.$$
Note that once again the reason uniqueness fails is that analyticity does not hold on an open interval containing the band $\{|x-y|<\delta\}$: the kernel $\phi_{\delta}(x)\phi_{\delta}(y)$ is analytic on open neighbourhoods of any pair of points on the band $|x-y|=\delta$, except for two such points: the points $\{x=0,y=\delta\}$ and $\{x=\delta,y=0\}$.

We conclude this counterexample by noting that the fact that $\phi_{\delta}(x)\phi_{\delta}(y)$ was block-diagonal and of rank 1 is not essential: one can define the continuous superposition
$$\varphi(x,y)=\frac{1}{1-2\delta}\int_{\delta}^{1-\delta}\phi_{u}(x)\phi_{u}(y)du,$$
that will be supported on the entire band $\{|x-y|<\delta\}$ and will be of inifinite rank, and still repeat the same example by replacing $\phi_{\delta}(x)\phi_{\delta}(y)$ by $\varphi(x,y)$.

\subsubsection{Discussion of the Counterexamples}

The two counterexamples illustrate the source of the difficulty, and indicate how yet more counterexamples could be constructed. Let $\mathscr{G}$ be a smooth covariance, and $\mathscr{B}_1$ and $\mathscr{B}_2$ be some banded covariances (not even necessarily of the same bandwidth). Define $\mathscr{R}=\mathscr{G}+\mathscr{B}_1+\mathscr{B}_1$. Then, note that we can write:
$$\mathscr{R}=\underset{\mathscr{L}}{\underbrace{\mathscr{G}}}+\underset{\mathscr{B}}{\underbrace{\mathscr{B}_1+\mathscr{B}_1}}\qquad\mbox{or}\qquad\mathscr{R}=\underset{\mathscr{L}}{\underbrace{\mathscr{G}+\mathscr{B}_1}}+\underset{\mathscr{B}}{\underbrace{\mathscr{B}_2}}\qquad\mbox{or}\qquad\mathscr{R}=\underset{\mathscr{L}}{\underbrace{\mathscr{G}+\mathscr{B}_2}}+\underset{\mathscr{B}}{\underbrace{\mathscr{B}_1}}.$$
In particular, one can devise such decompositions for any combination of $C^k$ assumptions imposed on $\mathscr{G}$,  $\mathscr{B}_1$, and $\mathscr{B}_2$. It follows that the assumptions on analyticity/banding should be seen as describing \emph{what is feasible} in a purely non-parametric setup. From that perspective, the assumptions are quite intuitive: if we want to separate two components $Y$ and $W$ that represent two different scales of variation, then $W$ should have variations at  most of some scale $\delta$, and $Y$ should have variations at a scale that is at least $\delta$.

\subsection{On the Scree Plot Approach for the Choice of Tuning Parameter}\label{stepB} The aim of this section is to illustrate the correspondence between steps (C) and (C') in Section \ref{sec:optimisation}. Specifically, we will show how selecting a value $c>0$ and solving the problem
\begin{equation}\label{constrained}
\min_{\theta \in \R^{K\times K}}\mathrm{rank}(\theta)   \qquad \textrm{subject to} \,\, \left\| P^{K}\circ (R^K_n-\theta)\right\|^2_F<c,
\end{equation}
corresponds to selecting a value $\tau>0$ and solving the problem
\begin{equation}\label{lagrange}
{\min}_{\theta\in \mathbb{R}^{K\times K}}\left\{\left\| P^K\circ ( R^K_n-\theta)\right\|_F^2 +\tau \,\mathrm{rank}(\theta)\right\}.
\end{equation}

To do this, we first introduce some definitions and make some observations. Let 
\begin{equation}\label{fit_definition}
f(i)=\min_{\theta\in\mathbb{R}^{K\times K},\mathrm{rank}(\theta)\leq i} \{\left\| P^K\circ ( R^K_n-\theta)\right\|_F^2\},\qquad i=1,\ldots,K,\end{equation} 
be the fit at rank $i$, and extend $f$ to the positive reals by linear interpolation. Call the graph of $u\mapsto f(u)$ the ``scree plot". Observe that $f(u)$ is non-increasing. Without loss of generality, assume that $f(1)=1$ and $f(K)=0$, otherwise renormalise appropriately. Define $f^{-1}$ to be
$$f^{-1}(c)=\inf\{x\in \mathbb{R}: f(x)\leq c\}.$$
With these definitions in place, note that solving \ref{constrained} for a given $c>0$ is equivalent to solving
\begin{equation}
\min_{\theta \in \R^{K\times K}} \left\| P^{K}\circ (R^K_n-\theta)\right\|^2_F   \qquad \textrm{subject to} \,\, \mathrm{rank}(\theta)\leq \lceil f^{-1}(c)\rceil,
\end{equation}
Finally, define the \emph{increments} of the scree plot as
$$\Delta_i:=f(i)-f(i+1)\geq 0,\quad i=1,\ldots,K-1;\quad \Delta(K):=0.$$
We now have
\begin{lemma}
If $x\mapsto f(x)$ is strictly convex, then, for any constant $c>0$, the problem \ref{constrained} with constraint parameter $c$ is equivalent to \ref{lagrange} with a tuning parameter in the range
{$$\max\{\Delta_j:j\geq \lceil f^{-1}(c) \rceil\}< \tau <\min\{\Delta_j:j\leq \lceil f^{-1}(c) \rceil -1\}.$$}
Furthermore, $\tau$ can be made arbitrarily small by choosing $c$ to be arbitrarily small.

\end{lemma}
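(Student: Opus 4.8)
The plan is to exploit the piecewise-linear convex structure of the scree plot $f$ and to recognise the penalised problem \ref{lagrange} as the Legendre-type transform of the constrained problem \ref{constrained}. First I would observe that, since we are minimising over $\theta$ with a rank penalty, the inner minimisation in \ref{lagrange} can be performed in two stages: first fix the rank at some integer $i$ and minimise the fit, giving $f(i)$, then minimise $f(i)+\tau i$ over $i\in\{1,\dots,K\}$. Thus \ref{lagrange} is equivalent to
\begin{equation}\label{eq:screemin}
\min_{i\in\{1,\dots,K\}}\{f(i)+\tau i\},
\end{equation}
and \ref{constrained} is equivalent to selecting the smallest integer $i$ with $f(i)\le c$, namely $i_c:=\lceil f^{-1}(c)\rceil$ (here one uses that $f$ is non-increasing, so the sublevel set is an up-set, and that by construction the minimiser of the fit at rank $\le i_c$ is attained at rank exactly $i_c$ unless $f$ is flat there, which strict convexity rules out).

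The core of the argument is then a discrete first-order optimality condition for \ref{eq:screemin}. The integer $i^\star$ minimises $i\mapsto f(i)+\tau i$ if and only if the forward and backward difference conditions hold:
\begin{equation}\label{eq:foc}
f(i^\star)+\tau i^\star \le f(i^\star+1)+\tau(i^\star+1)\quad\text{and}\quad f(i^\star)+\tau i^\star\le f(i^\star-1)+\tau(i^\star-1),
\end{equation}
which rearrange to $-\Delta_{i^\star}\le \tau$ is automatic (since $\Delta_{i^\star}\ge 0$) and, more to the point, to $\Delta_{i^\star}\le \tau\le \Delta_{i^\star-1}$. Here strict convexity of $f$ enters decisively: it makes the increments $\Delta_i$ strictly decreasing in $i$, so the map $i\mapsto \Delta_{i-1}$ partitions $(0,\infty)$ into the consecutive intervals $(\Delta_i,\Delta_{i-1})$, and for $\tau$ in the open interval $(\Delta_{i^\star},\Delta_{i^\star-1})$ the minimiser of \ref{eq:screemin} is \emph{uniquely} $i^\star$. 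I would then simply set $i^\star=i_c=\lceil f^{-1}(c)\rceil$ and read off the stated range $\max\{\Delta_j:j\ge \lceil f^{-1}(c)\rceil\}<\tau<\min\{\Delta_j:j\le \lceil f^{-1}(c)\rceil-1\}$, noting that by strict convexity $\max\{\Delta_j:j\ge i^\star\}=\Delta_{i^\star}$ and $\min\{\Delta_j:j\le i^\star-1\}=\Delta_{i^\star-1}$, so the two formulations of the interval coincide; for such $\tau$ the optimiser of \ref{lagrange} is the rank-$i_c$ matrix, which is exactly the optimiser of \ref{constrained}.

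For the final assertion that $\tau$ can be taken arbitrarily small by shrinking $c$, I would argue that as $c\downarrow 0$ we have $f^{-1}(c)\uparrow K$ (since $f$ is continuous, strictly decreasing on the relevant range, and $f(K)=0$), hence $i_c\to K$, and therefore the admissible upper endpoint $\Delta_{i_c-1}=\Delta_{\lceil f^{-1}(c)\rceil-1}\to \Delta_{K-1}$; but in fact $i_c$ eventually equals $K-1$ or $K$ as $c$ decreases, so the admissible window lies below $\Delta_{i_c-1}$, which tends to $\Delta_{K-1}$, and one must be slightly careful: the cleanest statement is that the window $(\Delta_{i_c},\Delta_{i_c-1})$ is contained in $(0,\Delta_{i_c-1})$, and since $\Delta_j\to 0$ is \emph{not} guaranteed merely from convexity (the increments are decreasing but need not vanish), the honest route is to note that once $c<f(K-1)$ we are forced to $i_c=K$, the constraint $\mathrm{rank}(\theta)\le K$ is vacuous, and $f(K)=0$; then any $\tau\in(0,\Delta_{K-1})$ works, and more generally for $c$ small the window's lower end $\Delta_{i_c}$ can be taken as the value of $\tau$, which can be made small. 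The main obstacle I anticipate is precisely this endpoint bookkeeping — pinning down $i_c=\lceil f^{-1}(c)\rceil$ exactly (including the behaviour of $f^{-1}$ under the ceiling and the interpolation), and handling the boundary cases $i^\star\in\{1,K\}$ where only one of the two inequalities in \ref{eq:foc} is active — rather than the convexity argument itself, which is routine once the piecewise-linear picture is in place.
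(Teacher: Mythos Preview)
Your argument is correct and matches the paper's: both show that $q=\lceil f^{-1}(c)\rceil$ is the unique minimiser of $i\mapsto f(i)+\tau i$ precisely when $\tau\in(\Delta_q,\Delta_{q-1})$, with strict convexity ensuring the increments $\Delta_i$ are strictly decreasing so this interval is nonempty. The paper verifies global optimality directly by telescoping $f(q\pm j)-f(q)$ into averages of increments, whereas you check only the two neighbour inequalities (your ``$-\Delta_{i^\star}\le\tau$ is automatic'' is a slip --- the forward condition actually gives $\Delta_{i^\star}\le\tau$, which is the lower bound you then correctly use) and rely on convexity for local-implies-global; your endpoint discussion for the ``$\tau$ arbitrarily small'' claim is in fact more careful than the paper's own one-line justification.
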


\begin{proof}
Choose $c>0$ and let $q=\lceil f^{-1}(c)\rceil$. If we can choose a value of $\tau$ that simultaneously satisfies
\begin{eqnarray*}
\tau (q-j)+ f(q-j) &>&\tau q + f(q) ,\quad \forall\, j<q\\
\tau(q+j) + f(q+j) &>& \tau q + f(q),\quad \forall\, j \geq 1
\end{eqnarray*}
then a candidate matrix $\theta$ will be a solution to the penalised optimisation problem \ref{lagrange} with tuning parameter $\tau$ if and only if $\mathrm{rank}(\theta)=q$ and $\|P^K\circ(R^K_n-\theta)\|^2_F=f(q)$. In other words, the optima of the penalised problem \ref{lagrange} will coincide with the optima of the constrained problem \ref{constrained}.

We now examine when choosing such a $\tau$ is feasible. Notice that the two conditions that $\tau$ must satisfy are equivalent to:
$$
\tau <\frac{f(q-j)-f(q)}{j} ,\quad \forall\, j<q\qquad\&\qquad
\tau > \frac{f(q)-f(q+j)}{j},\quad \forall\, j \geq 1.
$$
And so, by telescoping,
$$ \frac{f(q-j)-f(q)}{j}=\underset{j\,\mathrm{terms}}{\underbrace{\frac{f(q-j)-f(q-j+1)}{j}+\hdots+\frac{f(q-1)-f(q)}{j}}},$$
and
$$ \frac{f(q)-f(q+j)}{j}=\underset{j\,\mathrm{terms}}{\underbrace{\frac{f(q)-f(q+1)}{j}+\hdots+\frac{f(q+j-1)-f(q+j)}{j}}}.$$
We may thus re-write the conditions on $\tau$ as
\begin{eqnarray*}
\tau &<&\frac{f(q-j)-f(q-j+1)}{j}+\hdots+\frac{f(q-1)-f(q)}{j} ,\quad \forall\, j<q,\\
\tau &>&\frac{f(q)-f(q+1)}{j}+\hdots+\frac{f(q+j-1)-f(q+j)}{j},\quad \forall\, j \geq 1.
\end{eqnarray*}
By convexity of arithmetic averaging, a sufficient condition for the above to be true is to require
\begin{eqnarray*}
\tau &<&f(i)-f(i+1):=\Delta_i,\quad \forall\, i \leq q-1, \\
\tau &>& f(i)-f(i+1)=\Delta_i,\quad \forall\, i \geq q.
\end{eqnarray*}

Since $x\mapsto f(x)$ is strictly convex, the sequence $\Delta_i$ is strictly decreasing in $i$. It follows that the last two conditions are compatible,  and we may choose any $\tau$ in the range
{$$\max\{\Delta_j:j\geq \lceil f^{-1}(c) \rceil\}< \tau <\min\{\Delta_j:j\leq \lceil f^{-1}(c) \rceil -1\},$$} while retaining the same optima for the two problems. Furthermore, since $\Delta_j$ can be made arbitrarily small for $j\leq \lceil f^{-1}(c) \rceil$ by choosing $c$ to be sufficiently small, we see that $\tau$ can be taken to be arbitrarily small by appropriate choice of $c$.

\end{proof}

Note that  if $x\mapsto f(x)$ is convex, then it will almost surely be strictly convex since $\{f(i)\}_{i\geq 1}$ are continuous random variables. We conclude this section by establishing the validity of the elbow selection rule as sample size diverges.

\begin{lemma}
Assume the same conditions and context as in Proposition \ref{prop:optimisation}. Then, and for almost all grids in $\mathcal{T}_K$, it holds that
$$\underset{n\to\infty}{\lim\sup}\,f(i)=0\qquad\mbox{almost surely},$$
for all $i\geq r$
whereas
$$\underset{n\to\infty}{\lim\inf}\,{f(i)}>\frac{1}{2}\sum_{j=i+1}^{r}\zeta_{j}^2>0\qquad\mbox{almost surely},$$
for all $i<r$, whenever $r>1$. Here $r=\mathrm{rank}(L^K)$ is the true rank of $\mathscr{L}$, and $\{\zeta_i\}_{i=1}^{r}$ are  non-zero eigenvalues of the symmetric $K\times K$ matrix $U^K$, obtained by retaining the top-right and bottom-left $r\times r$ submatrices of $L^K$, and setting all other entries equal to zero.
\end{lemma}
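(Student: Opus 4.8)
The plan is to control $f(i)$ by comparing the band-deleted fit of the empirical covariance $R^K_n$ with that of the population covariance $R^K$, and then to compute the population fit exactly. Recall that $f(i)$ is defined via $R^K_n$, so the first step is a stability argument: since $\hat R^K_n := R^K_n \to R^K$ entrywise almost surely (indeed $K^{-2}\|R^K_n - R^K\|_F^2 = O_{\mathbb P}(n^{-1})$, as established in the proof of Theorem~\ref{thm:consistency_L}), and since for each fixed rank $i$ the map $M \mapsto \min_{\mathrm{rank}(\theta)\le i}\|P^K\circ(M-\theta)\|_F^2$ is $1$-Lipschitz in $\|P^K\circ\cdot\|_F$, we get that $f(i)$ converges almost surely to the \emph{population fit}
$$f_\infty(i):=\min_{\theta\in\mathbb{R}^{K\times K},\ \mathrm{rank}(\theta)\le i}\big\|P^K\circ(R^K-\theta)\big\|_F^2.$$
(The normalisation $f(1)=1$, $f(K)=0$ is harmless; one works with the un-normalised quantities and rescales at the end, noting $f_\infty(K)=0$ since $L^K$ itself is a feasible completion.) Thus it suffices to prove $f_\infty(i)=0$ for $i\ge r$ and $f_\infty(i)>\tfrac12\sum_{j=i+1}^{r}\zeta_j^2$ for $i<r$.

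For $i\ge r$: by Proposition~\ref{prop:optimisation} (valid since $\delta<1/4$, $K\ge 4r+4$, for a.e.\ grid), $L^K$ has rank exactly $r\le i$ and satisfies $P^K\circ(R^K-L^K)=P^K\circ B^K=0$ because $B^K$ is banded with bandwidth at most $\lceil K/4\rceil$ (Lemma~\ref{lemma:discrete_band} with $\delta<1/4$) and $P^K$ kills exactly those entries. Hence $\theta=L^K$ is feasible for the rank-$i$ problem and attains the value $0$, so $f_\infty(i)=0$. Combined with the convergence above, $\limsup_n f(i)=0$ a.s.

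For $i<r$, I want a \emph{lower bound} on $f_\infty(i)$. The key observation is that $P^K\circ(R^K-\theta)=P^K\circ(L^K-\theta)$ (again since $P^K\circ B^K=0$), so $f_\infty(i)=\min_{\mathrm{rank}(\theta)\le i}\|P^K\circ(L^K-\theta)\|_F^2$. Now restrict attention to the two off-band corner blocks: let $U^K$ be $L^K$ with everything outside the top-right and bottom-left $r\times r$ blocks set to zero (these corner blocks lie entirely in the support of $P^K$, since $K\ge 4r+4$ forces $|i-j|>\lceil K/4\rceil$ there). Then $\|P^K\circ(L^K-\theta)\|_F^2 \ge \|U^K - \Pi(\theta)\|_F^2$, where $\Pi(\theta)$ is $\theta$ restricted to those two corner blocks; since $\mathrm{rank}(\Pi(\theta))\le\mathrm{rank}(\theta)\le i$, Eckart--Young applied to the symmetric matrix $U^K$ gives $\|U^K-\Pi(\theta)\|_F^2 \ge \sum_{j>i}\zeta_j^2$ where $\zeta_1,\dots$ are the eigenvalues of $U^K$ in decreasing order of modulus. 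A slight care is needed because the top-right block and bottom-left block of $U^K$ are transposes of each other, so the relevant ``rank budget'' argument must be run on the full symmetric $K\times K$ matrix $U^K$ and I should track a factor of $2$ accounting for the two symmetric corners being counted together — this is exactly where the $\tfrac12$ in the statement comes from. Concretely: $\|P^K\circ(L^K-\theta)\|_F^2$ contains \emph{both} corner blocks, giving $2\sum_{j>i,\ j\le r}\zeta_j^2$ before any approximation, but the best rank-$i$ fit can recover one corner's worth of the leading directions, leaving at least $\sum_{j=i+1}^{r}\zeta_j^2$; dividing conservatively yields the stated $\tfrac12\sum_{j=i+1}^{r}\zeta_j^2$. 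Since $L^K$ has non-vanishing minors of order $r$ a.e.\ on $\mathcal{T}_K$ (Theorem~\ref{Analytic_implies_mino_condition}), the corner blocks are themselves rank $r$, so all $\zeta_1,\dots,\zeta_r$ are non-zero a.e., making the bound strictly positive. Passing to the limit, $\liminf_n f(i) = f_\infty(i) > \tfrac12\sum_{j=i+1}^{r}\zeta_j^2 > 0$ a.s.

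The main obstacle I anticipate is the bookkeeping in the lower bound for $i<r$: one must argue carefully that no rank-$i$ matrix $\theta$ can simultaneously match $L^K$ on \emph{all} off-band entries well, and the cleanest route is to throw away all off-band information except the two $r\times r$ corners and invoke Eckart--Young there, while correctly handling the symmetry-induced duplication of those corners (hence the factor $\tfrac12$). Everything else — the Lipschitz stability of $f(i)$, the identification of the population fit, and the $i\ge r$ case — is routine given Proposition~\ref{prop:optimisation}, Lemma~\ref{lemma:discrete_band}, and Theorem~\ref{Analytic_implies_mino_condition}.
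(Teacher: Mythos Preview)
Your overall strategy matches the paper's: for $i\ge r$ you both plug in $\theta=L^K$ and use $P^K\circ B^K=0$; for $i<r$ you both restrict attention to the two $r\times r$ corner blocks (the matrix $U^K$) and invoke Eckart--Young. The organization differs slightly: you pass to the population fit $f_\infty(i)$ first via a Lipschitz argument and work there, whereas the paper argues by contradiction directly on $f_n(r-1)$, supposing it falls below $\zeta_r^2/2$ infinitely often and then absorbing the sampling error to reach a contradiction with Eckart--Young.

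There is, however, a genuine gap in your lower-bound step. You assert $\mathrm{rank}(\Pi(\theta))\le\mathrm{rank}(\theta)\le i$, where $\Pi$ retains only the two off-diagonal $r\times r$ corner blocks. This is false: zeroing entries can \emph{increase} rank. If $\theta$ has top-right block $A$ and bottom-left block $B$, then $\Pi(\theta)$ has rank $\mathrm{rank}(A)+\mathrm{rank}(B)\le 2i$, not $i$. (The paper's proof makes an analogous imprecise claim about its $\vartheta_k$, so this is a shared soft spot rather than a deviation from the paper.) Your account of the factor $\tfrac12$ is also not the paper's: in the paper the $\tfrac12$ is purely a buffer in the contradiction argument---one reserves $\zeta^2/2$ of room for the sampling error $\|P^K\circ(R^K_{n}-R^K)\|_F^2$ and $\zeta^2/2$ for the fit, so that their combination stays below the Eckart--Young gap $\zeta^2$. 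It has nothing to do with the two symmetric corners being ``counted together''. Since your Lipschitz passage to the population limit removes the need for any such buffer, once you repair the rank bookkeeping you should in fact obtain the bound without the $\tfrac12$.
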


\begin{proof}
We will write $f_n(i)$ instead of $f(i)$ in order to highlight the dependence on $n$. Let $\mathcal{A}_K\subseteq\mathcal{T}_K$ be the set of grids for which Proposition \ref{prop:optimisation} is valid, and fix a grid $\mathbf{t}_K\in \mathcal{A}_K$. Note that this suffices for the purposes of the proof, since $\mathcal{A}_K$ is of full Lebesgue measure. Now, note that
$$f_n(r)\leq \left\| P^K\circ ( R^K_n-L^K)\right\|_F^2\stackrel{\mathrm{a.s.}}{\longrightarrow}\left\| P^K\circ ( R^K-L^K)\right\|_F^2=0,$$
where $r=\mathrm{rank}(L^K)$. Consequently, $f_n(j)\leq f_n(r)\stackrel{\mathrm{a.s.}}{\rightarrow}0$ for all $j\geq r$, and obviously $$\underset{n\to\infty}{\lim\sup}\,f_n(i)=0\qquad\mbox{almost surely},$$
for all $i\geq r$. We now turn to the second assertion. We will consider the case $i=r-1$ (the remaining cases follow similarly). Write $\zeta=\zeta_r>0$ for the smallest eigenvalue of $U^K$. First, note that this must be non-zero, since  Theorem \ref{discrete_ident} implies that all $r\times r$ minors of $L^K$ are of full rank $r$.

We will argue by contradiction: suppose that the event $\{f_n(r-1)<\zeta^2/2$ infinitely often$\}$ has positive probability. It follows that there exists a sequence $\theta_k$ of rank $r-1$ random matrices and a subsequence $\{R^K_{n_k}\}$ of $\{R^K_n\}$ such that
$$\left\| P^K\circ ( R^K_{n_k}-\theta_k)\right\|^2_F=\left\| P^K\circ  R^K_{n_k}-P^K\circ \theta_k\right\|^2_F< \zeta^2/2,\qquad \forall\,k\geq 1,$$
with positive probability. On the other hand, we know that 
$$ \left\| P^K\circ  R^K_{n_k}-P^K\circ R^K\right\|^2_F \stackrel{\mathrm{a.s.}}{\longrightarrow} 0. $$
Consequently, since $P\circ (L^K-R^K)=0$, it follows that for all $k$ sufficiently large, 
$$ \left\| P^K\circ  \theta_k-P^K\circ L^K\right\|^2_F< \zeta^2/2+\zeta^2/2=\zeta^2,$$
with positive probability. Now let $\vartheta_k$ denote the symmetric matrix formed by retaining the bottom-left and top-right $r\times r$ minors of $\theta_k$, and setting the remaining elements equal to zero. Since our assumptions entail that $K\geq K^*=  {4r+4}$, we now have:
\begin{enumerate}
\item  By Theorem \ref{Analytic_implies_mino_condition}, $U^K$ is of rank $r$, and of course $\vartheta_k$ is of rank at most $r-1$, for all $k$, with probability 1. 

\item The event $ \left\| P^K\circ  \theta_k-P^K\circ L^K\right\|_F <\zeta^2$ has positive probability, and thus the event $\|\vartheta_k-U^K\|_F^2<\zeta^2$ also has positive probability. 
\end{enumerate}
These two conclusions constitute a contradiction: the closest element to $U^K$ from within the set $\{\theta: \mathrm{rank}(\theta)=r-1\}$ is the $(r-1)$-spectral truncation of $U^K$, and this has squared Frobenius distance from $U^K$ equal to $\zeta^2$. This concludes the proof.

\end{proof}

\subsection{Data Analysis: Application to Air Pollution Data}\label{data_analysis}
As an illustration of our method, we analyse a data set related to the air quality in the city of Geneva, Switzerland. The data are comprised of measurements of the concentration of nitrogen dioxide (NO2) in the air (in micrograms per cubic meter), that have been recorded hourly at the ``L'Ile" station, starting on the second Monday of September and until the second Sunday of November from 2005 to 2011.  The data set can be accessed at:
\begin{center}
 \texttt{http://ge.ch/air/qualite-de-lair/requete-de-donnees}
 \end{center} 
Viewed as functional data, these measurements yield $n=62$ curves corresponding to the different weeks, and each of these curves is evaluated at $K=168$ points, corresponding to $7$ days (from Monday to Sunday) times $24$ hours. The raw curves and their empirical covariance function are plotted in Figure \ref{Original_data}.

\begin{figure}[h!!!!]
\centering
\includegraphics[scale=0.35]{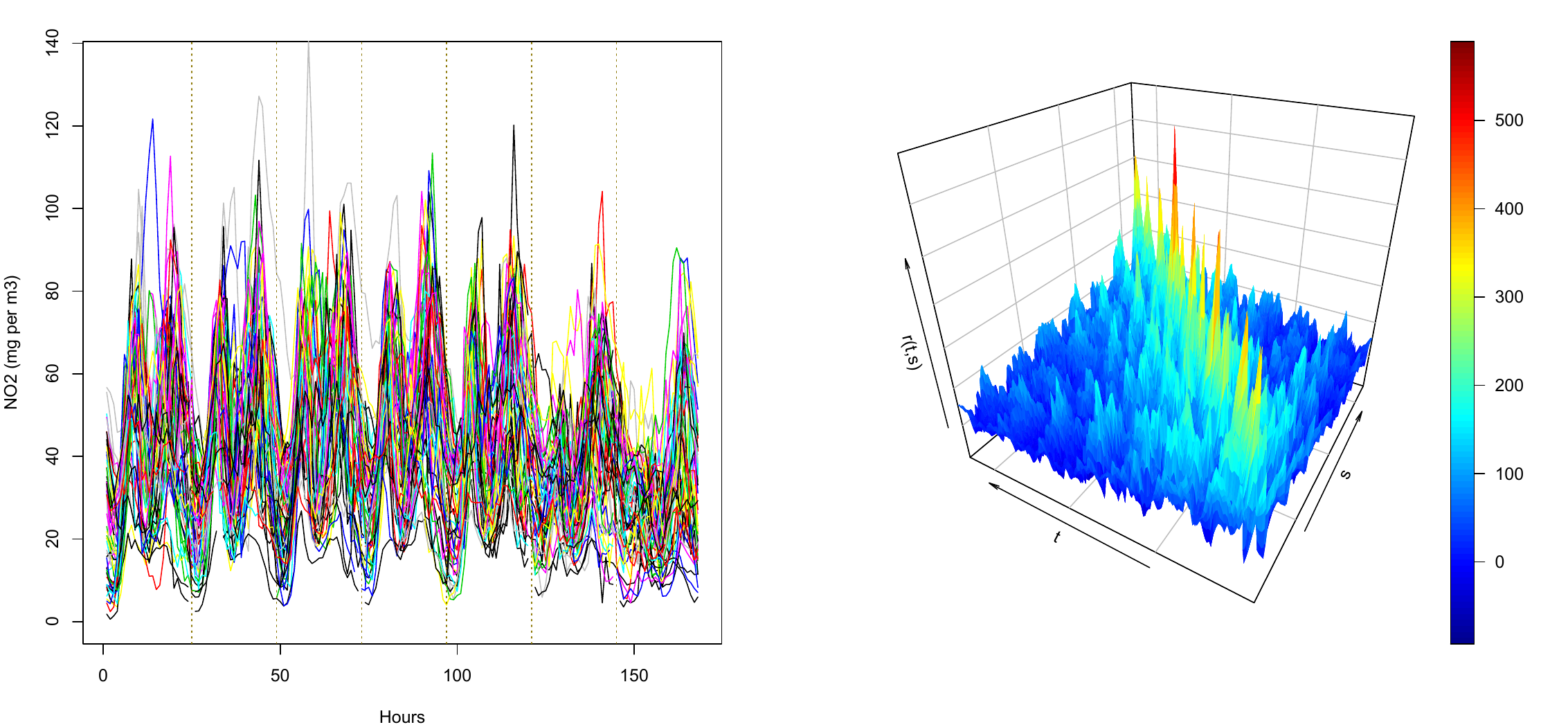}  
\caption{The dataset is depicted on the left, and its empirical covariance function on the right.}
\label{Original_data} 
\end{figure} 

For these particular data, we are expecting the covariance kernel $r$ to decompose into a component $\ell$ capturing variation at the time-scale of a week, and a second component $b$, capturing day-specific variation, thus essentially being concentrated around a band. The natural choice of upper bound for $\delta$ is thus $0.15$, corresponding to removing a band of width $\delta  \times K = 24$ hours in the discrete setup. In order to pick the rank $r$, we solved the optimisation problem \ref{constrained} (as described in the main body) for $i=1,\ldots,7$, and we plotted the functions $f(i) = \| P^K \circ (R_n^K - \hat C_i \hat C_i^{\transpose})\|_F^2$ and the ratio $r(j) = f(j)/f(j+1)$, for $j=1,\ldots,6$ on Figure \ref{Trouver_r}. Our estimated rank should be the point $i$ where the function $f$ levels out, or equivalently, the point $j$ for which the ratio $r$ becomes a constant close to $1$. The obvious choice was subsequently $\hat r=3$ and our estimator of $L^K_n$ is given by $\hat L = \hat C_3 \hat C_3^{\transpose}$.

A very slightly smoothed version of $\hat L$ is plotted on Figure \ref{Estim_L}, and the same figure plots its three corresponding eigenfunctions.  These eigenfunctions represent variation that propagates globally throughout the whole week. The first eigenfunction appears to represent fluctuation of the overall level of concentration on a weakly basis -- this upward/downward shift does have finer structure within each day, but: (a) these still represent fluctuations coupled/correlated during \emph{all} mornings/afternoons in a week, and (b) the intraday structure of the eigenfunction reveals a morning and an afternoon peak of opposite sign, roughly reflecting that this mean level shift is purely weakly, and does not differ noticeably from day to day.  The second eigenfunction appears to capture early/late week effects, showing that the period from Thursday to Sunday has a higher level of variation, which in fact correlates negatively with variation from Monday to Wednesday. Finally, the third eigenfunction seems to capture periodic day/night variation, as it propagates throughout the week, and it is clearly noticeable how this variation increases during the weekend. 

The estimates of the covariance function $b$ and of its first three eigenfunctions are plotted in Figure \ref{Estim_B}. A striking feature is that the eigenfunctions are almost exactly locally supported, though this was nowhere enforced explicitly -- they represent genuinely short scale variations that are uncorrelated across lengthier time scales.  Each represents variation that is specific to a particular period in the week: the first chiefly during weekends, the second mostly during the early week, and the third more around mid-week (note that the corresponding eigenvalues are rather close in magnitude, so the order to the three eigenfunctions is not well-distinguished: these are effects of approximately equal magnitude).

These local fluctuations would have been annihilated by a traditional smooth plus PCA approach: Figure \ref{Estim_RS} depicts the six leading eigenfunctions of an estimate of $l$ obtained by a Fourier basis smoothing with a roughness penalty approach (we use the Fourier rather than spline basis to respect the periodic nature of the data). The first three of these present overall features that not dissimilar to those given by our approach, albeit a bit more rough (this comes as no surprise, since the previous analysis shows that we are in a ``well-ordered" scenario). But the next three eigenfunctions are supported globally and are completely uninterpretable. This is a consequence of the fact that they are constrained to be orthogonal to the first three (see the discussion in bullet point (b) of Section \ref{more_on_the_effects}). To complicate matters further, the leading three eigenfunctions account for only $52$\% of the total variance, whereas the next three account for a further $16$\% -- meaning that the one cannot rely on the first three eigenfunctions alone for their analysis, and needs all six to approach the traditional $80$\% threshold.

\begin{figure}[h!!!!]
\centering
\includegraphics[scale=0.2]{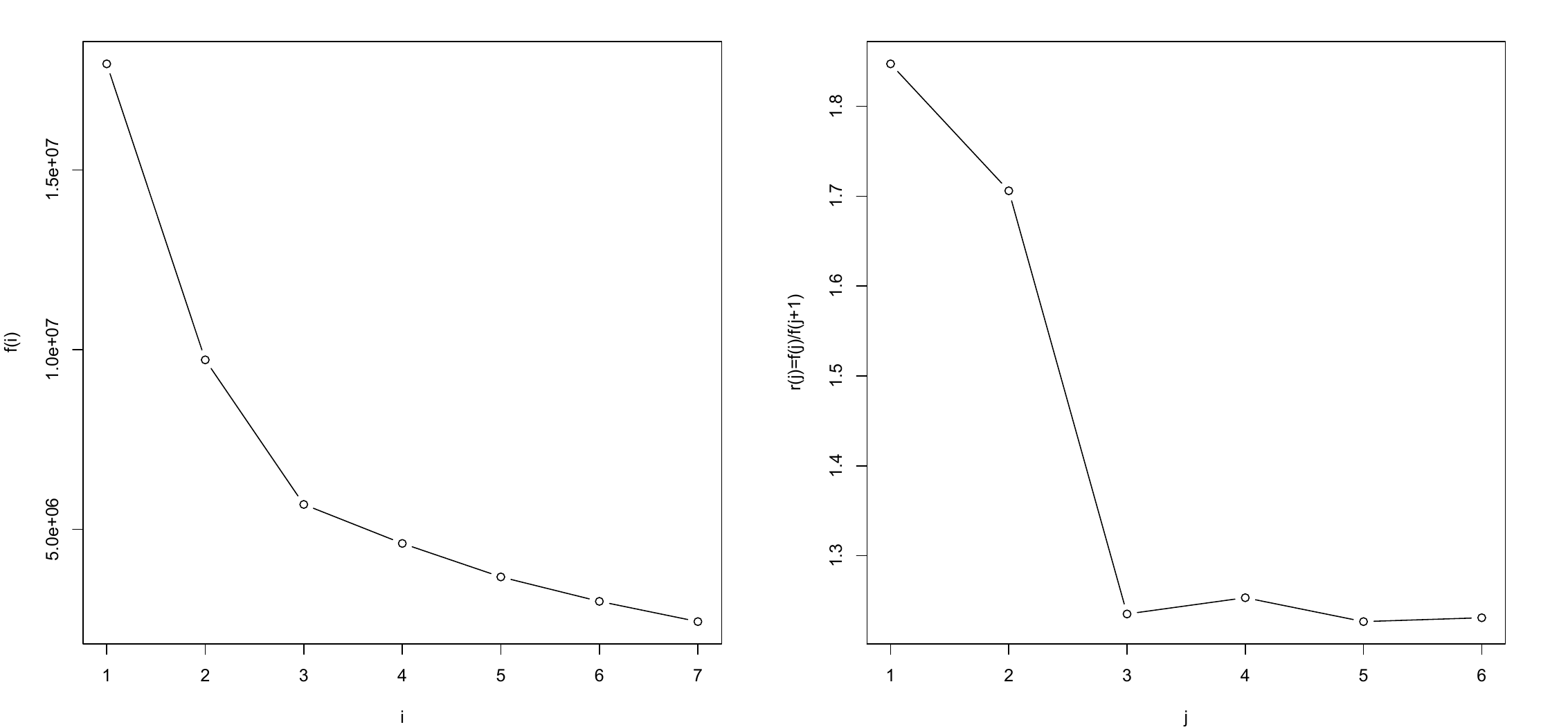}  
\caption{Illustration of the scree plot approach to rank selection. On the left we plotted the function $f(i) = \| P^K \circ (R_n^K - \hat C_i \hat C_i^{\transpose})\|_F^2$ for $i=1,\ldots,7$, and on the right the ratio $r(j) = f(j)/f(j+1)$ for $j=1,\ldots,6$.}
\label{Trouver_r} 
\end{figure}

\begin{figure}[h!!!]
\centering
\includegraphics[scale=0.4]{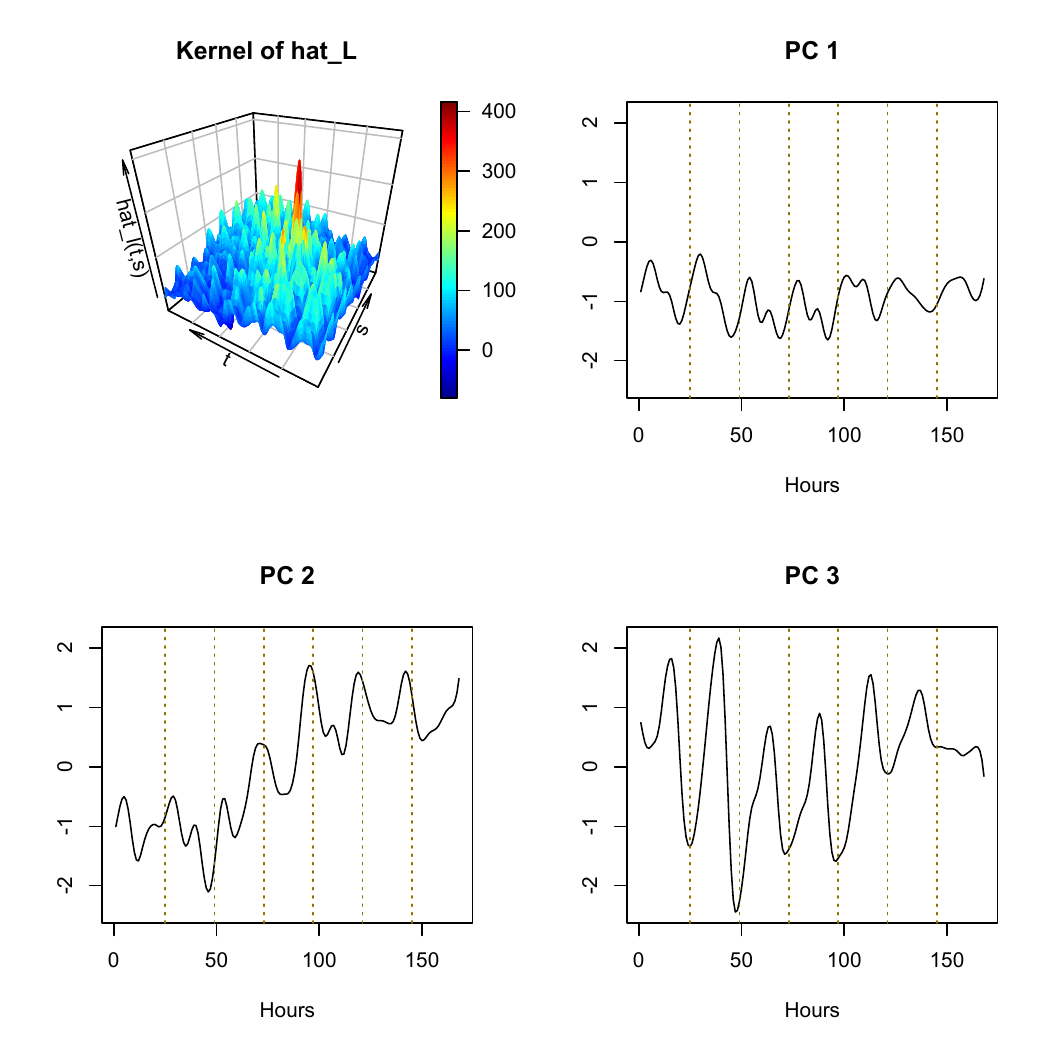}  
\caption{Lightly smoothed estimate of the covariance function $\ell$ and the corresponding three eigenfunctions. Vertical dotted lines indicate the different days of the week, starting with Monday as the first block.}
\label{Estim_L} 
\end{figure} 

\begin{figure}[h!]
\centering
\includegraphics[scale=0.4]{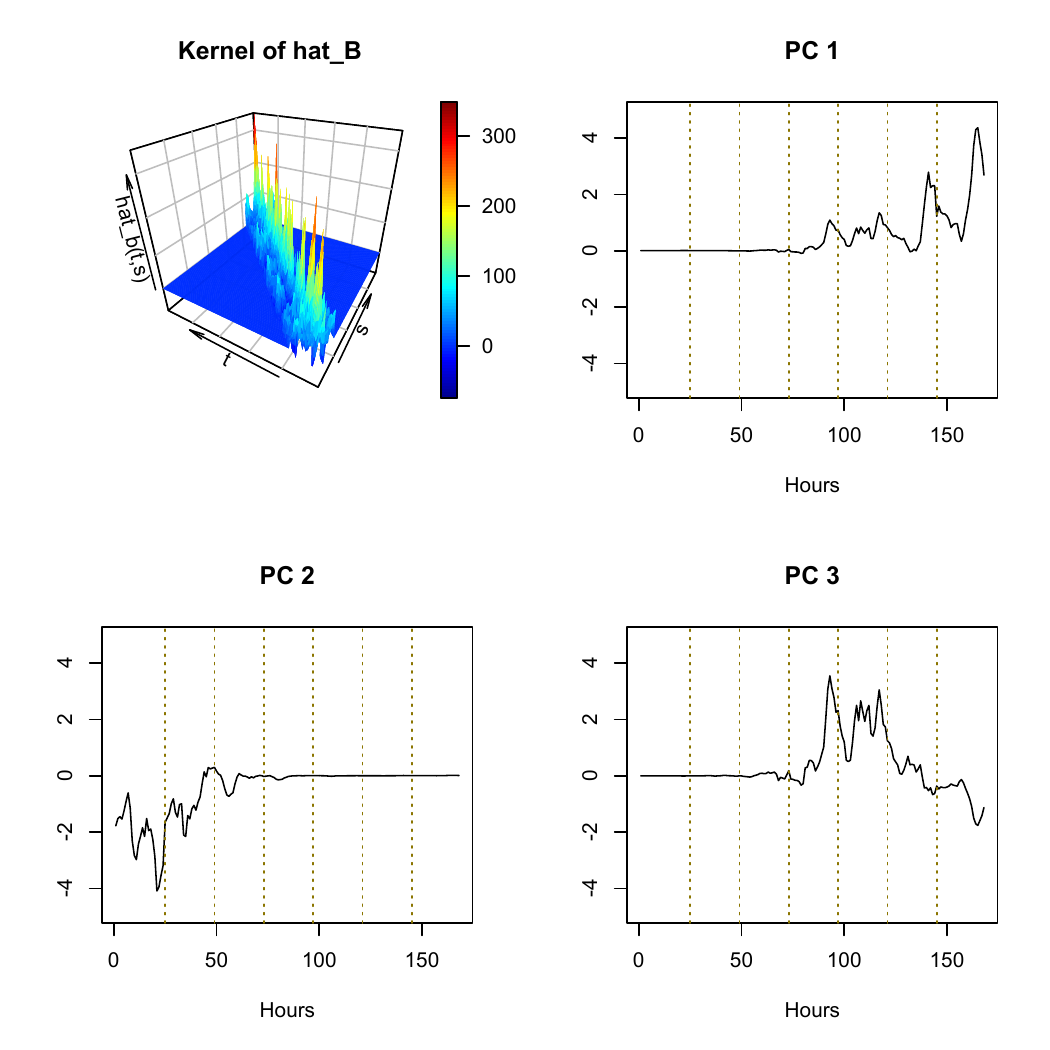}  
\caption{Estimation of the covariance function $b$ and of its first three eigenfunctions. The dotted lines indicate the different days of the week.}
\label{Estim_B} 
\end{figure} 

\begin{figure}[h!!!!!]
\centering
\begin{tabular}{c}
\includegraphics[scale=0.28]{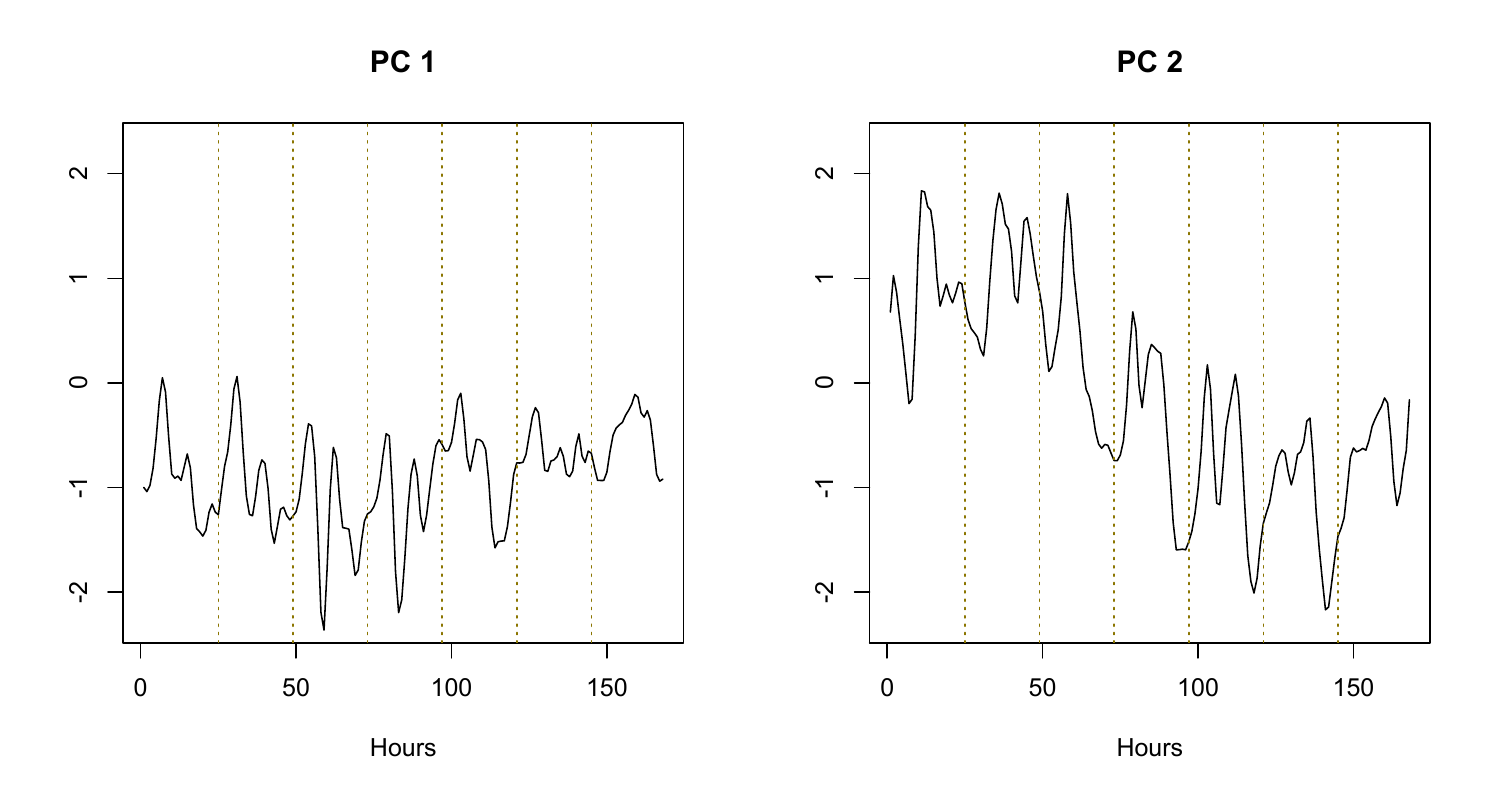} \\
\includegraphics[scale=0.28]{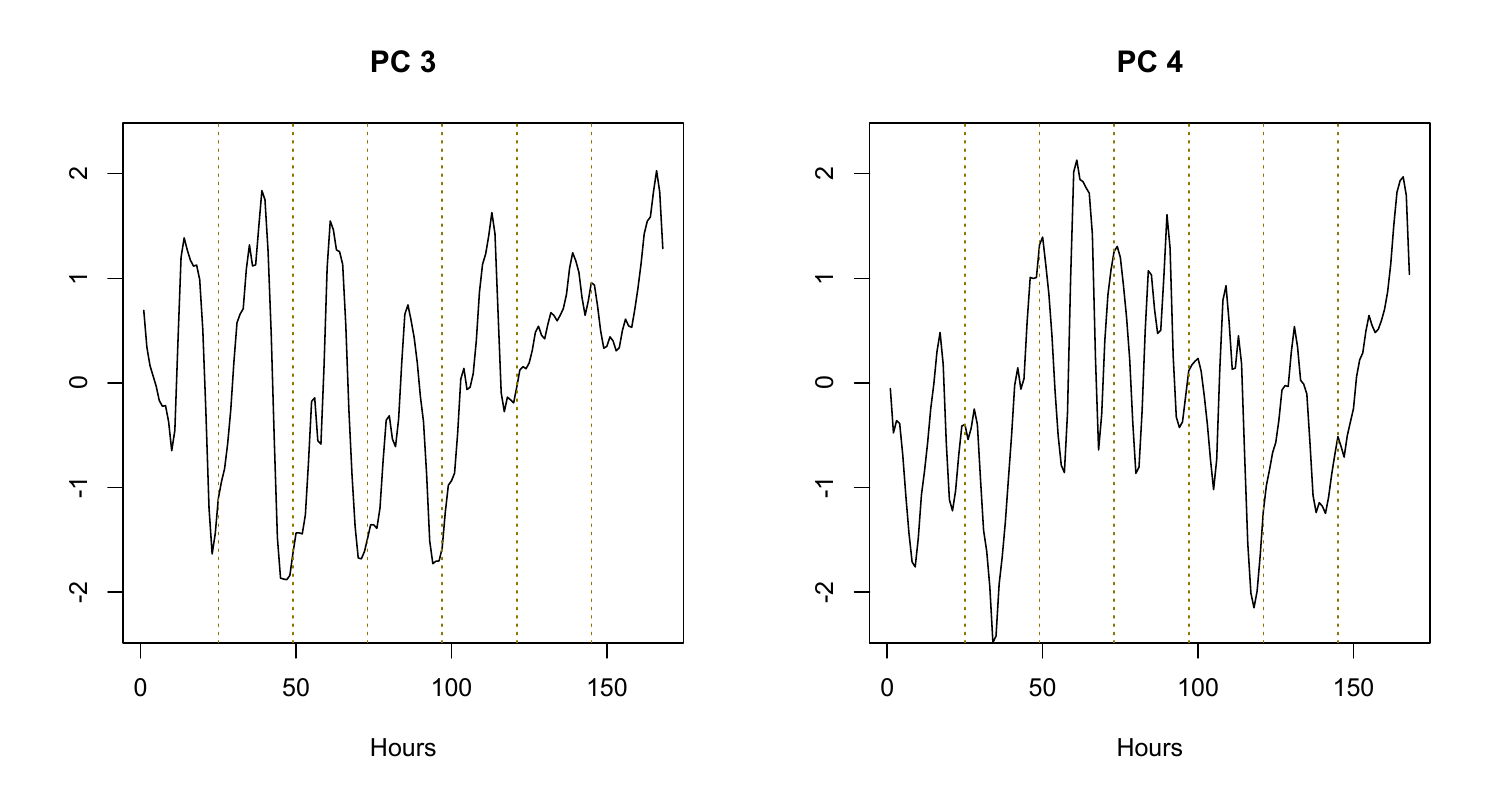} \\
\includegraphics[scale=0.28]{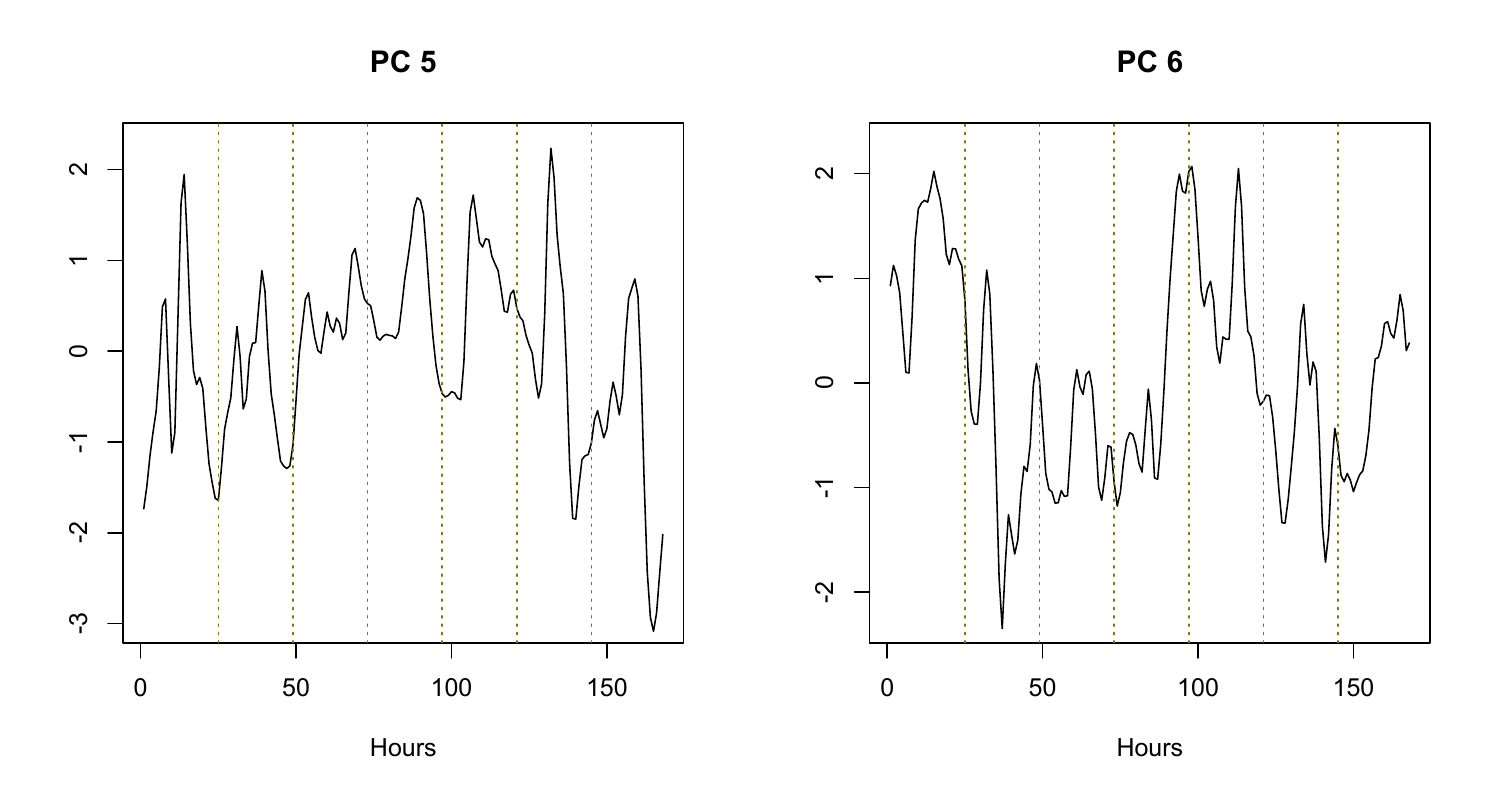} 
\end{tabular}
\caption{The first six eigenfunctions of the estimate of $l$ obtained by smoothing with a roughness penalty the empirical covariance matrix. The dotted lines indicate the different days of the week.}
\label{Estim_RS} 
\end{figure}

\subsection{Additional Simulation Results}\label{further_simulations} This section contains additional plots from the simulation presented in Section \ref{sec:simulations} of the main article, as well as further simulation results.

\subsubsection{Effect of rank misspecification}
It was observed in Section \ref{sec:rank_simulations} that one may slightly underestimate the rank when employing the scree-plot approach, especially when data are generated under regime 2 (interlaced eigenvalues). In order to appreciate the impact of rank misspecification, we have calculated the normalised errors $\textrm{Err}(\cdot)$ of the estimators obtained with a rank choice of $2,4$ and $5$ when the true rank is $3$ and with a rank choice of $3,4,6$ and $7$ when the true rank is $5$ for four different cases of scenario A  (namely $\delta=0.05$ and $\delta=0.1$ in the interlaced and non-interlaced regimes); we used $100$ replications for each case. Boxplots of the ratio between our method's error when the correct rank is used (in the denominator) and the error of our method when the rank is misspecified (in the numerator) are depicted in Figure \ref{mauvais_rang}. The red horizontal lines on the graphics indicate the level $1$. It is clear that \emph{underestimation} of the rank leads to more severe effects than \emph{overestimation}. In particular, overestimation of the rank seems to not affect performance, except in isolated outlying cases. This explains our earlier recommendation that one should not hesitate to choose a larger rank when in doubt.

\begin{figure}[p]
\centering
\begin{tabular}{c}
\includegraphics[scale=0.6]{./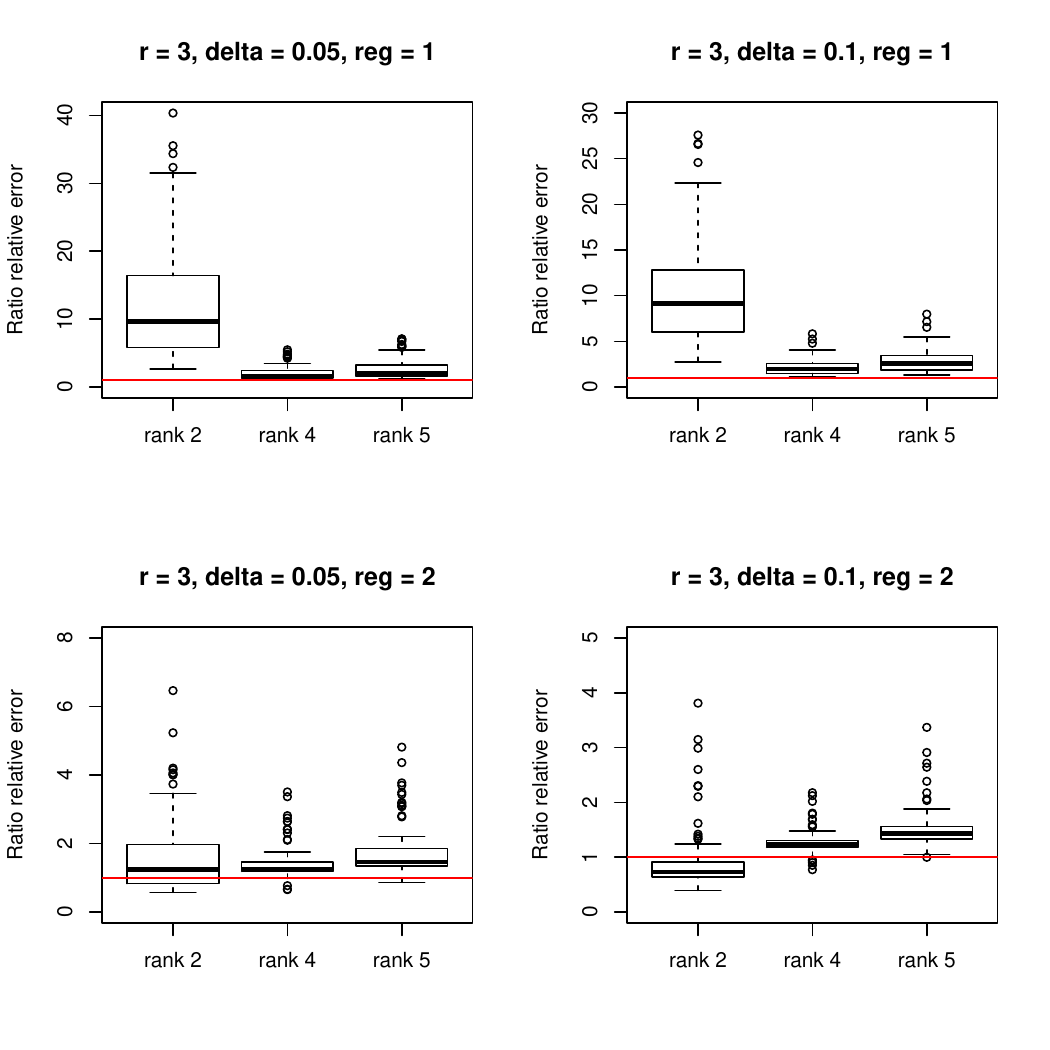}  \\
\includegraphics[scale=0.6]{./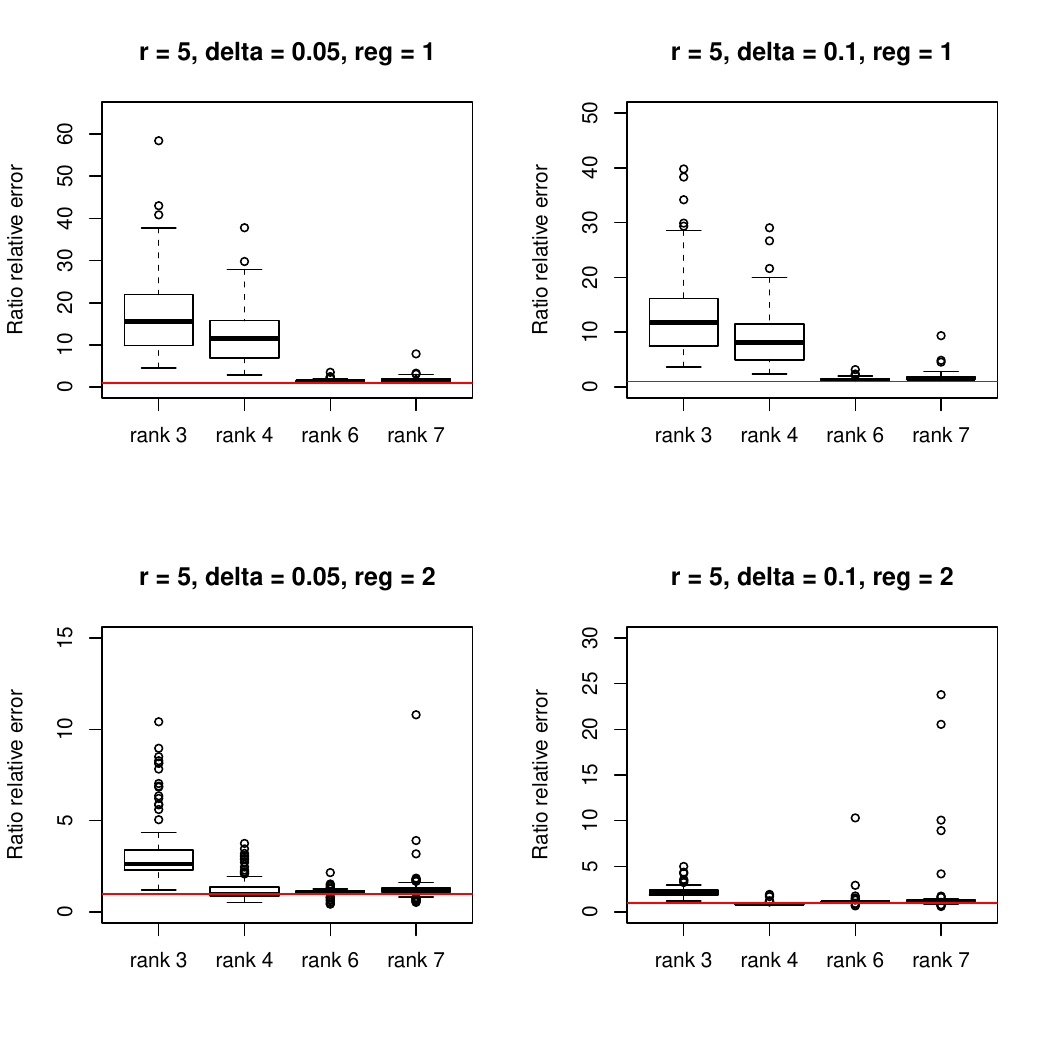}
\end{tabular}
\caption{Scenario A, combinations 3-4 and 7-8 (top four) of regime (1) and combinations 5-6 and 9-10 (bottom four) of regime (2). Underestimation is impactful in regime 1 and overestimation does not have severe impact in both regimes. Two outliers have been left out of the plots in order to allow for a better appreciation the shape of the distributions.}
\label{mauvais_rang} 
\end{figure}

\subsubsection{Effect of the sample size and of the grid size}
As mentioned in Section \ref{sec:comparison_simulations}, we also ran additional simulations to study the performance of our method for different combinations of sample size $n$ and grid points $K$. For the scenario A (FB + MA), rank/bandwidth combination (1--6), and the two regimes considered in the paper, we simulate $100$ replications for the $6$ different combinations of the sample size $n$ and number of grid points $K$ given in the Table \ref{param_n_K}.

\begin{table}[h!]
\begin{tabular}{|c|c|c|c|c|c|c|c|c|c|c|}
\hline
n & 300 & 300 & 300 & 100 & 100 & 100   \\
\hline
K & 25 & 50 & 150 & 25  & 50 & 100  \\
\hline
\end{tabular}
\caption{Different values of the number of curves and number of grid points.}
\label{param_n_K}
\end{table}

For each simulation setup, we calculate the $100$ normalised errors $\textrm{Err}(u) = (\|u-L_n^K\|_F)/\|L_n^K\|_F$ for our method, the PACE method, the truncation of the Karhunen-Lo\`eve (KL) expansion method and the spline smoothing method. We then form the ratio between our method's error (in the denominator) and the error of each of the three other methods (in the numerator). The first quartiles, medians and third quartiles of the resulting distributions are presented in Tables \ref{table_median_reg1} and \ref{table_median_reg2}. The medians exceeding $1$ have been highlighted in bold. We see that our method continues to perform considerably better than the benchmark methods, regardless of the ratio of $n/K$ in the case of Regime 1. In Regime 2, our method performs better or comparably to other methods in almost all combinations. The only exceptions are in the sparse regimes ($\{n=100, K=25\}$ and $\{n=300,K=25\}$): even in these cases, our method outperforms other methods when the rank is 3, but starts to underperform when the rank is 5. However, note when the rank is 5 and $K=25$, we are the boundary of our identifiability theorem (which requires that $r\leq (1/2-\delta)K-1$), and of course the boundary itself applies to the population version, whereas here one is dealing with finite samples.

\begin{table}
\begin{tabular}{|c|c|c|c|c|}
\hline
 \multicolumn{5}{|c|}{Regime 1}  \\
\hline
($n,K$) & (rk,$\delta$) & PACE & KL & RS \\
\hline
\multirow{6}{*}{$n=300,K=25$} &  $(1, 0.05)$ &${\bf4.47}$ $(3.03,8.60)$ &${\bf4.41}$ $(2.38,5.56)$ &${\bf8.28}$ $(6.78,10.2)$ \\
& $(1, 0.10)$ &${\bf5.53}$ $(3.11,9.58)$&${\bf4.18}$ $(2.93,6.54)$ &${\bf8.44}$ $(7.26,11.6)$ \\
& $(3, 0.05)$ & ${\bf2.77}$ $(2.22,3.67)$& ${\bf2.88}$ $(2.44,3.42)$&${\bf3.78}$ $(3.30,4.49)$ \\
& $(3, 0.10)$ &${\bf2.95}$ $(2.44,3.70)$&${\bf2.63}$ $(2.14,3.17)$ &${\bf3.99}$ $(3.49,4.59)$ \\
& $(5, 0.05)$ &${\bf1.05}$ $(2.83,1.32)$&${\bf1.12}$ $(0.87,1.41)$ &${\bf1.21}$ $(0.84,1.56)$ \\
& $(5, 0.10)$ &$0.97$ $(0.61,1.28)$&${\bf1.00}$ $(0.56,1.39)$ &${\bf1.08}$ $(0.72,1.57)$ \\
\hline
\multirow{6}{*}{$n=300,K=50$} &  $(1, 0.05)$ &${\bf6.19}$ $(3.82,9.29)$ &${\bf4.50}$ $(2.51,6.41)$ &${\bf6.61}$ $(5.13,7.97)$ \\
& $(1, 0.10)$ &${\bf5.53}$ $(3.45,9.10)$&${\bf4.84}$ $(2.76,6.34)$ &${\bf6.66}$ $(5.40,8.16)$ \\
& $(3, 0.05)$ & ${\bf3.66}$ $(2.94,5.49)$& ${\bf3.49}$ $(2.70,4.22)$&${\bf3.89}$ $(3.33,4.37)$ \\
& $(3, 0.10)$ &${\bf3.61}$ $(2.57,4.89)$&${\bf3.21}$ $(2.35,4.12)$ &${\bf3.75}$ $(3.20,4.45)$ \\
& $(5, 0.05)$ &${\bf2.38}$ $(1.90,3.16)$&${\bf2.40}$ $(1.87,3.02)$ &${\bf2.00}$ $(1.67,2.25)$ \\
& $(5, 0.10)$ &${\bf2.04}$ $(1.70,2.80)$&${\bf2.01}$ $(1.62,2.68)$ &${\bf2.02}$ $(1.63,2.47)$ \\
\hline
\multirow{6}{*}{$n=300,K=150$} &    $(1, 0.05)$ &${\bf3.36}$ $(2.04,6.58)$ &${\bf2.62}$ $(1.76,5.07)$ &${\bf3.74}$ $(3.00,4.84)$ \\
& $(1, 0.10)$ &${\bf3.19}$ $(1.71,7.17)$&${\bf2.49}$ $(1.41,4.79)$ &${\bf3.85}$ $(2.88,4.45)$ \\
& $(3, 0.05)$ & ${\bf2.86}$ $(2.23,4.33)$& ${\bf2.63}$ $(2.07,3.75)$&${\bf2.57}$ $(2.22,3.01)$ \\
& $(3, 0.10)$ &${\bf2.76}$ $(1.87,4.28)$&${\bf2.45}$ $(1.73,3.64)$ &${\bf2.41}$ $(1.99,2.90)$ \\
& $(5, 0.05)$ &${\bf2.68}$ $(2.00,3.76)$&${\bf2.47}$ $(1.94,3.29)$ &${\bf1.82}$ $(1.59,2.14)$ \\
& $(5, 0.10)$ &${\bf2.08}$ $(1.66,2.83)$&${\bf1.96}$ $(1.53,2.60)$ &${\bf1.58}$ $(1.39,1.90)$ \\
\hline
\multirow{6}{*}{$n=100,K=25$} &  $(1, 0.05)$ &${\bf3.04}$ $(2.24,5.02)$&${\bf2.67}$ $(2.01,3.44)$&${\bf4.94}$  $(4.19,6.25)$\\
& $(1, 0.10)$ &${\bf3.24}$ $(2.42,4.77)$&${\bf2.76}$ $(2.08,3.44)$&${\bf5.05}$ $(4.21,6.74)$\\
& $(3, 0.05)$ & ${\bf2.47}$ $(1.95,3.02)$&${\bf2.34}$ $(1.98,2.82)$&${\bf2.78}$ $(2.45,3.15)$\\
& $(3, 0.10)$ &${\bf2.06}$ $(1.70,2.67)$&${\bf2.00}$ $(1.54,2.42)$&${\bf2.56}$ $(2.10,2.98)$\\
& $(5, 0.05)$ &$0.74$ $(0.52,0.98)$&$0.75$ $(0.56,1.02)$ &$0.73$ $(0.48,1.07)$ \\
& $(5, 0.10)$ &$0.60$ $(0.37,0.86)$&$0.62$ $(0.36,0.97)$ &$0.64$ $(0.34,0.98)$ \\
\hline
\multirow{6}{*}{$n=100,K=50$}  & $(1, 0.05)$ &${\bf3.29}$ $(2.21,5.17)$ &${\bf2.70}$ $(2.00,3.90)$ &${\bf3.79}$ $(3.19,4.57)$ \\
& $(1, 0.10)$ &${\bf3.08}$ $(1.88,4.60)$&${\bf2.57}$ $(1.59,3.58)$ &${\bf3.85}$ $(2.96,4.54)$ \\
& $(3, 0.05)$ &${\bf2.75}$ $(2.11,3.71)$&${\bf2.70}$ $(2.16,3.13)$ &${\bf2.87}$ $(2.54,3.22)$ \\
& $(3, 0.10)$ &${\bf2.25}$ $(1.85,2.75)$&${\bf2.10}$ $(1.78,2.58)$ &${\bf2.42}$ $(2.16,2.81)$ \\
& $(5, 0.05)$ &${\bf1.87}$ $(1.56,2.22)$&${\bf1.91}$ $(1.51,2.46)$ &${\bf1.57}$ $(1.28,1.97)$ \\
& $(5, 0.10)$ &${\bf1.68}$ $(1.34,1.99)$&${\bf1.74}$ $(1.35,2.00)$ &${\bf1.49}$ $(1.27,1.84)$ \\
\hline
\multirow{6}{*}{$n=100,K=100$} &  $(1, 0.05)$ &${\bf2.88}$ $(1.68,4.43)$&${\bf2.38}$ $(1.45,3.45)$&${\bf2.91}$  $(2.28,3.79)$\\
& $(1, 0.10)$ &${\bf2.98}$ $(1.80,4.34)$&${\bf2.35}$ $(1.41,3.37)$&${\bf2.84}$ $(2.21,3.45)$\\
& $(3, 0.05)$ & ${\bf2.68}$ $(2.18,3.52)$&${\bf2.56}$ $(2.08,3.31)$&${\bf2.35}$ $(2.06,2.85)$\\
& $(3, 0.10)$ &${\bf2.47}$ $(1.80,3.36)$&${\bf2.39}$ $(1.77,2.97)$&${\bf2.25}$ $(1.85,2.66)$\\
& $(5, 0.05)$ &${\bf1.92}$ $(1.62,2.58)$&${\bf1.87}$ $(1.60,2.62)$ &${\bf1.56}$ $(1.35,1.79)$ \\
& $(5, 0.10)$ &${\bf1.75}$ $(1.46,2.11)$&${\bf1.79}$ $(1.48,2.21)$ &${\bf1.42}$ $(1.26,1.60)$ \\
\hline
\end{tabular}
\caption{Table containing the median (the first and third quartiles are in parentheses) of the ratios for the three methods we compared our method with for different combinations of $n$ and $K$ with the regime 1. We highlight in bold the medians that exceed $1$.}
\label{table_median_reg1}
\end{table}
\begin{table}
\begin{tabular}{|c|c|c|c|c|}
\hline
 \multicolumn{5}{|c|}{Regime 2}  \\
\hline
($n,K$) & (rk,$\delta$) & PACE & KL & RS \\
\hline
\multirow{4}{*}{$n=300,K=25$} &  $(3, 0.05)$ & ${\bf1.90}$ $(0.64,2.39)$& ${\bf1.49}$ $(0.66,2.34)$&${\bf2.40}$ $(1.16,3.46)$ \\
& $(3, 0.10)$ &${\bf1.64}$ $(1.13,2.18)$&${\bf1.34}$ $(0.99,1.75)$ &${\bf2.46}$ $(1.86,3.12)$ \\
& $(5, 0.05)$ &$0.20$ $(0.01,0.83)$&$0.21$ $(0.01,0.89)$ &$0.23$ $(0.02,0.98)$ \\
& $(5, 0.10)$ &$0.19$ $(0.01,0.81)$&$0.16$ $(0.01,0.72)$ &$0.24$ $(0.01,1.04)$ \\
\hline
\multirow{4}{*}{$n=300,K=50$} &  $(3, 0.05)$ & ${\bf2.04}$ $(1.57,2.88)$& ${\bf2.08}$ $(1.26,2.92)$&${\bf2.82}$ $(1.57,3.58)$ \\
& $(3, 0.10)$ &${\bf1.73}$ $(1.31,2.07)$&${\bf1.46}$ $(1.13,2.19)$ &${\bf2.36}$ $(1.43,3.07)$ \\
& $(5, 0.05)$ &${\bf1.30}$ $(1.08,1.50)$&${\bf1.21}$ $(0.99,1.59)$ &${\bf1.18}$ $(0.86,1.84)$ \\
& $(5, 0.10)$ &${\bf1.22}$ $(0.99,1.41)$&${\bf1.17}$ $(0.99,1.36)$ &${\bf1.24}$ $(0.93,1.71)$ \\
\hline
\multirow{4}{*}{$n=300,K=150$} &    $(3, 0.05)$ & ${\bf1.63}$ $(0.95,2.24)$& ${\bf1.77}$ $(0.91,2.45)$&${\bf1.81}$ $(0.85,2.47)$ \\
& $(3, 0.10)$ &${\bf1.02}$ $(0.80,1.40)$&$0.90$ $(0.77,1.43)$ &${\bf1.01}$ $(0.66,1.67)$ \\
& $(5, 0.05)$ &$0.87$ $(0.76,1.60)$&$0.87$ $(0.79,1.63)$ &$0.75$ $(0.42,1.56)$ \\
& $(5, 0.10)$ &$0.84$ $(0.73,1.02)$&$0.83$ $(0.73,1.00)$ &$0.82$ $(0.56,1.20)$ \\
\hline
\multirow{4}{*}{$n=100,K=25$} & $(3, 0.05)$ & ${\bf1.34}$ $(0.90,1.57)$&${\bf1.12}$ $(0.89,1.44)$&${\bf1.61}$ $(1.24,2.01)$\\
& $(3, 0.10)$ &${\bf1.25}$ $(0.86,1.52)$&${\bf1.12}$ $(0.76,1.51)$&${\bf1.77}$ $(1.42,2.06)$\\
& $(5, 0.05)$ &$0.15$ $(0.01,0.80)$&$0.16$ $(0.01,0.79)$ &$0.20$ $(0.02,0.82)$ \\
& $(5, 0.10)$ &$0.55$ $(0.05,0.78)$&$0.53$ $(0.05,0.81)$ &$0.69$ $(0.06,0.94)$ \\
\hline
\multirow{4}{*}{$n=100,K=50$}  &  $(3, 0.05)$ & ${\bf1.29}$ $(1.11,1.52)$& ${\bf1.16}$ $(0.95,1.64)$&${\bf1.45}$ $(1.04,2.07)$ \\
& $(3, 0.10)$ &${\bf1.24}$ $(0.98,1.62)$&${\bf1.09}$ $(0.86,1.63)$ &${\bf1.47}$ $(1.18,1.91)$ \\
& $(5, 0.05)$ &${\bf1.12}$ $(1.02,1.23)$&${\bf1.12}$ $(0.98,1.23)$ &${\bf1.01}$ $(0.78,1.32)$ \\
& $(5, 0.10)$ &${\bf1.01}$ $(0.86,1.17)$&${\bf1.02}$ $(0.88,1.23)$ &${\bf1.04}$ $(0.85,1.26)$ \\
\hline
\multirow{4}{*}{$n=100,K=100$} &   $(3, 0.05)$ & $0.98$ $(0.77,1.49)$&$0.98$ $(0.75,1.51)$&${\bf1.07}$ $(0.69,1.67)$\\
& $(3, 0.10)$ &$0.86$ $(0.68,1.13)$&$0.84$ $(0.64,1.11)$&${\bf1.90}$ $(0.68,1.34)$\\
& $(5, 0.05)$ &$0.94$ $(0.81,1.09)$&$0.92$ $(0.81,1.11)$ &$0.85$ $(0.64,1.18)$ \\
& $(5, 0.10)$ &$0.88$ $(0.73,1.06)$&$0.89$ $(0.71,1.11)$ &$0.83$ $(0.59,1.15)$ \\
\hline
\end{tabular}
\caption{Table containing the median (the first and third quartiles are in parentheses) of the ratios for the three methods we compared our method with four different combinations of $n$ and $K$ with the regime 2. We highlight in bold the medians that exceed $1$.}
\label{table_median_reg2}
\end{table}

\subsubsection{Effect of measurement errors and high frequency noise} \label{app_err_hf}

As mentioned in Section \ref{sec:simulations}, we also studied the performance of our method when the data are corrupted by measurement errors and/or high frequency noise. For the Scenario D (FB + TRI), and rank/bandwidth combinations (1-6) considered in the paper, we considered  $12$ different types of contamination of the original data $X=Y+W$, which are given in Table \ref{hf-me}.
\begin{table}
\begin{tabular}{|c||c|c|c||c|c|c||c|c|}
\hline
Type & High. Freq. & ME & Type & High. Freq. & ME &Type & High. Freq. & ME   \\
\hline
\hline
1&none&none& 2 & none & $\sigma^2=0.25$ & 3 & none & $\sigma^2=1$ \\
\hline
4 & OU & none & 5 & OU & $\sigma^2=0.25$ &6 & OU & $\sigma^2=1$\\
\hline
7 & SHF & none&8 & SHF & $\sigma^2=0.25$ & 9 & SHF & $\sigma^2=1$ \\
\hline
10 & RHF & none & 11 & RHF & $\sigma^2=0.25$ &12 & RHF & $\sigma^2=1$ \\
\hline
\end{tabular}
\caption{Different combinations of high frequency components (High. Freq.) and measurement errors (ME).}
\label{hf-me}
\end{table}
The measurement errors (ME), when added to the original data, are simulated as i.i.d. zero-mean Gaussian random variables of variance $\sigma^2$. As presented in the table, we considered two different cases for the measurement error variance: $\sigma^2=0.25$ or $\sigma^2=1$. The smooth process $Y$ is of rank $r$ and has eigenvalues $\lambda_1,\ldots,\lambda_r$ defined following Regime 1, except for the special case when $r=1$, where we use $\lambda_1=1.45$. The high frequency component, denoted by $H$, when added to the original data, is produced as one of the following three processes:
\begin{enumerate}
\item[(OU)]  As an Ornstein-Uhlenbeck process such that $dH(t) = 0.7 H(t)dt + 0.25 d\mathsf{W}(t)$, where $\mathsf{W}$ denotes a Wiener process. Note that this diffusion process actually possesses analytic (in fact trigonometric) principal components.
\item[(SHF)] As $H(t) = \sum_{a=r+1}^{11} \omega_a g_a(t)$, where $g_a(t)$ is equal to $\sin(a\pi t)$ if $a$ is even and to $\cos((a-1)\pi t)$ otherwise. The constants $\omega_a$ are such that $\sum_{a=1}^r \lambda_a /(\sum_{a=1}^r \lambda_a + \sum_{a=r+1}^{11} \omega_a) \approx 0.95$.
\item[(RHF)] As $H(t) = \sum_{a=1}^{10} \omega_{a}\sin((20+2a)\pi t) + \tilde \omega_{a}\cos((20+ 2a)\pi t )$. The constants $\omega_a$ are such that $\sum_{a=1}^r \lambda_a /(\sum_{a=1}^r \lambda_a + \sum_{a=1}^{10} (\omega_a + \tilde \omega_a)) \approx 0.95$.
\end{enumerate}
It should be remarked that in any of these cases we are in a very adverse setup: we have the low rank signal of interest, plus higher frequency noise, plus local signal, plus measurement error. In particular, this means that there are 3 different sources of ``roughness" that we are trying to separate, and particularly the problem of recovering $\mathscr{B}$ becomes very close to unidentifiable. Similarly, recovering $\mathscr{L}$ becomes more challenging, since the noise we are adding to the data is of global eigenfunctions, a scenario corresponding to an effective finite rank of $r$, but with a spectral tail of smooth high frequency components. Moreover, the eigenvalues $\omega_{r+1},\ldots,\omega_{11}$ of $H$ are of the same order as the eigenvalues of $W$, except for the first one ($\beta_1 = 0.09$) which is larger. For each of the 12 types of contamination, and the $6$ combinations of rank/bandwidth, which leads to $72$ setups, we simulated $50$ samples of $n=300$ curves defined on a grid of $K=100$ points.

We have first probed the performance of our scree plot approach to estimate the rank $r$ of the operator $\Lo$ by following exactly the same procedure as described in Section \ref{sec:rank_simulations}. We used one sample of each type/combination, and we calculated the normalised objective function $f(i)/ \|P^K \circ R^K_n \|, \ i=1,\ldots,10$. The results are presented by type of contamination in Figure \ref{find_rank_2}. The figure reveals that the presence of high frequency noise and/or measurement errors is not impactful on our rank selection procedure.

Consequently, as in the simulations of Section \ref{sec:simulations}, we used the true rank $r$ of $\Lo$ to carry out the rest of the analysis. By applying our method on every sample of each setup, we obtained $50$ normalised errors for $\hat L^K_n$ (ERR) and $50$ approximations of the normalised mean integrated squared error for $\hat Y_n^K$ (relMISE) per setup. The first quartiles, medians and third quartiles of the resulting distributions are presented in the first and third column of Tables \ref{table_ref3} and \ref{table_ref3_bis}. We see that our estimators are markedly robust to the addition of noise to the data. Even in the challenging situation where the high frequency component is quite smooth (SHF), the median  normalised errors for both $\hat L^K_n$ and $\hat Y_n^K$ are smaller than $10\%$, with the only exception for the recovery of the smooth component when the rank of $\Lo$ is equal to $1$.

We also studied the impact of the presence of high frequency noise and measurement errors on the performance of our estimator $\hat B^K_n$. First recall that this estimator is obtained by projecting the matrix $R^K_n - \hat L^K_n$ (with $R^K_n$ being the empirical covariance matrix of the data) onto the space of banded and non-negative definite matrices. However, since our model is no longer $\covR=\Lo+\B$, but instead $\covR=\Lo+\B + \Sigma$, where $\Sigma$ is the covariance operator of the additional noise component, we expect $\hat B^K_n$ to be more adversely affected by the presence of a high frequency component, while only its diagonal should be affected by the presence of measurement errors. For each setup including measurement errors (2,3,5,6,8,9,11,12), we thus calculate 50 normalised errors $\textrm{Err}_d (u) = \| u-B^K_{n}\|_{F,d} / \| B^K_{n}\|_{F,d}$, with $\| A \|_{F,d}$ being the Froebinus norm of the matrix $A$ with its diagonal removed, whereas we calculate the standard one ($\textrm{Err}_w (u) = \| u-B^K_n\|_F / \| B^K_n\|_F$) for the samples of the remaining setups. The first quartiles, medians and third quartiles of the resulting distributions are presented in the second column of Tables \ref{table_ref3} and \ref{table_ref3_bis}. We can see that, as expected, our estimator of the banded covariance suffers from the addition of a smooth high frequency component (SHF). This is nevertheless reasonable to expect, since $\B$ is essentially no longer strictly speaking identifiable in the presence of $\Sigma$. Still, it can be remarked that, in several cases (particularly for $r=1,3$), the performance loss is not as substantial as one might expect, when comparing with the case of no contamination.

In order to have a benchmark in the contaminated cases, we conclude this subsection by comparing our estimator $\hat L^K_n$ to those obtained by the PACE method, the truncated KL-expansion and the spline smoothing method, on the setups where the high frequency component $H$ is simulated by SHF (i.e., when the type of contamination is either $7,8$ or $9$). For each method, we calculated the $50$ normalised errors of their estimators, and we then formed the ratio between our method's error (in the denominator) and the error of each of the three other methods (in the numerator). The first quartiles, medians and third quartiles of the resulting distributions are presented in Table \ref{table_ref3_compa}. The table reveals that our method has a performance that is typically superior or comparable to that of the three other methods.

\begin{figure}
\centering
\begin{tabular}{ccc}
\includegraphics[scale=0.30]{./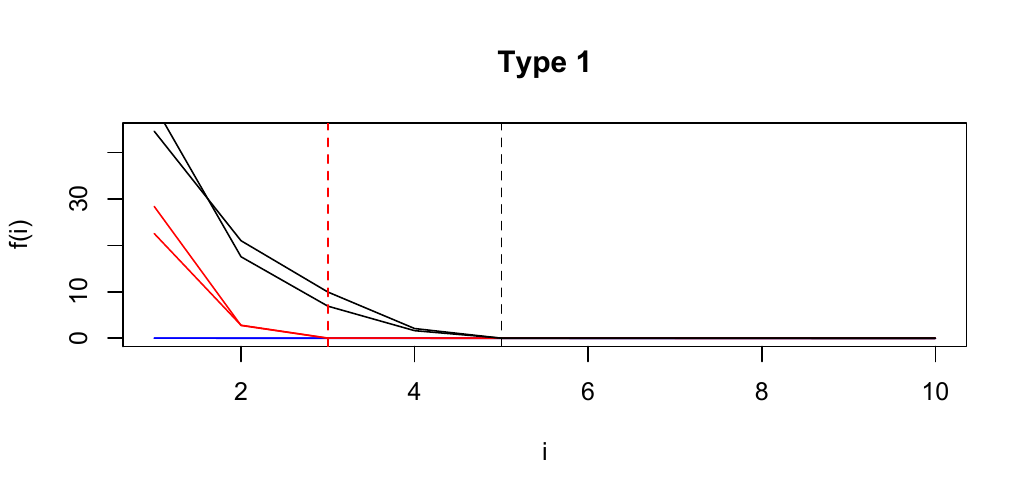} &\includegraphics[scale=0.30]{./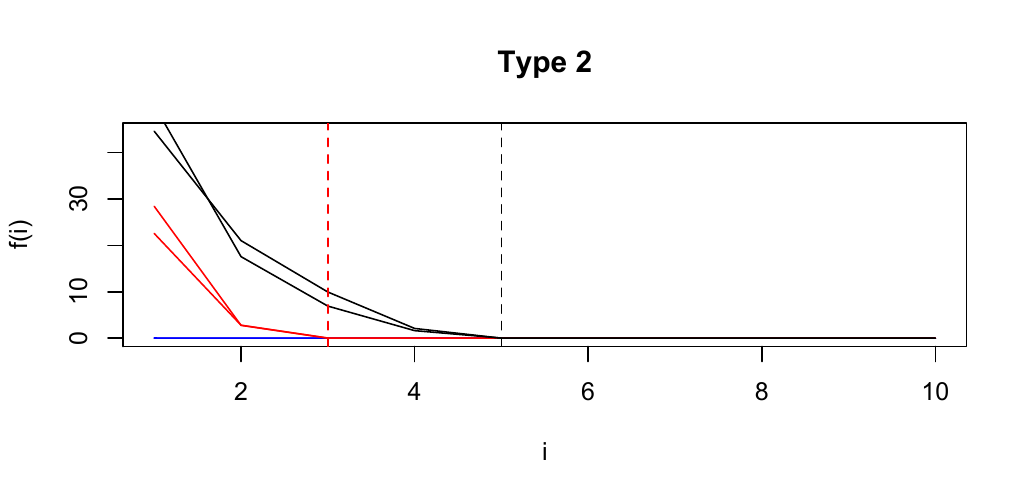} &\includegraphics[scale=0.30]{./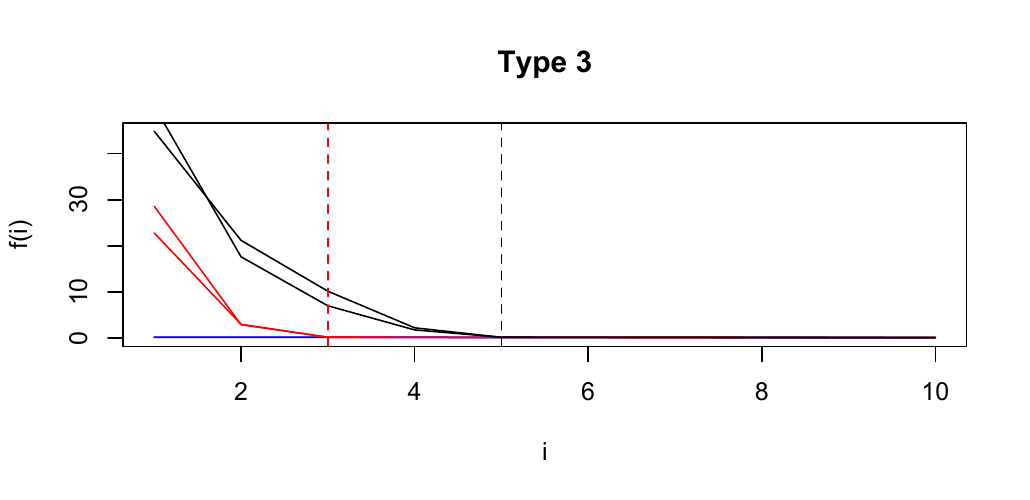} \\
\includegraphics[scale=0.3]{./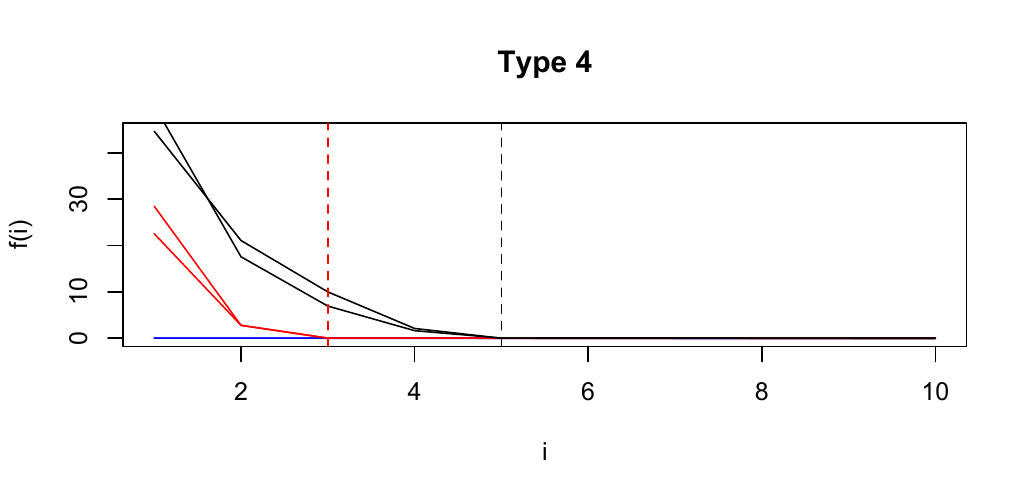} &\includegraphics[scale=0.3]{./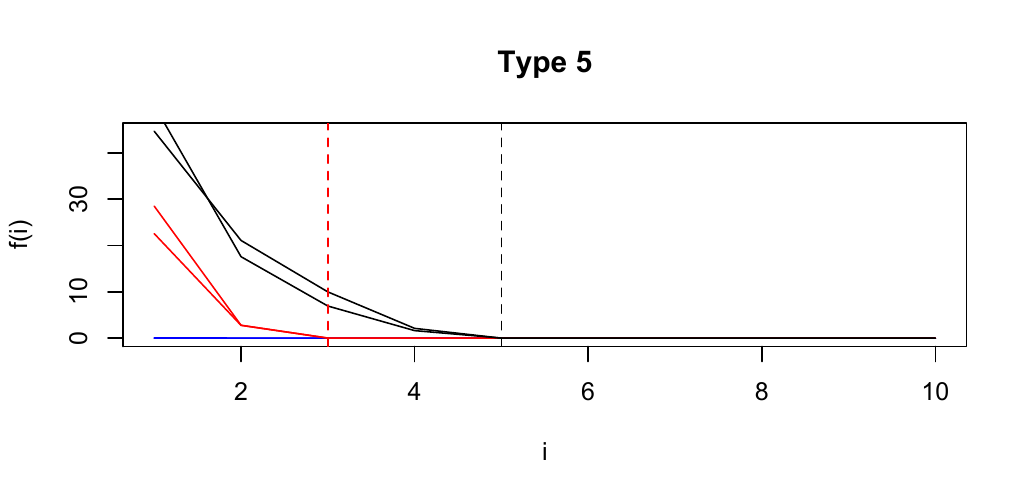} &\includegraphics[scale=0.3]{./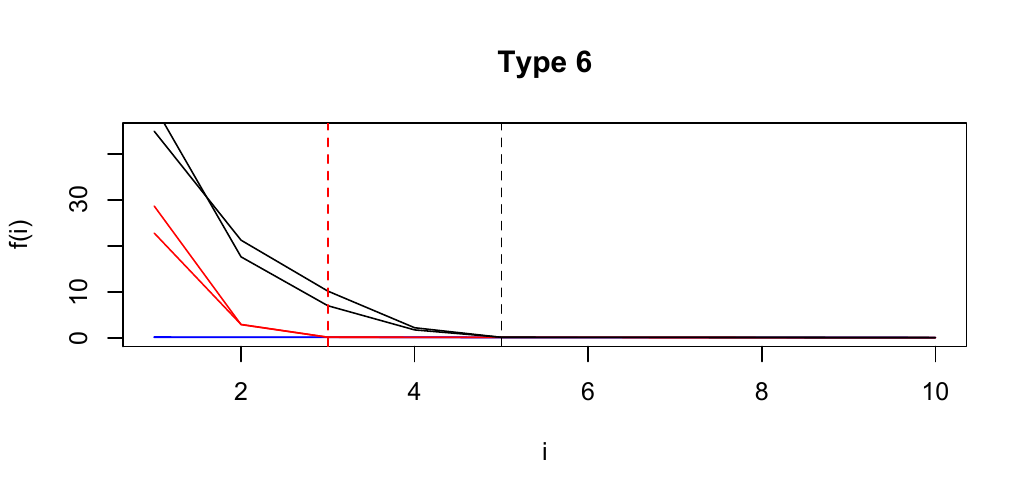} \\
\includegraphics[scale=0.3]{./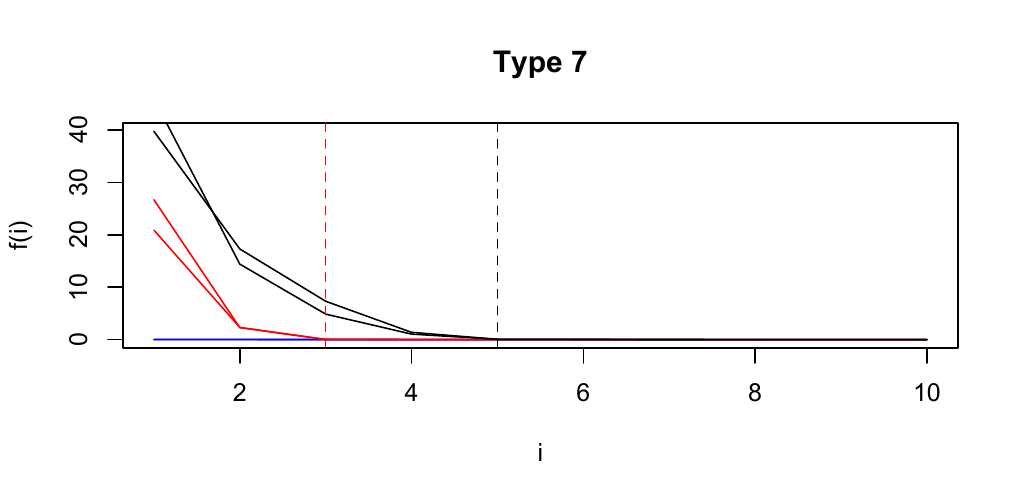} &\includegraphics[scale=0.3]{./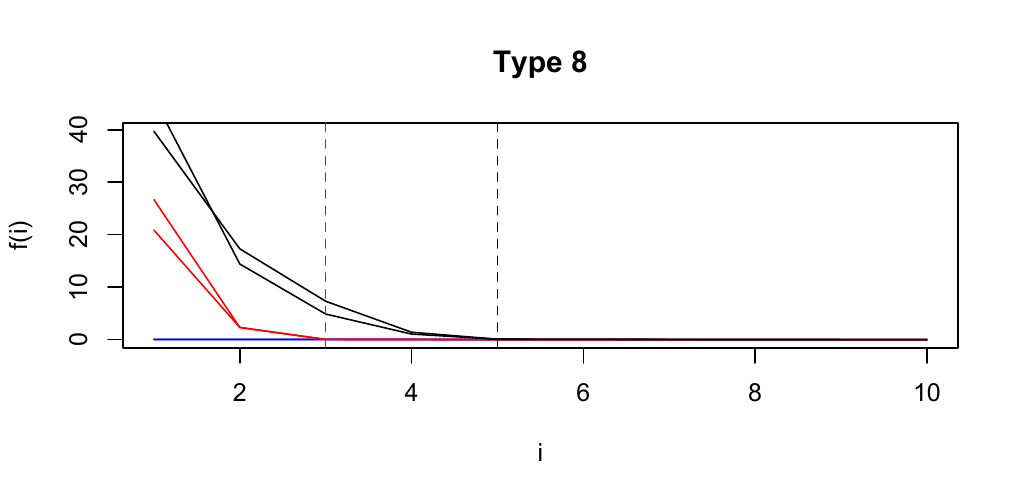} &\includegraphics[scale=0.3]{./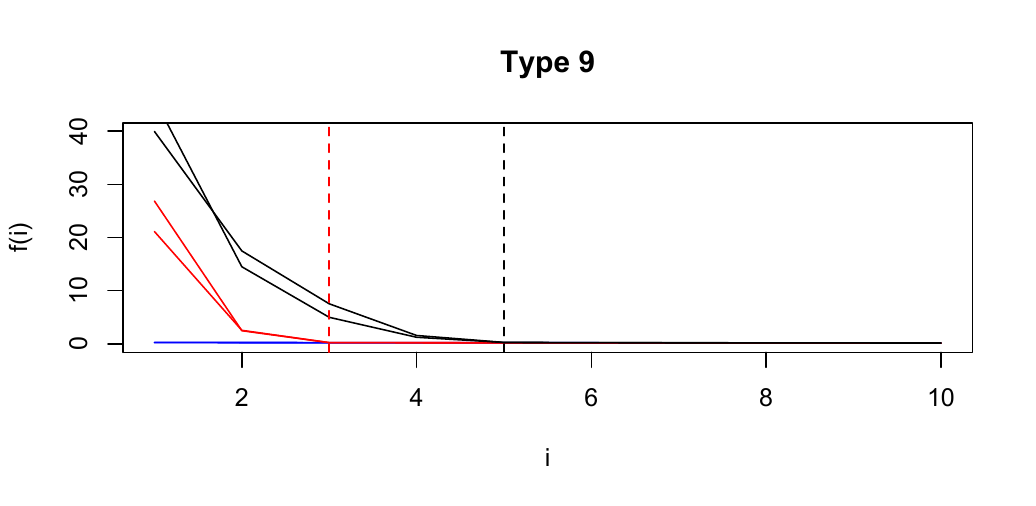} \\
\includegraphics[scale=0.3]{./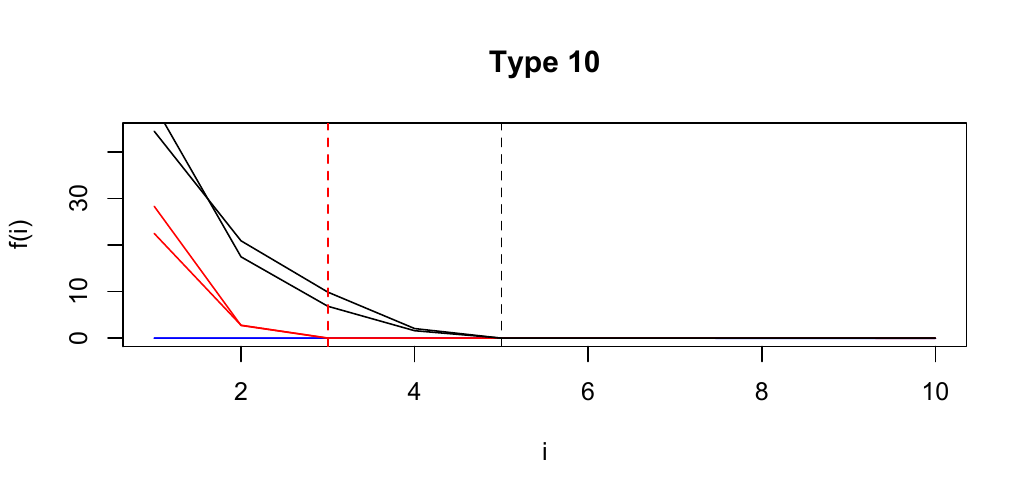} &\includegraphics[scale=0.3]{./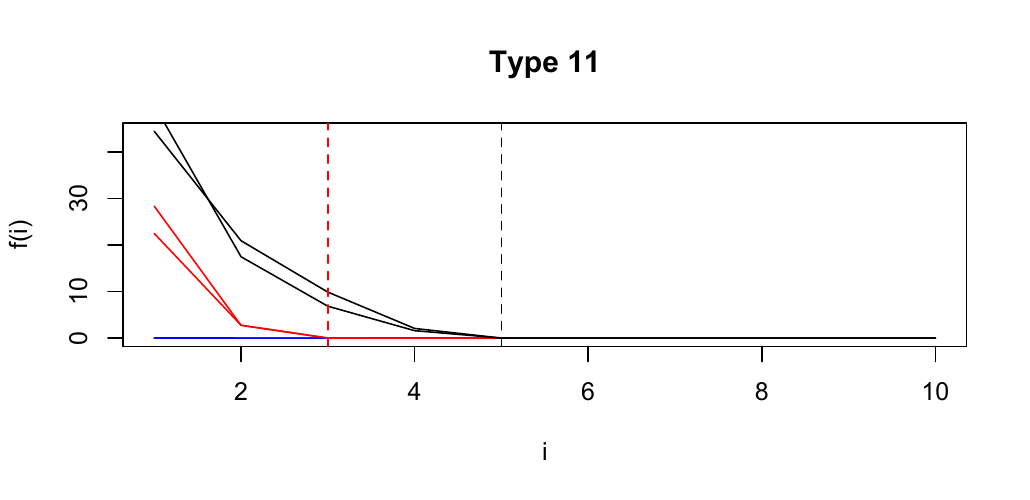} &\includegraphics[scale=0.3]{./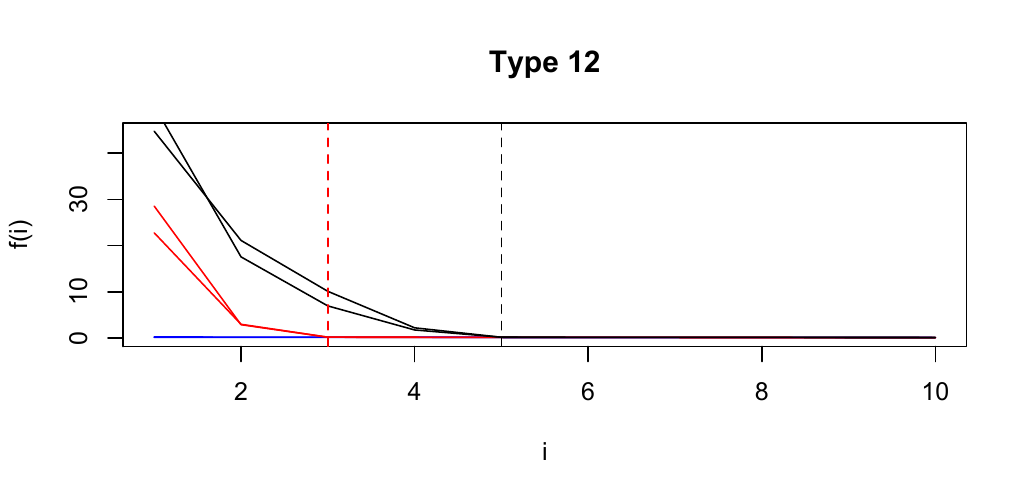} \\
\end{tabular}
\caption{Plots of the function $f(\cdot)$ (defined in Section \ref{sec:optimisation}) normalised by $\|P^K\circ R^K_n \|_F^2$ for a given type of contamination. The curves in black correspond to a setting with $r=5$, those in red to a setting with $r=3$ and those in blue to a setting with $r=1$.}
\label{find_rank_2} 
\end{figure}

\begin{table}
\begin{tabular}{|c|c|c|c|c|}
\hline
 \multicolumn{5}{|c|}{Scenario D}  \\
\hline
Type of Noise & Combination & Err($\hat L^K_n$) & Err($\hat B^K_n$)  & relMISE($\hat Y_n^K$)  \\
\hline
\multirow{6}{*}{1 (none)}
&$(1, 0.05)$&  $0.004$ $(0.004,0.005)$ &$0.109$ $(0.096,0.126)$&$0.191$ $(0.115,0.356)$  \\
&$(1, 0.10)$& $0.003$ $(0.003,0.004)$ &$0.089$ $(0.066,0.117)$&$0.072$ $(0.034,0.177)$  \\
& $(3, 0.05)$&  $0.009$ $(0.008,0.010)$ &$0.122$ $(0.112,0.135)$&$0.008$ $(0.007,0.010)$  \\
& $(3, 0.10)$& $0.007$ $(0.006,0.008)$ &$0.108$ $(0.086,0.132)$&$0.005$ $(0.004,0.007)$  \\
& $(5, 0.05)$& $0.012$ $(0.011,0.013)$ &$0.177$ $(0.149,0.187)$&$0.007$ $(0.006,0.009)$  \\
& $(5, 0.10)$& $0.009$ $(0.008,0.010)$ &$0.178$ $(0.120,0.176)$&$0.004$ $(0.003,0.004)$  \\
\hline
\multirow{6}{*}{2 ($\sigma^2=0.25$)} 
&$(1, 0.05)$&  $0.005$ $(0.004,0.006)$ &$0.142$ $(0.128,0.153)$&$0.290$ $(0.130,0.508)$  \\
&$(1, 0.10)$& $0.004$ $(0.003,0.004)$ &$0.112$ $(0.094,0.135)$&$0.208$ $(0.109,0.378)$  \\
& $(3, 0.05)$&  $0.009$ $(0.008,0.010)$ &$0.155$ $(0.148,0.169)$&$0.013$ $(0.011,0.018)$  \\
& $(3, 0.10)$& $0.008$ $(0.007,0.009)$ &$0.127$ $(0.111,0.151)$&$0.011$ $(0.009,0.013)$  \\
& $(5, 0.05)$& $0.012$ $(0.011,0.013)$ &$0.221$ $(0.199,0.235)$&$0.015$ $(0.013,0.019)$  \\
& $(5, 0.10)$& $0.009$ $(0.008,0.010)$ &$0.181$ $(0.155,0.202)$&$0.011$ $(0.008,0.014)$  \\
\hline
\multirow{6}{*}{3 ($\sigma^2=1$)} 
&$(1, 0.05)$&   $0.008$ $(0.007,0.008)$ &$0.265$ $(0.260,0.273)$&$0.204$ $(0.128,0.346)$ \\
&$(1, 0.10)$& $0.008$ $(0.007,0.008)$ &$0.270$ $(0.263,0.281)$&$0.314$ $(0.155,0.683)$  \\
& $(3, 0.05)$&  $0.014$ $(0.013,0.015)$ &$0.286$ $(0.278,0.295)$&$0.014$ $(0.013,0.018)$  \\
& $(3, 0.10)$& $0.013$ $(0.012,0.014)$ &$0.291$ $(0.278,0.300)$&$0.016$ $(0.013,0.017)$  \\
& $(5, 0.05)$& $0.019$ $(0.017,0.020)$ &$0.355$ $(0.340,0.375)$&$0.018$ $(0.015,0.021)$  \\
& $(5, 0.10)$& $0.016$ $(0.015,0.017)$ &$0.354$ $(0.340,0.369)$&$0.016$ $(0.014,0.019)$  \\
\hline

\multirow{6}{*}{4 (OU)} 
&$(1, 0.05)$&  $0.005$ $(0.004,0.006)$ &$0.184$ $(0.173,0.194)$&$0.301$ $(0.143,0.600)$  \\
&$(1, 0.10)$& $0.004$ $(0.003,0.004)$ &$0.206$ $(0.189,0.224)$&$0.181$ $(0.118,0.518)$  \\
& $(3, 0.05)$&  $0.010$ $(0.009,0.011)$ &$0.193$ $(0.183,0.204)$&$0.017$ $(0.015,0.019)$  \\
& $(3, 0.10)$& $0.008$ $(0.007,0.009)$ &$0.208$ $(0.198,0.223)$&$0.012$ $(0.011,0.016)$  \\
& $(5, 0.05)$& $0.013$ $(0.011,0.014)$ &$0.227$ $(0.213,0.246)$&$0.023$ $(0.019,0.027)$  \\
& $(5, 0.10)$& $0.009$ $(0.008,0.011)$ &$0.241$ $(0.219,0.260)$&$0.012$ $(0.010,0.015)$  \\
\hline
\multirow{6}{*}{5 (OU + $\sigma^2=0.25$)} 
&$(1, 0.05)$&  $0.005$ $(0.005,0.006)$ &$0.210$ $(0.198,0.219)$&$0.219$ $(0.149,0.455)$  \\
&$(1, 0.10)$& $0.004$ $(0.004,0.005)$ &$0.196$ $(0.184,0.218)$&$0.178$ $(0.093,0.334)$  \\
& $(3, 0.05)$&  $0.010$ $(0.009,0.011)$ &$0.221$ $(0.210,0.229)$&$0.011$ $(0.010,0.014)$  \\
& $(3, 0.10)$& $0.008$ $(0.007,0.009)$ &$0.204$ $(0.195,0.220)$&$0.009$ $(0.007,0.010)$  \\
& $(5, 0.05)$& $0.013$ $(0.012,0.014)$ &$0.268$ $(0.254,0.284)$&$0.011$ $(0.010,0.014)$  \\
& $(5, 0.10)$& $0.010$ $(0.009,0.011)$ &$0.244$ $(0.223,0.259)$&$0.008$ $(0.006,0.009)$  \\
\hline
\multirow{6}{*}{6 (OU + $\sigma^2=1$)} 
&$(1, 0.05)$&   $0.007$ $(0.008,0.009)$ &$0.156$ $(0.152,0.169)$&$0.242$ $(0.126,0.505)$ \\
&$(1, 0.10)$& $0.008$ $(0.007,0.008)$ &$0.317$ $(0.307,0.329)$&$0.293$ $(0.100,0.584)$  \\
& $(3, 0.05)$&  $0.014$ $(0.013,0.015)$ &$0.326$ $(0.317,0.333)$&$0.013$ $(0.011,0.015)$  \\
& $(3, 0.10)$& $0.013$ $(0.012,0.014)$ &$0.329$ $(0.320,0.335)$&$0.013$ $(0.012,0.015)$  \\
& $(5, 0.05)$& $0.019$ $(0.018,0.020)$ &$0.390$ $(0.375,0.410)$&$0.015$ $(0.013,0.018)$  \\
& $(5, 0.10)$& $0.017$ $(0.016,0.018)$ &$0.394$ $(0.380,0.407)$&$0.014$ $(0.012,0.016)$  \\
\hline
\end{tabular}
\caption{Table containing the median (the first and third quartiles are in parentheses) of the normalised errors of $\hat L^K_n$ (column 1), of $\hat B^K_n$ (column 2) and of the approximation of the normalised mean integrated squared errors of $\hat Y_n^K$ (column3) for the type of contamination $1$ to $6$.}
\label{table_ref3}
\end{table}

\begin{table}
\begin{tabular}{|c|c|c|c|c|}
\hline
 \multicolumn{5}{|c|}{Scenario D}  \\
\hline
Type of Noise & Combination & Err($\hat L^K_n$) & Err($\hat B^K_n$)  & relMISE($\hat Y_n^K$)  \\
\hline
\multirow{6}{*}{7 (SHF)} 
&$(1, 0.05)$&   $0.008$ $(0.007,0.008)$ &$0.205$ $(0.193,0.215)$&$0.172$ $(0.098,0.371)$  \\
&$(1, 0.10)$& $0.007$ $(0.006,0.008)$ &$0.231$ $(0.220,0.250)$&$0.093$ $(0.040,0.258)$  \\
& $(3, 0.05)$&  $0.031$ $(0.029,0.033)$ &$0.498$ $(0.485,0.517)$&$0.019$ $(0.017,0.023)$  \\
& $(3, 0.10)$& $0.031$ $(0.029,0.033)$ &$0.614$ $(0.595,0.632)$&$0.025$ $(0.020,0.029)$  \\
& $(5, 0.05)$& $0.087$ $(0.084,0.092)$ &$1.391$ $(1.352,1.462)$&$0.080$ $(0.069,0.094)$  \\
& $(5, 0.10)$& $0.085$ $(0.082,0.092)$ &$1.748$ $(1.684,1.802)$&$0.133$ $(0.116,0.155)$  \\
\hline
\multirow{6}{*}{8 (SHF + $\sigma^2=0.25$)} 
&$(1, 0.05)$&  $0.008$ $(0.007,0.009)$ &$0.252$ $(0.242,0.268)$&$0.295$ $(0.112,0.683)$ \\
&$(1, 0.10)$& $0.007$ $(0.006,0.008)$ &$0.247$ $(0.235,0.268)$&$0.192$ $(0.093,0.329)$  \\
& $(3, 0.05)$&  $0.031$ $(0.030,0.033)$ &$0.608$ $(0.594,0.628)$&$0.022$ $(0.019,0.030)$ \\
& $(3, 0.10)$& $0.031$ $(0.029,0.033)$ &$0.651$ $(0.626,0.664)$&$0.023$ $(0.018,0.025)$  \\
& $(5, 0.05)$& $0.088$ $(0.084,0.091)$ &$1.700$ $(1.651,1.782)$&$0.097$ $(0.082,0.117)$  \\
& $(5, 0.10)$& $0.086$ $(0.082,0.091)$ &$1.841$ $(1.779,1.896)$&$0.093$ $(0.084,0.107)$  \\
\hline
\multirow{6}{*}{9 (SHF + $\sigma^2=1$)} 
&$(1, 0.05)$&  $0.010$ $(0.009,0.011)$ &$0.339$ $(0.327,0.349)$&$0.191$ $(0.115,0.361)$  \\
&$(1, 0.10)$& $0.010$ $(0.009,0.010)$ &$0.348$ $(0.339,0.365)$&$0.319$ $(0.149,0.665)$  \\
& $(3, 0.05)$&  $0.034$ $(0.031,0.035)$ &$0.656$ $(0.635,0.671)$&$0.020$ $(0.019,0.023)$  \\
& $(3, 0.10)$& $0.032$ $(0.031,0.034)$ &$0.692$ $(0.676,0.714)$&$0.022$ $(0.020,0.024)$  \\
& $(5, 0.05)$& $0.088$ $(0.085,0.093)$ &$1.729$ $(1.677,1.813)$&$0.069$ $(0.057,0.079)$  \\
& $(5, 0.10)$& $0.087$ $(0.083,0.092)$ &$1.873$ $(1.809,1.916)$&$0.070$ $(0.060,0.084)$  \\
\hline
\multirow{6}{*}{10 (RHF)} 
&$(1, 0.05)$&  $0.005$ $(0.004,0.005)$ &$0.205$ $(0.198,0.214)$&$0.206$ $(0.129,0.417)$  \\
&$(1, 0.10)$& $0.004$ $(0.003,0.004)$ &$0.236$ $(0.231,0.252)$&$0.078$ $(0.034,0.270)$  \\
& $(3, 0.05)$&  $0.009$ $(0.008,0.010)$ &$0.215$ $(0.208,0.223)$&$0.010$ $(0.008,0.012)$  \\
& $(3, 0.10)$& $0.007$ $(0.006,0.008)$ &$0.246$ $(0.238,0.257)$&$0.006$ $(0.005,0.008)$  \\
& $(5, 0.05)$& $0.012$ $(0.011,0.013)$ &$0.253$ $(0.238,0.264)$&$0.012$ $(0.010,0.013)$  \\
& $(5, 0.10)$& $0.009$ $(0.008,0.010)$ &$0.276$ $(0.261,0.288)$&$0.004$ $(0.003,0.006)$  \\
\hline
\multirow{6}{*}{11 (RHF + $\sigma^2=0.25$)} 
&$(1, 0.05)$&  $0.005$ $(0.004,0.006)$ &$0.244$ $(0.235,0.255)$&$0.281$ $(0.114,0.624)$  \\
&$(1, 0.10)$& $0.004$ $(0.003,0.004)$ &$0.243$ $(0.235,0.257)$&$0.151$ $(0.073,0.290)$  \\
& $(3, 0.05)$&  $0.010$ $(0.009,0.010)$ &$0.255$ $(0.245,0.261)$&$0.012$ $(0.009,0.015)$  \\
& $(3, 0.10)$& $0.008$ $(0.007,0.009)$ &$0.250$ $(0.244,0.264)$&$0.011$ $(0.009,0.013)$  \\
& $(5, 0.05)$& $0.013$ $(0.012,0.014)$ &$0.296$ $(0.282,0.312)$&$0.013$ $(0.011,0.016)$  \\
& $(5, 0.10)$& $0.010$ $(0.009,0.011)$ &$0.280$ $(0.266,0.297)$&$0.010$ $(0.008,0.013)$  \\
\hline
\multirow{6}{*}{12 (RHF + $\sigma^2=1$)} 
&$(1, 0.05)$&    $0.008$ $(0.007,0.008)$ &$0.330$ $(0.325,0.340)$&$0.160$ $(0.089,0.282)$  \\
&$(1, 0.10)$& $0.008$ $(0.007,0.008)$ &$0.346$ $(0.342,0.355)$&$0.310$ $(0.135,0.574)$  \\
& $(3, 0.05)$&  $0.014$ $(0.013,0.015)$ &$0.351$ $(0.341,0.358)$&$0.013$ $(0.011,0.014)$  \\
& $(3, 0.10)$& $0.013$ $(0.012,0.014)$ &$0.362$ $(0.352,0.372)$&$0.014$ $(0.012,0.015)$  \\
& $(5, 0.05)$& $0.019$ $(0.018,0.020)$ &$0.405$ $(0.398,0.429)$&$0.016$ $(0.013,0.019)$  \\
& $(5, 0.10)$& $0.016$ $(0.015,0.018)$ &$0.418$ $(0.402,0.430)$&$0.014$ $(0.012,0.017)$  \\
\hline
\end{tabular}
\caption{Table containing the median (the first and third quartiles are in parentheses) of the normalised errors of $\hat L^K_n$ (column 1), of $\hat B^K_n$ (column 2) and of the approximation of the normalised mean integrated squared errors of $\hat Y_n^K$ (column3) for the type of contamination $7$ to $12$.}
\label{table_ref3_bis}
\end{table}

\begin{table}
\begin{tabular}{|c|c|c|c|c|}
\hline
 \multicolumn{5}{|c|}{Scenario D}  \\
\hline
Type of Noise & Combination & PACE & KL  & RS  \\
\hline
\multirow{6}{*}{7 (SHF)} 
&$(1, 0.05)$&   $9.27$ $(8.74,9.72)$ &$9.00$ $(8.43,9.46)$&$8.49$ $(8.19,9.14)$  \\
&$(1, 0.10)$& $12.8$ $(11.6,14.3)$ &$12.5$ $(11.3,13.9)$&$10.6$ $(9.30,11.3)$  \\
& $(3, 0.05)$&  $2.52$ $(2.42,2.67)$ &$2.46$ $(2.36,2.64)$&$2.38$ $(2.31,2.50)$  \\
& $(3, 0.10)$& $2.98$ $(2.74,3.13)$ &$2.86$ $(2.68,3.06)$&$2.56$ $(3.37,2.68)$  \\
& $(5, 0.05)$& $0.95$ $(0.93,1.00)$ &$0.90$ $(0.86,0.96)$&$0.93$ $(0.90,0.99)$  \\
& $(5, 0.10)$& $1.10$ $(1.05,1.16)$ &$1.03$ $(0.98,1.09)$&$0.98$ $(0.95,1.03)$  \\
\hline
\multirow{6}{*}{8 (SHF + $\sigma^2=0.25$)} 
&$(1, 0.05)$&   $9.25$ $(8.86,9.96)$ &$9.04$ $(8.58,9.67)$&$9.02$ $(8.74,9.62)$  \\
&$(1, 0.10)$& $11.5$ $(10.6,12.9)$ &$11.1$ $(10.2,12.4)$&$10.9$ $(9.96,11.9)$  \\
& $(3, 0.05)$&  $2.58$ $(2.51,2.73)$ &$2.53$ $(2.44,2.68)$&$2.56$ $(2.44,2.67)$  \\
& $(3, 0.10)$& $2.81$ $(2.63,2.96)$ &$2.72$ $(2.55,2.86)$&$2.71$ $(2.52,2.85)$  \\
& $(5, 0.05)$& $0.97$ $(0.95,1.03)$ &$0.93$ $(0.89,0.98)$&$0.99$ $(0.94,1.03)$  \\
& $(5, 0.10)$& $1.05$ $(0.99,1.08)$ &$0.96$ $(0.89,1.01)$&$1.02$ $(0.98,1.06)$  \\
\hline
\multirow{6}{*}{9 (SHF + $\sigma^2=1$)} 
&$(1, 0.05)$&   $8.11$ $(7.40,8.46)$ &$7.93$ $(7.16,8.30)$&$8.91$ $(8.49,9.43)$  \\
&$(1, 0.10)$& $9.10$ $(8.55,10.2)$ &$8.85$ $(8.27,9.81)$&$9.76$ $(9.06,10.6)$  \\
& $(3, 0.05)$&  $2.68$ $(2.50,2.84)$ &$2.65$ $(2.44,2.83)$&$2.88$ $(2.72,2.99)$  \\
& $(3, 0.10)$& $2.82$ $(2.65,3.05)$ &$2.73$ $(2.56,3.00)$&$3.01$ $(2.86,3.20)$  \\
& $(5, 0.05)$& $1.04$ $(1.01,1.11)$ &$0.98$ $(0.96,1.08)$&$1.10$ $(1.06,1.18)$  \\
& $(5, 0.10)$& $1.11$ $(1.05,1.16)$ &$1.04$ $(0.99,1.08)$&$1.16$ $(1.11,1.21)$  \\
\hline
\end{tabular}
\caption{Table containing the median (the first and third quartiles are in parentheses) of the performance ratios for the three methods we compared our method with for different types of contamination ($7$ to $9$).}
\label{table_ref3_compa}
\end{table}

\newpage

\begin{center}
\textbf{Acknowledgements}
\end{center}
We gratefully acknowledge support from Swiss National Science Foundation. Victor M. Panaretos wishes to thank Prof. Hans-Georg M\"uller for several stimulating discussions  and Prof. Tilmann Gneiting for his helpful comments. We are grateful to Prof. Anirvan Chakraborty for several useful comments and suggestions.

\bibliographystyle{imsart-nameyear}
\bibliography{biblio}

\end{document}